\definecolor{tableshade}{rgb}{0.85,0.85,0.85}
\theoremstyle{plain}
\newtheorem{theorem}{Theorem}[section]
\newtheorem{proposition}[theorem]{Proposition}
\newtheorem{lemma}[theorem]{Lemma}
\newtheorem{corollary}[theorem]{Corollary}
\theoremstyle{definition}
\newtheorem{definition}[theorem]{Definition}
\newtheorem{notation}[theorem]{Notation}
\newtheorem{procedure}[theorem]{Procedure}
\theoremstyle{remark}
\newtheorem{remark}[theorem]{Remark}
\newtheorem{example}[theorem]{Example}
\renewcommand{\epsilon}{\varepsilon}
\renewcommand{\phi}{\varphi}
\DeclareMathOperator{\abs}{abs}
\DeclareMathOperator{\ADF}{ADF}
\DeclareMathOperator{\As}{As}
\DeclareMathOperator{\Con}{Con}
\DeclareMathOperator{\gelo}{GELO}
\DeclareMathOperator{\Isom}{Isom}
\DeclareMathOperator{\mono}{mono}
\DeclareMathOperator{\Part}{Part}
\DeclareMathOperator{\Res}{Res}
\DeclareMathOperator{\Sat}{Sat}
\DeclareMathOperator{\Seq}{Seq}
\DeclareMathOperator{\Sols}{Sols}
\DeclareMathOperator{\ssac}{SSAC}
\DeclareMathOperator{\Stab}{Stab}
\DeclareMathOperator{\Wr}{Wr}
\newcommand{\mso}{[\kern-1.75pt [}
\newcommand{\msc}{]\kern-1.75pt ]}
\newcommand{\ms}[1]{\mso #1 \msc}
\newcommand{\C}{{\mathbb C}}
\newcommand{\E}{{\mathbb E}}
\newcommand{\N}{{\mathbb N}}
\newcommand{\Q}{{\mathbb Q}}
\newcommand{\Z}{{\mathbb Z}}
\newcommand{\charfunc}{{\mathbbm 1}}
\newcommand{\cP}{{\mathcal P}}
\newcommand{\cQ}{{\mathcal Q}}
\newcommand{\cW}{{\mathcal W}}
\newcommand{\cWp}{\cW^{(p)}}
\newcommand{\cWtwo}{\cW^{(2)}}
\newcommand{\cWthree}{\cW^{(3)}}
\newcommand{\fC}{{\mathfrak C}}
\newcommand{\fP}{{\mathfrak P}}
\newcommand{\card}[1]{\left|{#1}\right|}
\newcommand{\conj}[1]{\overline{#1}}
\newcommand{\expv}[1]{{\mathbf E}_{#1}}
\newcommand{\ev}{\E_f^\ell}
\newcommand{\mom}[1]{\mu_{#1,f}^\ell}
\newcommand{\smom}[1]{\tilde{\mu}_{#1,f}^\ell}
\newcommand{\lindexset}{{[2]\times[2]}}
\newcommand{\indexset}{{[p]\times\lindexset}}
\newcommand{\twoindexset}{{[2] \times\lindexset}}
\newcommand{\Eindexset}{{E\times\lindexset}}
\newcommand{\Findexset}{{F\times\lindexset}}
\newcommand{\eindexset}{{\{e\}\times\lindexset}}
\newcommand{\sums}[1]{\sum_{\substack{#1}}}
\newcommand{\floor}[1]{\left\lfloor{#1}\right\rfloor}
\newcommand{\ceil}[1]{\lceil{#1}\rceil}
\newcommand{\res}[1]{\vert_{#1}}
\title[Moments of Autocorrelation Demerit Factors]{Moments of Autocorrelation Demerit Factors of Binary Sequences}
\author{Daniel J.\ Katz}
\author{Miriam E.\ Ramirez}
\thanks{Daniel J.\ Katz is with the Department of Mathematics, California State University, Northridge.  Miriam E.\ Ramirez was with the Department of Mathematics, California State University, Northridge, USA.  This paper is based upon work of both authors supported in part by the National Science Foundation under Grants 1500856 and 1815487, and by work of Daniel J.\ Katz supported in part by the National Science Foundation under Grant 2206454.}
\date{16 August 2024}
\dedicatory{In memoriam Kai-Uwe Schmidt}
\begin{document}
\begin{abstract}
Sequences with low aperiodic autocorrelation are used in communications and remote sensing for synchronization and ranging.
The autocorrelation demerit factor of a sequence is the sum of the squared magnitudes of its autocorrelation values at every nonzero shift when we normalize the sequence to have unit Euclidean length.
The merit factor, introduced by Golay, is the reciprocal of the demerit factor.
We consider the uniform probability measure on the $2^\ell$ binary sequences of length $\ell$ and investigate the distribution of the demerit factors of these sequences.
Sarwate and Jedwab have respectively calculated the mean and variance of this distribution.
We develop new combinatorial techniques to calculate the $p$th central moment of the demerit factor for binary sequences of length $\ell$.
These techniques prove that for $p\geq 2$ and $\ell \geq 4$, all the central moments are strictly positive.
For any given $p$, one may use the technique to obtain an exact formula for the $p$th central moment of the demerit factor as a function of the length $\ell$.
Jedwab's formula for variance is confirmed by our technique with a short calculation, and we go beyond previous results by also deriving an exact formula for the skewness.
A computer-assisted application of our method also obtains exact formulas for the kurtosis, which we report here, as well as the fifth central moment.
\end{abstract}
\maketitle

\section{Introduction}
A {\it sequence} is a doubly infinite list $f=(\ldots,f_{-1},f_0,f_1,f_2,\ldots)$ of complex numbers in which only finitely many of the terms are nonzero.
We adopt this definition because we are thinking of our sequences aperiodically.
If $\ell$ is a nonnegative integer, then a {\it binary sequence of length $\ell$} is an $f=(\ldots,f_{-1},f_0,f_1,f_2,\ldots)$ in which $f_j\in\{-1,1\}$ for $j \in \{0,1,\ldots,\ell-1\}$ and $f_j=0$ otherwise.
Binary sequences are used to modulate signals in telecommunications and remote sensing \cite{Golomb,Golomb-Gong,Schroeder}.
Some applications, such as ranging, require very accurate timing.
For these applications, it is important that the sequence not resemble any time-delayed version of itself.

Our measure of resemblance is aperiodic autocorrelation.
If $f$ is a sequence and $s \in \Z$, then the {\it aperiodic autocorrelation of $f$ at shift $s$} is
\[
C_f(s) = \sum_{j \in \Z} f_{j+s} \conj{f_j}.
\]
Since $f_k=0$ for all but finitely many $k$, this sum is always defined and is nonzero for only finitely many $s$.
Note that $C_f(0)$ is the squared Euclidean norm of $f$.
For applications, we want $|C_f(s)|$ to be small compared to $C_f(0)$ for every nonzero $s\in\Z$; this distinction is what ensures proper timing.

There are two main measures for evaluating how low the autocorrelation of a sequence $f$ is at nonzero shifts.
One measure is the {\it peak sidelobe level}, which is the maximum of $|C_f(s)|$ over all nonzero $s \in \Z$; this can be regarded as an $l^\infty$ measure.
Another important measure is the {\it demerit factor}, which is an $l^2$ measure of smallness of autocorrelation.
The {\it (autocorrelation) demerit factor} of a nonzero sequence $f$ is
\begin{equation}\label{William}
\ADF(f)=\frac{\sums{s \in \Z \\ s\not=0} |C_f(s)|^2}{C_f(0)^2} = -1 +\frac{\sum_{s \in \Z} |C_f(s)|^2}{C_f(0)^2},
\end{equation}
which is the sum of the squared magnitudes of all autocorrelation values at nonzero shifts for the sequence that one obtains from $f$ by normalizing it to have unit Euclidean norm.
The {\it (autocorrelation) merit factor} is the reciprocal of the autocorrelation demerit factor; it was introduced by Golay in \cite[p.\ 450]{Golay-72} as the ``factor'' for a sequence and then as the ``merit factor'' in \cite[p.\ 460]{Golay-75}, while ``demerit factor'' appears later in the work of Sarwate \cite[p.\ 102]{Sarwate}.

Sequences with low demerit factor (equivalently, high merit factor) are highly desirable for communications and ranging applications.
For each given length $\ell$, we would like to understand the distribution of the demerit factors of binary sequences of length $\ell$, which always have $C_f(0)=\ell$, so the denominator of the last fraction in \eqref{William} is always $\ell^2$.
Thus, it is often convenient to study the numerator of the last fraction in \eqref{William}, which is the sum of the squares of all the autocorrelation values, so we define
\[
\ssac(f)=\sum_{s \in \Z} |C_f(s)|^2,
\]
so that for a binary sequence of length $\ell$ we have
\begin{equation}\label{Herman}
\ADF(f)=-1+\frac{\ssac(f)}{\ell^2}.
\end{equation}

For this entire paper, $\Seq(\ell)$ denotes the set of $2^\ell$ binary sequences of length $\ell$ with the uniform probability distribution, and the expected value of a random variable $v$ with respect to this distribution is denoted by $\ev v(f)=\expv{f \in \Seq(\ell)} (v(f))$.
The $p$th central moment of the random variable $v(f)$ as $f$ ranges over the binary sequences of length $\ell$ is denoted
\begin{equation}\label{Monique}
\mom{p} v(f)=\ev \left(v(f)-\ev v(f)\right)^p,
\end{equation}
and the $p$th standardized moment is denoted by
\[
\smom{p} v(f)=\frac{\mom{p} v(f)}{\left(\mom{2} v(f)\right)^{p/2}}.
\]
Sarwate \cite[eq.\ (13)]{Sarwate} found the mean of the demerit factor for binary sequences of a given length.
\begin{theorem}[Sarwate, 1984]\label{Sheila}
If $\ell$ is a positive integer, then $\ev \ADF(f)=1-1/\ell$.
\end{theorem}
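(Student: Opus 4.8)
The plan is to pass from $\ADF$ to $\ssac$ and then expand. By \eqref{Herman}, every $f\in\Seq(\ell)$ satisfies $\ADF(f)=-1+\ssac(f)/\ell^2$, so by linearity of expectation it suffices to compute $\ev\ssac(f)$; once we establish $\ev\ssac(f)=2\ell^2-\ell$, the claim $\ev\ADF(f)=-1+(2\ell^2-\ell)/\ell^2=1-1/\ell$ follows at once.

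To find $\ev\ssac(f)$, I would first use that a binary sequence of length $\ell$ is real, so $\conj{f_j}=f_j$, $C_f(s)\in\Z$, $C_f(-s)=C_f(s)$, and $C_f(s)=0$ for $|s|\ge\ell$; separating the shift-zero term $C_f(0)^2=\ell^2$ gives $\ssac(f)=\ell^2+2\sum_{s=1}^{\ell-1}C_f(s)^2$. For a fixed shift $s$ with $1\le s\le\ell-1$, expand $C_f(s)^2=\sum_{j,k=0}^{\ell-1-s}f_{j+s}f_j f_{k+s}f_k$ and take the expectation term by term, using that $f_0,\dots,f_{\ell-1}$ are independent and uniform on $\{-1,1\}$: the expectation of a monomial $f_{j+s}f_j f_{k+s}f_k$ is $1$ when every index appearing in it occurs with even multiplicity and is $0$ otherwise. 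The one point that is not pure bookkeeping is the combinatorial observation that, for $s\ge1$, the even-multiplicity condition is equivalent to $j=k$: the index $j+s$ cannot equal $j$ or $k$ since $s\ge1$, and it cannot happen that $j+s=k$ and $j=k+s$ simultaneously (that would force $s=-s$), so $j+s$ must pair with $k+s$, i.e.\ $j=k$; and $j=k$ always contributes, since then the monomial is $f_{j+s}^2 f_j^2=1$.

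Granting this, $\ev C_f(s)^2$ is just the number of admissible $j$, namely $\ell-s$, so
\[
\ev\ssac(f)=\ell^2+2\sum_{s=1}^{\ell-1}(\ell-s)=\ell^2+2\cdot\frac{\ell(\ell-1)}{2}=2\ell^2-\ell,
\]
which finishes the argument. I do not expect a real obstacle; the only places needing care are the index bookkeeping that rules out the ``cross'' matching $j+s=k$, $j=k+s$, and the degenerate case $\ell=1$ (where the sum is empty and $\ev\ADF(f)=0$). This computation is essentially the warm-up version of the general method developed later: for higher central moments one expands a product of several $C_f(s_i)$ values, classifies the ways the resulting indices can coincide, and tracks which shift patterns survive the expectation.
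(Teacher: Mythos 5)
Your computation is correct: the reduction via \eqref{Herman} to $\ev\ssac(f)$, the even-multiplicity criterion for $\ev(f_{j+s}f_jf_{k+s}f_k)$, the elimination of the cross-matching $j+s=k$, $j=k+s$ for $s\geq 1$, and the final sum $\ev\ssac(f)=\ell^2+2\sum_{s=1}^{\ell-1}(\ell-s)=2\ell^2-\ell$ are all sound, and the degenerate case $\ell=1$ is handled. Note, however, that the paper does not prove \cref{Sheila} at all; it is quoted from Sarwate, and the paper's own machinery (\cref{Sanri} and its refinements) is built only for \emph{central} moments, for which the $p=1$ case is trivially zero since $\Con(1)=\emptyset$. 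So your argument is not ``the paper's proof rearranged'' but the standard direct calculation, essentially Sarwate's. It is worth observing that your proof is the raw-moment analogue of the paper's framework: writing $\ssac(f)=\sum_{\tau\in\As(\{0\},=,\ell)}f^\tau$ and applying the even-multiplicity principle (\cref{Kirkland}) shows that $\ev\ssac(f)$ counts the solutions of $a+b=c+d$ in $[\ell]$ whose value pattern is an even partition, namely the $\ell$ solutions with $a=b=c=d$ plus the $2\ell(\ell-1)$ solutions of the two cross-paired types (the pairing $a=b$, $c=d$ with $a\neq c$ being unsatisfiable), giving the same total $2\ell^2-\ell$; your shift-by-shift expansion organizes exactly the same cancellation, just indexed by $s$ instead of by partitions.
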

Borwein and Lockhart \cite[pp.\ 1469--1470]{Borwein-Lockhart} showed that the variance of the demerit factor for binary sequences of length $\ell$ tends to $0$ as $\ell$ tends to infinity.
Jedwab \cite[Theorem 1]{Jedwab} gives an exact formula for the variance of the demerit factor for binary sequences of length $\ell$.
We present a formula involving a quasi-polynomial divided by the fourth power of the length that is equivalent to Jedwab's formula for the variance.
\begin{theorem}[Jedwab, 2019]\label{Jessica}
If $\ell$ is a positive integer, then 
\[
\mom{2} \ADF(f)=\begin{cases}
\frac{16\ell^3-60\ell^2+56\ell}{3\ell^4} & \text{if $\ell$ is even,} \\[4pt]
\frac{16\ell^3-60\ell^2+56\ell -12}{3\ell^4} & \text{if $\ell$ is odd.} 
\end{cases} 
\]
\end{theorem}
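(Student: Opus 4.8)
The plan is to pass from the demerit factor to $\ssac$ and then to an explicit combinatorial count. By \eqref{Herman} the map $f\mapsto\ADF(f)$ is affine in $\ssac(f)$, so $\mom{2}\ADF(f)=\ell^{-4}\bigl(\ev\ssac(f)^2-(\ev\ssac(f))^2\bigr)$. Expanding $C_f(s)^2=\sum_{j,k}f_jf_{j+s}f_kf_{k+s}$ and reindexing by $a=j$, $b=j+s$, $c=k$, $d=k+s$ (so that $s=b-a=d-c$, i.e. $a+d=b+c$), one obtains the working identity
\[
\ssac(f)=\sums{0\le a,b,c,d\le \ell-1 \\ a+d=b+c} f_a f_b f_c f_d .
\]
Since each $f_j\in\{-1,1\}$, the monomial $f_af_bf_cf_d$ equals $1$ exactly when every index value occurs with even multiplicity, and under the constraint $a+d=b+c$ one checks that this forces one of the patterns $a=b=c=d$, or $a=b\ne c=d$, or $a=c\ne b=d$; these "trivial" quadruples contribute the constant $2\ell^2-\ell$ identically. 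Hence $\ssac(f)=(2\ell^2-\ell)+R(f)$ with $R(f)$ the sum over the remaining quadruples, $\ev R(f)=0$ (recovering $\ev\ssac(f)=2\ell^2-\ell$, consistent with Theorem~\ref{Sheila}), and $\mom{2}\ADF(f)=\ell^{-4}\,\ev R(f)^2$.

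The key structural point is that, again under $a+d=b+c$, the only surviving coincidence patterns are $a=d$ (which forces $2a=b+c$ and $b\ne c$), $b=c$ (which forces $a+d=2b$ and $a\ne d$), and "all four indices distinct": every other pattern is incompatible with the linear constraint. Writing $g_\epsilon=\sum_{0\le j\le \ell-1,\ j\equiv\epsilon\bmod 2}f_j$ for $\epsilon\in\{0,1\}$, the $a=d$ family and the $b=c$ family each sum to $g_0^2+g_1^2-\ell$, so
\[
R(f)=2\bigl(g_0^2+g_1^2-\ell\bigr)+T(f),\qquad
T(f)=\sums{0\le a,b,c,d\le \ell-1 \\ a+d=b+c,\ a,b,c,d\ \text{distinct}} f_a f_b f_c f_d .
\]
I would then compute $\ev R(f)^2=4\,\ev\bigl(g_0^2+g_1^2-\ell\bigr)^2+4\,\ev\bigl[(g_0^2+g_1^2-\ell)T(f)\bigr]+\ev T(f)^2$ term by term. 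The cross term vanishes: a monomial of $(g_0^2+g_1^2-\ell)T(f)$ has six index slots of which four carry distinct values, so some value has odd multiplicity. For the first term, $g_0$ and $g_1$ are independent sums of $\lceil\ell/2\rceil$ and $\lfloor\ell/2\rfloor$ independent signs, so $\ev(g_0^2+g_1^2-\ell)^2=\operatorname{Var}(g_0^2)+\operatorname{Var}(g_1^2)$, which the fourth–moment identity $\ev(\sum_{i=1}^n\epsilon_i)^4=3n^2-2n$ turns into a quasi-polynomial in $\ell$ (the parity of $\ell$ enters through $\lceil\ell/2\rceil^2+\lfloor\ell/2\rfloor^2$).

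For $\ev T(f)^2$, since each quadruple in $T$ has distinct entries, a product monomial has expectation $1$ precisely when the two underlying $4$-element index sets coincide; and for a fixed set $\{w_1<w_2<w_3<w_4\}\subseteq\{0,\dots,\ell-1\}$ the equation $a+d=b+c$ has a solution using all of $w_1,\dots,w_4$ only when $w_1+w_4=w_2+w_3$, in which case there are exactly $8$ ordered solutions. Hence $\ev T(f)^2=64\cdot\#\{\text{$4$-subsets of }\{0,\dots,\ell-1\}\text{ with }w_1+w_4=w_2+w_3\}$, and counting these "balanced" subsets — e.g. by summing $\binom{P(S)}{2}$ over the common pair-sum $S$, where $P(S)$ is the number of pairs in $\{0,\dots,\ell-1\}$ summing to $S$ — yields a cubic quasi-polynomial in $\ell$. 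Adding the three contributions and dividing by $\ell^4$ produces the claimed formula, which can be checked against small lengths such as $\ell=3,4,5$. The main obstacle is organizational rather than conceptual: one must verify carefully that the linear relation $a+d=b+c$ really does annihilate every "mixed" coincidence pattern, so that the a priori eight-index sum collapses to the three clean pieces above, and then carry out the balanced-$4$-subset count with its dependence on $\ell\bmod 2$.
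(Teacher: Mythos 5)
Your proposal is correct, and it reaches Jedwab's formula by a genuinely different route from the paper. The paper proves this statement (via \cref{Julie}) by specializing its general machinery to $p=2$: \cref{Sanria} gives $\mom{2}\ssac(f)=\sum_{\fP\in\Isom(2)}|\fP|\Sols(\fP,\ell)$, the two isomorphism classes are found in \cref{Isidore}, their common size $2^3$ comes from an orbit--stabilizer computation (\cref{Carl}), and the solution counts come from \cref{Solomon}. You instead expand the variance directly: after removing the constant $2\ell^2-\ell$ coming from the identically-$1$ monomials, your case analysis of coincidence patterns compatible with $a+d=b+c$ is exactly right (every one-pair pattern other than $a=d$ or $b=c$, and every three-equal pattern, is killed by the linear relation), each of the $a=d$ and $b=c$ families does sum to $g_0^2+g_1^2-\ell$, the cross term vanishes by the parity argument you give, and $\ev T(f)^2=64N$ with $N$ the number of balanced $4$-subsets is correct (only the split $\{w_1,w_4\},\{w_2,w_3\}$ can balance, and it yields exactly $8$ ordered quadruples). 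In fact the two computations meet at the same counts: your first term $4\,\ev\bigl(g_0^2+g_1^2-\ell\bigr)^2=8\bigl[n_0(n_0-1)+n_1(n_1-1)\bigr]$ with $n_0=\ceil{\ell/2}$, $n_1=\floor{\ell/2}$ equals the paper's $8\Sols(\fC_1,\ell)=8\floor{(\ell-1)^2/2}$, and $64N$ equals $8\Sols(\fC_2,\ell)$, the count of distinct ordered solutions of $A+B=C+D$ from \cref{wowzers}\ref{wowzers-two}; adding them gives $(16\ell^3-60\ell^2+56\ell)/3$, resp.\ $(16\ell^3-60\ell^2+56\ell-12)/3$, so the remaining arithmetic in your sketch is routine and comes out as claimed. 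What your elementary route buys is a short, self-contained variance proof in the spirit of the direct expectation computations of Sarwate, Aupetit et al., and Jedwab, kept clean by the $g_0,g_1$ decomposition and the balanced-subset count; what the paper's route buys is uniformity in $p$: the contributory-partition and group-action formalism is exactly what lets the same method continue to the skewness, kurtosis, and fifth moment, where an ad hoc analysis of coincidence patterns like the one you carried out for $p=2$ would become unmanageable.
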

When one compares the calculation of the variance by Jedwab with that of the mean by Sarwate, one finds the first instance of a general principle: for each $p$, the determination of the $(p+1)$th moment is always considerably more difficult than that of the $p$th moment.
Jedwab follows the method of Aupetit et al.\ \cite{Aupetit}, which involves many multiple summations and is therefore somewhat difficult to execute precisely: Jedwab had to correct the calculation of Aupetit et al.\ to get the right formula.

In this paper, we devise a new combinatorial method for calculating the moments of the distribution of the demerit factor of binary sequences of length $\ell$.
For any given $p$, one may use the technique to obtain an exact formula for the $p$th central moment of the demerit factor as a function of the length $\ell$.
For $p=2$, this entails a short calculation that yields Jedwab's formula for variance.
To demonstrate that one can go further, we also use our formula for $p=3$ to derive an exact formula for the third central moment of $\ssac(f)$ as a quasi-polynomial function of sequence length, from which we determine (in Corollaries \ref{Sarah} and \ref{Shirley}) the third central moment and third standardized moment (skewness) of $\ADF(f)$.
\begin{theorem}\label{Theodore}
If $\ell$ is a positive integer, then
\[
\mom{3} \ADF(f)=
\begin{cases}
\frac{160\ell^4-1296\ell^3+3296\ell^2-2496\ell}{\ell^6} &  \text{if $\ell \equiv 0 \bmod 4$,} \\[4pt]
\frac{160\ell^4-1296\ell^3+3296\ell^2-2736\ell+576}{\ell^6} &  \text{if $\ell \equiv \pm 1 \bmod 4$,} \\[4pt] 
\frac{160\ell^4-1296\ell^3+3296\ell^2-2496\ell-384}{\ell^6} &  \text{if $\ell \equiv 2 \bmod 4$,}
\end{cases}
\]
and
\[
\smom{3} \ADF(f)=
\begin{cases}
\frac{6\sqrt{3}(10\ell^4-81 \ell^3+206 \ell^2-156 \ell)}{(4 \ell^3-15 \ell^2+14 \ell)^{3/2}} &  \text{if $\ell \equiv 0 \bmod 4$,} \\[4pt]
\frac{6\sqrt{3}(10\ell^4-81 \ell^3+206 \ell^2-171 \ell+36)}{(4 \ell^3-15 \ell^2+14 \ell -3)^{3/2}} &  \text{if $\ell \equiv \pm 1 \bmod 4$,} \\[4pt]
\frac{6\sqrt{3}(10\ell^4-81 \ell^3+206 \ell^2-156 \ell-24)}{(4 \ell^3-15 \ell^2+14 \ell)^{3/2}} &  \text{if $\ell \equiv 2 \bmod 4$.}
\end{cases}
\]
\end{theorem}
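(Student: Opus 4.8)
The plan is to reduce both formulas to the single quantity $\mom{3}\ssac(f)$ and then evaluate that by turning a twelve-fold sum into a short list of combinatorial contributions. Since $\ADF(f)=-1+\ssac(f)/\ell^{2}$ by \eqref{Herman}, on $\Seq(\ell)$ the quantity $\ADF(f)$ is an affine image of $\ssac(f)$, so $\ADF(f)-\ev\ADF(f)=(\ssac(f)-\ev\ssac(f))/\ell^{2}$; taking $\ev$ of the cube gives $\mom{3}\ADF(f)=\mom{3}\ssac(f)/\ell^{6}$, and the standardized moment is unchanged, $\smom{3}\ADF(f)=\smom{3}\ssac(f)$. The denominator of the skewness is then handed to us by Jedwab's \Cref{Jessica}: $\mom{2}\ssac(f)=\ell^{4}\mom{2}\ADF(f)$, and since $16\ell^{3}-60\ell^{2}+56\ell=4\ell(\ell-2)(4\ell-7)$ this equals $\tfrac{4}{3}(4\ell^{3}-15\ell^{2}+14\ell)$ for even $\ell$ (and the analogous expression for odd $\ell$), which is exactly the base of the $(4\ell^{3}-15\ell^{2}+14\ell)^{3/2}$ appearing in the statement. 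So everything comes down to an exact quasi-polynomial formula for $\mom{3}\ssac(f)$.

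To get at $\mom{3}\ssac(f)$ I would first expand $\ssac(f)$ as a degree-four form in $f_{0},\dots,f_{\ell-1}$. From $C_f(s)=\sum_{j}f_{j+s}\conj{f_j}$, the reality of binary sequences, and the substitution $a=j+s$, $b=j$, $c=k+s$, $d=k$, one gets
\[
\ssac(f)=\sums{0\le a,b,c,d\le \ell-1 \\ a+d=b+c}f_{a}f_{b}f_{c}f_{d},
\]
a sum of degree-four monomials indexed by the ``additive quadruples'' in $\{0,\dots,\ell-1\}$. Cubing and using that the $f_{i}$ are independent uniform $\pm1$ random variables, $\ev$ of the resulting degree-twelve monomial is $1$ if every index value occurs with even multiplicity among the twelve slots and $0$ otherwise, so $\ev(\ssac(f)^{3})$ is a count of triples of additive quadruples with this even-multiplicity property; running the same argument with one and two quadruples recovers $\mu:=\ev\ssac(f)=2\ell^{2}-\ell$ (reproving \Cref{Sheila}) and $\ev(\ssac(f)^{2})$, whence $\mom{3}\ssac(f)=\ev(\ssac(f)^{3})-3\mu\,\ev(\ssac(f)^{2})+2\mu^{3}$. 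The useful observation is that in this signed combination every triple whose three quadruples split into two value-disjoint groups cancels, so that $\mom{3}\ssac(f)$ is a sum of signed weights (joint third cumulants of the three monomials) over the \emph{connected} triples only --- those in which no proper nonempty sub-collection of the quadruples is value-disjoint from the rest.

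The heart of the computation is to evaluate this sum. I would stratify the connected triples by the set partition of the twelve slots recording which slots carry equal index values --- only partitions into even blocks matter, and with distinct blocks assigned distinct values the inner count becomes the number of injective value-assignments to the blocks satisfying the three additive constraints and lying in $\{0,\dots,\ell-1\}$. For each fixed partition this inner count is a quasi-polynomial in $\ell$: eliminating the linear constraints reduces it to counting the lattice points of a fixed rational polytope dilated by $\ell$ together with a few coprimality and parity side conditions, so the period divides $4$ --- which is why the answer splits into the three residue classes $\ell\equiv 0,\pm1,2 \bmod 4$, in contrast with the period-$2$ behaviour of the variance. To keep the list of partitions short I would pass to the quotient by the symmetry group acting on configurations --- the symmetric group permuting the three quadruples in wreath-product fashion with the order-eight symmetry group of a single $2\times2$ additive block, together with translation- and reflection-invariance of the ambient $\Z$ (which lets one count box placements by width) --- so that an orbit-counting argument cuts the problem down to a handful of connected configuration types, each weighted by its orbit size (the index of its automorphism group).

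The step I expect to be the main obstacle is precisely this enumeration: listing the connected configuration types for $p=3$ and getting each one's quasi-polynomial contribution right. A type that generically has some number of free parameters degenerates when two ostensibly distinct values are forced to coincide or two shifts collide, and these degenerations --- which must be handled by inclusion--exclusion (M\"obius inversion over the partition lattice), not naively --- are exactly what produce the lower-order terms $-1296\ell^{3}+3296\ell^{2}-\cdots$ and the residue-dependent constants, and exactly where the earlier variance calculations needed correction. For $p=3$ this is feasible by hand; I would assemble the weighted contributions into $\mom{3}\ssac(f)$, cross-check the resulting quasi-polynomial against a direct computation of the third central moment for small $\ell$, and finally divide by $\ell^{6}$ and apply \Cref{Jessica} to read off both displayed formulas.
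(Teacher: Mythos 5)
Your overall strategy coincides with the paper's: reduce to the third central moment of $\ssac$ via $\mom{3}\ADF(f)=\mom{3}\ssac(f)/\ell^{6}$, write $\ssac(f)$ as a sum of monomials indexed by additive quadruples, use the even-multiplicity criterion for independent $\pm1$ variables to isolate the surviving triples, stratify by the induced partition of the twelve index slots into even blocks with distinct values, and quotient by the wreath-product symmetry $(S_2\Wr_{[2]}S_2)\Wr_{[3]}S_3$ so that each orbit contributes its size times a quasi-polynomial count of admissible assignments. That is exactly \cref{Sanria} specialized to $p=3$, with your cumulant/connectedness bookkeeping playing the role of the paper's globally-even-locally-odd criterion; for $p=3$ the two mechanisms select the same configurations (a locally even quadruple gives a constant monomial and kills the joint cumulant, and for the remaining triples the cumulant reduces to the indicator of global evenness), and your identification of the skewness denominator from \cref{Jessica} is correct.

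The gap is that the decisive content of the theorem --- the explicit quasi-polynomials --- is never derived: you stop exactly where the paper's work begins. The paper classifies the contributory partitions for $p=3$ into eight isomorphism classes (\cref{Isabel}), computes each orbit size by a stabilizer calculation (\cref{Genevieve}), and evaluates each counting function $\Sols(\fC_j,\ell)$ by hand (\cref{Apple}, with the supporting combinatorics of \cref{Arthur}); only after forming the weighted sum of \cref{Sanria} and normalizing by \cref{Julie} do the displayed formulas appear. Your proposal flags this enumeration as ``the main obstacle'' but does not carry it out, and the one structural shortcut offered in its place is false: it is not true that each stratum's count is a quasi-polynomial of period dividing $4$ --- in the paper's notation $\Sols(\fC_4,\ell)$, $\Sols(\fC_6,\ell)$, and $\Sols(\fC_8,\ell)$ have periods $6$, $12$, and $6$ --- and the period-$4$ behaviour of $\mom{3}\ssac(f)$ emerges only because the period-$3$ contributions cancel in the aggregate, so the residue classification in the statement cannot be read off stratum by stratum. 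Likewise, a numerical cross-check for small $\ell$ cannot substitute for the computation unless the degree and period of the final quasi-polynomial are first pinned down, which again requires the per-class analysis. So the plan is sound and matches the paper, but as written it does not prove the stated formulas.
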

We also report in \cref{Fred} a computer-assisted determination of the fourth central moment of $\ssac(f)$ as a quasi-polynomial function of sequence length, from which we obtain the fourth central moment and fourth standardized moment (kurtosis) of $\ADF(f)$ (see Corollaries \ref{Geoffrey} and \ref{Curt}).
\begin{theorem}\label{Jake}
If $\ell$ is a positive integer, then $\mom{4} \ADF(f)$ is a quasi-polynomial function of $\ell$ of degree $6$ and period $120$ divided by the polynomial $\ell^8$ (see \cref{Geoffrey} for the precise function), while $\smom{4} \ADF(f)$ is a quasi-polynomial function of $\ell$ of degree $6$ and period $120$ divided by a quasi-polynomial function of $\ell$ of degree $6$ and period $2$ (see \cref{Curt} for the precise function). 
\end{theorem}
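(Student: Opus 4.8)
The plan is to reduce the theorem to an analysis of $\mom{4}\ssac(f)$ and then to invoke our general combinatorial method with $p=4$. By \eqref{Herman} the demerit factor is an affine function of $\ssac(f)$, namely $\ADF(f)=-1+\ssac(f)/\ell^2$; the fourth central moment is unchanged by the additive shift and scales by the fourth power of the factor $1/\ell^2$, so $\mom{4}\ADF(f)=\mom{4}\ssac(f)/\ell^8$. Hence the first claim of the theorem is exactly the assertion that $\mom{4}\ssac(f)$ is a quasi-polynomial in $\ell$ of degree $6$ and period $120$. The second claim then follows formally: $\smom{4}\ADF(f)=\mom{4}\ADF(f)/(\mom{2}\ADF(f))^2$, and \cref{Jessica} presents $\mom{2}\ADF(f)$ as a quasi-polynomial of degree $3$ and period $2$ divided by $3\ell^4$, so $(\mom{2}\ADF(f))^2$ is a quasi-polynomial of degree $6$ and period $2$ divided by $9\ell^8$ and
\[
\smom{4}\ADF(f)=\frac{9\,\mom{4}\ssac(f)}{\bigl(3\ell^4\,\mom{2}\ADF(f)\bigr)^2}
\]
is a quasi-polynomial of degree $6$ and period $120$ divided by one of degree $6$ and period $2$, as asserted (the form recorded in \cref{Curt}, possibly after removing a constant common factor, as was done for skewness in \cref{Theodore}).

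It remains to treat $\mom{4}\ssac(f)$, which I would do by specializing our general method to $p=4$. Expanding $C_f(s)=\sum_j f_{j+s}\conj{f_j}$ writes $\ssac(f)$ as a sum of monomials $f_{j+s}f_j f_{k+s}f_k$ indexed by a shift $s$ and positions $j,k$ in $\{0,1,\dots,\ell-1\}$, so $(\ssac(f)-\ev\ssac(f))^4$ is a sum over $4$-tuples of such monomials. Averaging over $\Seq(\ell)$ annihilates every monomial in the $f_i$ whose multiset of subscripts is not everywhere even, and the centering annihilates every surviving contribution in which one factor is uncorrelated with the other three; what remains is organized by the partition of the four factors into blocks of size at least two --- for $p=4$ just the single block of four and the three pairings into two blocks of two --- refined by the coincidence pattern recording which of the sixteen subscripts are forced equal and with which offsets in the shifts. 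For each of the finitely many resulting types, the number of surviving monomials, once we impose the constraint that all subscripts lie in $\{0,1,\dots,\ell-1\}$, is an iterated sum of polynomials over integer intervals whose endpoints are floor-functions of affine forms in $\ell$; such an iterated sum is a quasi-polynomial in $\ell$, and summing over all types makes $\mom{4}\ssac(f)$ a quasi-polynomial.

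The degree and period then fall out type by type. The degree a type contributes equals its number of free integer parameters, and the method bounds this by $6$; the bound is attained by the pairing types in which each block of two factors shares a single shift and is otherwise unconstrained, which are precisely the terms whose aggregate is $3\bigl(\mom{2}\ssac(f)\bigr)^2$ to leading order. Thus the degree of $\mom{4}\ssac(f)$ is exactly $6$, with leading coefficient $256/3$, in agreement with the Gaussian heuristic $\mom{4}\ssac(f)\approx 3\bigl(\mom{2}\ssac(f)\bigr)^2$. Likewise the period of each type's quasi-polynomial divides the least common multiple of the moduli occurring in its floor-functions, which the method bounds a priori; performing the enumeration shows this least common multiple equals $120$, and the explicit formula of \cref{Fred} confirms that $120$ is the exact period. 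Feeding that formula into the reductions of the first paragraph yields Corollaries \ref{Geoffrey} and \ref{Curt}, hence the theorem.

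The hard part is the size and delicacy of the $p=4$ enumeration: there are many coincidence types, each requiring a correct description of its polytope of admissible parameters and the corresponding quasi-polynomial count, and a single mishandled boundary --- the sort of slip that forced Jedwab to correct the computation of Aupetit et al.\ even for $p=2$ --- would corrupt the final formula. This is why the $p=4$ case, unlike $p=2$ and $p=3$, is carried out with computer assistance; the conceptual content (quasi-polynomiality, the degree bound $6$, and the period bound $120$) is already guaranteed by the general method, and the machine only evaluates the resulting finite sum exactly and certifies that those bounds are attained.
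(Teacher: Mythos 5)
Your proposal matches the paper's proof in substance: the paper makes the same reduction via $\ADF(f)=-1+\ssac(f)/\ell^2$, so that $\mom{4}\ADF(f)=\mom{4}\ssac(f)/\ell^8$ and $\smom{4}\ADF(f)=\mom{4}\ssac(f)/\left(\mom{2}\ssac(f)\right)^2$ (Corollaries \ref{Geoffrey} and \ref{Curt}), and it obtains $\mom{4}\ssac(f)$ (\cref{Fred}) exactly as you outline, by a computer-assisted enumeration of the partition types for $p=4$ — concretely, the $97$ classes of $\Isom(4)$ found via \cref{Natasha}, with $|\fC|$ from orbit-stabilizer counts and $\Sols(\fC,\ell)$ from Ehrhart theory plus inclusion-exclusion, summed in \cref{Sanria}. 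One small caveat: the degree $6$ and period $120$ are read off from the computed formula rather than being "guaranteed a priori by the general method" as you suggest (general quasi-polynomiality is deferred to \cite{Katz-Ramirez-limiting}), but since your argument, like the paper's, ultimately rests on the machine-evaluated exact formula, this does not affect correctness.
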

Our computer program was also able to find the fifth central moment of $\ADF$ as a quasi-polynomial function of $\ell$ of degree $7$ and period $55440$ divided by the polynomial $\ell^{10}$.
Our methods also shed light on interesting aspects of the distribution of demerit factors.
For instance, in \cref{Raphael} we show that our general theory implies that the odd central moments are always nonnegative, and we can also determine precisely when central moments are zero.
\begin{theorem}\label{Leonard}
Let $\ell$ and $p$ be positive integers.
Then $\mom{p} \ADF(f)$ is nonnegative.
Moreover, if (i) $p=1$, (ii) $p$ is odd with $p>1$ and $\ell\leq 3$, or (iii) $p$ is even and $\ell\leq 2$, then $\mom{p} \ADF(f)$ is zero; otherwise it is strictly positive.
\end{theorem}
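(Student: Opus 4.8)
The plan is to reduce the whole statement to a combinatorial count. Since $\ADF(f)=-1+\ssac(f)/\ell^2$ on $\Seq(\ell)$, we have $\mom{p}\ADF(f)=\mom{p}\ssac(f)/\ell^{2p}$, so it suffices to analyse $\mom{p}\ssac(f)$. Using that $f$ is real, $\ssac(f)=\sum_{s}C_f(s)^2$; expanding $C_f(s)^2=\sum_{i,j}f_{i+s}f_if_{j+s}f_j$, summing over $s$, and substituting $(a,b,c,d)=(i+s,i,j+s,j)$ gives
\[
\ssac(f)=\sum_{\substack{(a,b,c,d)\in\{0,\dots,\ell-1\}^4\\ a-b=c-d}}f_af_bf_cf_d.
\]
Under the uniform measure on $\Seq(\ell)$ the entries $f_0,\dots,f_{\ell-1}$ are independent and uniform on $\{-1,1\}$, so a product $f_{j_1}\cdots f_{j_k}$ has expectation $1$ when every index occurs an even number of times among $j_1,\dots,j_k$ and $0$ otherwise. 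Call a tuple $\tau=(a,b,c,d)$ \emph{balanced} if each of $a,b,c,d$ occurs in it with even multiplicity, write $f_\tau=f_af_bf_cf_d$, and let $S=S_\ell$ be the set of unbalanced tuples occurring in the sum above. The balanced tuples there each contribute a deterministic $+1$, whereas $\ev f_\tau=0$ for $\tau\in S$, so
\[
\ssac(f)-\ev\ssac(f)=\sum_{\tau\in S}f_\tau.
\]

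Raising this to the $p$th power and taking expectations term by term yields
\[
\mom{p}\ssac(f)=\#\bigl\{(\tau_1,\dots,\tau_p)\in S^p:\ f_{\tau_1}\cdots f_{\tau_p}\text{ has every }f_j\text{ to an even power}\bigr\},
\]
which is a nonnegative integer; hence $\mom{p}\ADF(f)\ge 0$, and it vanishes exactly when $S^p$ contains no such \emph{balanced $p$-tuple}. I view this identity --- which also underlies the explicit moment formulas elsewhere in the paper --- as the heart of the argument; the rest is a finite analysis of when balanced $p$-tuples exist.

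If $p=1$ there are none, since $S$ is by definition the set of unbalanced tuples; this is the vanishing of the first central moment. If $\ell\le 2$, then $\ssac(f)$ is constant on $\Seq(\ell)$ (equal to $1$ when $\ell=1$ and to $6$ when $\ell=2$), so every central moment vanishes. If $\ell=3$, a short computation gives $\ssac(f)=15+4f_0f_2$, so $\ssac(f)-\ev\ssac(f)=4f_0f_2$ and $\mom{p}\ssac(f)=4^p\,\ev[(f_0f_2)^p]$, which is $4^p$ for $p$ even and $0$ for $p$ odd, matching case (ii). If $\ell\ge 3$ and $p$ is even, then for any $\tau\in S_\ell$ (for instance $\tau=(1,0,2,1)$) the constant tuple $(\tau,\dots,\tau)$ is balanced, so $\mom{p}\ssac(f)\ge 1$. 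The remaining case, $\ell\ge 4$ with $p\ge 3$ odd, is the crux: here $S_4\subseteq S_\ell$, and $S_4$ contains tuples $\tau_1,\tau_2,\tau_3$ with $f_{\tau_1}=f_0f_2$, $f_{\tau_2}=f_1f_3$, and $f_{\tau_3}=f_0f_1f_2f_3$ --- for instance $\tau_1=(1,0,2,1)$, $\tau_2=(2,1,3,2)$, $\tau_3=(1,0,3,2)$, each satisfying $a-b=c-d$ and each unbalanced --- so that $f_{\tau_1}f_{\tau_2}f_{\tau_3}=f_0^2f_1^2f_2^2f_3^2$ makes $(\tau_1,\tau_2,\tau_3)$ a balanced $3$-tuple; appending $(p-3)/2$ copies of the pair $(\tau_1,\tau_1)$, which contributes only even multiplicities, turns it into a balanced $p$-tuple. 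Thus $\mom{p}\ssac(f)\ge 1$ here as well, and collecting the cases gives the stated classification.

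The step I expect to demand the most care is establishing the count identity for $\mom{p}\ssac(f)$ in a form clean enough to be used uniformly in $p$ --- this is precisely what the paper's general combinatorial framework provides --- together with the small-length bookkeeping: checking that $\ssac$ is constant for $\ell\le 2$, computing $\ssac$ exactly for $\ell=3$, and exhibiting the balanced triple for $\ell\ge 4$. There is no cancellation to control, since each term of the expanded $p$th power contributes $0$ or $+1$; strict positivity therefore follows the moment one balanced $p$-tuple is produced.
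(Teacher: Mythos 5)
Your argument is correct, and it reaches the theorem by a route that is the same in spirit but noticeably different in execution from the paper's. Your central identity --- that $\mom{p}\ssac(f)$ counts $p$-tuples of unbalanced, constraint-satisfying quadruples whose combined index multiset is everywhere even --- is exactly the content of \cref{Sanri} (your ``unbalanced'' is the paper's local oddness, your overall evenness is its global evenness, and the paper merely organizes these tuples by the partitions of $\indexset$ they induce); but your derivation of it is cleaner: observing that the balanced quadruples contribute precisely the constant $\ev\ssac(f)$, so that the centered variable is literally $\sum_{\tau\in S}f_\tau$, lets you expand the $p$th power directly and bypass the binomial-expansion and product-splitting machinery of \cref{Paul}. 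Where you genuinely diverge is in the vanishing cases: the paper derives them structurally from \cref{Gideon}, which forces any contributory partition to have at least $3$ classes (at least $4$ when $p$ is odd) and hence to require at least that many distinct values in $[\ell]$, whereas you settle them by brute force, noting $\ssac$ is constant ($1$, then $6$) for $\ell\le 2$ and computing $\ssac(f)=15+4f_0f_2$ for $\ell=3$; both evaluations check out, and the $\ell=3$ one even hands you positivity for even $p$ at that length for free. Your positivity witnesses play the role of the explicit contributory partitions and assignments in the proof of \cref{Lynn}: the repeated quadruple $(1,0,2,1)$ is essentially the paper's three-class partition for even $p$, and your triple $(1,0,2,1),(2,1,3,2),(1,0,3,2)$ padded with $(p-3)/2$ repeated pairs replaces the paper's single four-class partition for odd $p$; I verified the constraint $a-b=c-d$, the unbalancedness of each quadruple, and the evenness of the combined multiplicities ($2,4,4,2$ for indices $0,1,2,3$), so strict positivity follows as you say, with no cancellation to worry about. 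The trade-off is that the paper's partition framework yields reusable structure (class-count bounds and the isomorphism-class formalism feeding the variance, skewness, and kurtosis computations), while your argument is more elementary and self-contained for this particular theorem.
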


Our method can be developed further to prove that the $p$th central moment of $\ssac$ for sequences of length $\ell$ is always a quasi-polynomial function of $\ell$ with rational coefficients.
Further developments of our method also show that in the limit as $\ell\to\infty$, all the standardized moments of the autocorrelation demerit factor tend to those of the standard normal distribution.
The additional theoretical tools used to obtain these results are introduced and explored in \cite{Katz-Ramirez-limiting}.
Golay believed that, as the sequence length tends to infinity, the merit factor of binary sequences is asymptotically bounded above by a number approximately equal to 12.32, and offered a heuristic explanation for his belief in \cite{Golay-1982}.
Golay's claim is equivalent to the claim that the demerit factor of binary sequences is asymptotically bounded below by a positive constant (approximately $1/12.32$) in the limit as sequence length tends to infinity.
The authors have been asked if knowledge of all the moments of the demerit factor could be used to provide a proof that the demerit factor is bounded away from zero.
Due to the exponential increase in the number $2^\ell$ of binary sequences of length $\ell$, the possibility of the existence of a small number of binary sequences whose demerit factor lies far below the mean is not excluded.
Indeed, it is noteworthy that there are infinite families of binary sequences whose asymptotic demerit factor tends to a value less than $1/6$ (see \cite[p.~128]{Jedwab-Katz-Schmidt}) even though the asymptotic mean value is $1$ (see \cref{Sheila}) and the asymptotic variance is $0$ (see \cref{Jessica}).
Thus, we find binary sequences that have demerit factors that lie below the mean by arbitrarily many standard deviations.

The rest of this paper is organized as follows.
\cref{Nellie} gives basic conventions, notations, and definitions, mostly concerning particular kinds of functions and partitions that are used in \cref{Monte} to obtain an exact formula for the central moments of $\ssac$ (cf.\ \cref{Sanri}).
Typically, this formula involves many similar partitions, so in \cref{Idelphonse} we devise an equivalence relation (via a group action) to organize these partitions into isomorphism classes.
This produces an easier formula for the central moments in \cref{Sanria}, but in order to apply this formula, one must find all the isomorphism classes of relevant partitions.
\cref{Scott} gives results that make this search easier and reduce the search to an algorithm.
We then prove \cref{Leonard} in \cref{Prunella}.
\cref{Veronica} is a brief exposition that applies this theory to compute the variance, thus confirming Jedwab's result in \cref{Jessica}.
\cref{Simon} follows with an exact calculation of skewness, thus proving \cref{Theodore}.
\cref{Curtis} then reports on our computer-assisted determination of the kurtosis in \cref{Jake}.
\cref{Paul} contains some technical algebraic results that are used in the proof of our formula for central moments in \cref{Monte}.
\cref{Lester} contains a result from linear algebra that supports the algorithm in \cref{Scott}.
\cref{Arthur} contains some technical combinatorial results that are used in our calculations of the variance and skewness of $\ADF$.
Appendices \ref{Isabel-proof}--\ref{Apple-proof} contain proofs of the main calculations used in \cref{Simon} to calculate the skewness of $\ADF$.

\section{Notation and definitions}\label{Nellie}

In this section, we give the basic conventions, notations, and definitions,  mostly concerning particular kinds of functions and partitions, which are used in \cref{Monte} to obtain an exact formula for the central moments (cf.\ \cref{Sanri}).

We use the convention that $\N=\{0,1,2,\ldots\}$ and $\Z_+=\{1,2,3,\ldots\}$.
If $\ell \in \N$, we write $[\ell]$ to mean $\{0, 1, \dots, \ell-1\}$.

We write multisets using blackboard bold square brackets: if $I$ is a set and $\{a_i\}_{i \in I}$ is a family of (not necessarily distinct) elements indexed by $I$, then $\ms{a_i: i \in I}$ is the multiset in which the multiplicity of $a$ in our multiset equals the number of $i \in I$ such that $a_i=a$.
In particular, if $a_1,\ldots,a_k$ is a list, then the multiplicity of $a$ in the multiset $\ms{a_1,\ldots,a_k}$ equals the number of times $a$ appears on our list.

If $S$ and $T$ are sets, then $T^S$ denotes the set of all functions from $S$ into $T$.
When we have a $k$-tuple, we sometimes abbreviate it by omitting commas and enclosing parentheses (e.g., the triple $(e,s,v)$ could be abbreviated as $(e s v)$ or even $e s v$), provided that the abbreviation does not introduce any ambiguity, e.g., when the context demands a tuple with precisely $k$ terms and each term is written with a single letter or digit.
For $p \in \N$, we shall be interested in triples of the form $(e,s,v) \in \indexset$ because terms in the calculation of the $p$th standard moment of the autocorrelation demerit factor are related to systems of $p$ equations, where each equation has two sides, and each side has two terms.  Thus, labeling each term requires a triple $(e,s,v)$, where $e \in [p]$ indicates the equation number, $s \in [2]$ indicates the side of the equation, and $v \in [2]$ indicates the position of terms on the given side (for both $s$ and $v$, we use $0$ to indicate left and $1$ to indicate right).
A certain kind of function, which we shall call an {\it assignment}, plays a critical role in telling us which values are assigned to the terms in such a system of equations.
\begin{definition}[Assignment]\label{Gennady-1}
Let $E\subseteq \N$.
An {\it assignment for $E$} is a function from $\Eindexset$ into $\N$, i.e., an element of $\N^\Eindexset$.
If $\tau$ is an assignment for $E$ and $(e,s,v) \in \Eindexset$, we typically use subscript notation $\tau_{e,s,v}$ (or just $\tau_{esv}$) rather than the parenthesized notation $\tau(e,s,v)$ to denote the evaluation of $\tau$ at $(e,s,v)$.
The following are notations for the set of all assignments for $E$ and some of its important subsets:
\begin{itemize}
\item $\As(E) = \N^\Eindexset$, the set of all assignments for $E$,
\item $\As(E,=) = \{\tau \in \As(E): \tau_{e 00} + \tau_{e 01} = \tau_{e 10} + \tau_{e 11} \text{ for every } e \in E\}$,
\item $\As(E,\ell) = \{\tau \in \As(E): \tau(\Eindexset) \subseteq [\ell]\}$, and
\item $\As(E,=,\ell) = \As(E,=) \cap \As(E,\ell)$.
\end{itemize}
\end{definition}
\begin{example}\label{Elaine}
Let $E=[2]=\{0,1\}$.  Let $\tau \colon \twoindexset \to \N$ be the function where $\tau_{000}=\tau_{001}=\tau_{100}=\tau_{101}=4$, $\tau_{010}=\tau_{110}=3$, and $\tau_{011}=\tau_{111}=5$.  Then notice that $\tau \in \As(E)=\As([2])$ because $\tau$ has $\twoindexset$ as its domain and $\N$ as its codomain.

Notice that $\tau \in \As([2],=)$ because $\tau_{000}+\tau_{001}=4+4=3+5=\tau_{010}+\tau_{011}$ and $\tau_{100}+\tau_{101}=4+4=3+5=\tau_{110}+\tau_{111}$.  Also, notice that $\tau \in \As([2],6)$ because $\tau([2])=\{3,4,5\} \subseteq [6]$.  We could also say that $\tau \in \As([2],10)$ or indeed $\tau \in \As([2],\ell)$ for every $\ell \geq 6$.  And because $\tau \in \As([2],6)$ and $\tau \in \As([2],=)$, we see that $\tau \in \As([2],=,6)$.

This assignment $\tau$ could be used to indicate one solution of the following system of equations
\begin{align*}
x_{000} + x_{001} & = x_{010} + x_{011} \\
x_{100} + x_{101} & = x_{110} + x_{111},
\end{align*}
where $\tau_{esv}$ is the value substituted for $x_{esv}$ for every $(e,s,v) \in \twoindexset$.
\end{example}
A partition of a set $A$ is a collection of nonempty, disjoint subsets of $A$ whose union is $A$.
If $\cP$ is a partition of $A$, then $\cP$ induces an equivalence relation on $A$ that is written $a_1 \equiv a_2 \pmod{\cP}$, which means that there is some class $P \in \cP$ such that $a_1,a_2 \in P$.
There is a partition naturally associated with every function, including our assignments from \cref{Gennady-1}.
\begin{definition}[Partition induced by a function]
Let $f\colon A \to B$ be a function.
The {\it partition induced by $f$} is the partition of $A$ that equals $\{f^{-1}(\{b\}): b \in B\}\smallsetminus\{\emptyset\}$, i.e., the classes of the partition are the nonempty fibers of $f$.
Equivalently, the partition induced by $f$ is the partition of $A$ where $a_1, a_2 \in A$ lie in the same class if and only if $f(a_1)=f(a_2)$.
In other words, the partition induced by $f$ is the partition that induces the equivalence relation on $A$ where $a_1$ is equivalent to $a_2$ if and only if $f(a_1)=f(a_2)$.
\end{definition}
\begin{example}\label{Fiona}
Consider the assignment $\tau \in \As([2],=,6)$ from \cref{Elaine}.  The partition induced by $\tau$ is  $\cP=\Big\{\{(0,0,0),(0,0,1),(1,0,0),(1,0,1)\}, \{(0,1,0),(1,1,0)\}, \{(0,1,1),(1,1,1)\}\Big\}$ because $\tau$ maps all the elements of $\{(0,0,0),(0,0,1),(1,0,0),(1,0,1)\}$ to the common value $4$; $\tau$ maps all elements of $\{(0,1,0),(1,1,0)\}$ to the common value $3$; and $\tau$ maps all elements of $\{(0,1,1),(1,1,1)\}$ to the common value $5$.
\end{example}
As we see in \cref{Fiona}, the partition induced by an assignment from $\As([p])$ is a partition of $\indexset$.  It will be helpful to have succinct notations for the set of all partitions of this form.
\begin{definition}\label{Odo}
If $E \subseteq \N$, then $\Part(E)=\{\cP:\cP \text{ is a partition of } \Eindexset \}$ and $\Part(p)=\Part([p])$ for $p\in\N$.
\end{definition}
We shall later find that it is important to keep track of all assignments that induce a particular partition.
\begin{definition}[Assignments inducing a partition]\label{Gennady-2}
Let $E \subseteq \N$.  If $\cP \in \Part(E)$, then
\begin{itemize}
\item $\As(\cP)=\{\tau \in \As(E): \tau_\beta = \tau_\gamma \text{ iff }  \beta \equiv \gamma \bmod{\cP} \}$, that is, the set of assignments that induce the partition $\cP$,
\item $\As(\cP,=) = \As(\cP) \cap \As(E,=)$,
\item $\As(\cP,\ell)=\As(\cP) \cap \As(E,\ell)$, and
\item $\As(\cP,=,\ell) = \As(\cP) \cap \As(E,=) \cap \As(E,\ell)$.
\end{itemize}
\end{definition}
\begin{example}\label{Gail}
Let $\tau \in \As([2],=,6)$ and $\cP$ be as defined in Examples \ref{Elaine} and \ref{Fiona}, where we saw that $\tau$ induces the partition $\cP$.  Note that $\cP \in \Part(2)$.  Furthermore, since $\tau$ induces $\cP$, we see that $\tau \in \As(\cP)$.
Because $\tau \in \As([2],=,6)=\As([2],=)\cap \As([2],6)$, we have $\tau \in \As(\cP,=)$ and $\tau \in \As(\cP,6)$ and $\tau \in \As(\cP,=,6)$.
\end{example}
Partitions that are induced by assignments in $\As(E,=)$ (where $E$ can be any finite subset of $\N$) are extremely important in our calculations, and so we give them a special name.
\begin{definition}[Satisfiable partition]\label{Sargon}
Let $E$ be a finite subset of $\N$.
A partition $\cP \in \Part(E)$ is said to be {\it satisfiable} if $\As(\cP,=)$ is nonempty.
(Equivalently, there is some $\ell \in \N$ such that $\As(\cP,=,\ell)$ is nonempty.)
We denote the set of satisfiable partitions of $\Eindexset$ as $\Sat(E)$, and for $p\in\N$ we use $\Sat(p)$ as a shorthand for $\Sat([p])$.
\end{definition}
\begin{example}\label{Gertrude}
Let $\tau$ and $\cP \in \Part(2)$ be as defined in Examples \ref{Elaine} and \ref{Fiona}, where we showed that $\tau$ induces $\cP$, so that $\tau \in \As(\cP,=)$ as noted in \cref{Gail}.  We also noted that $\cP \in \Part(2)$ in \cref{Gail}.
The nonemptiness of $\As(\cP,=)$ makes $\cP$ a satisfiable partition, so $\cP \in \Sat(2)$.
\end{example}
We have been discussing triples of the form $(e,s,v) \in \indexset$ (where $p \in \N$) because they index positions in systems of $p$ equations with two sides to each equation and two terms on each side.  Sometimes it is necessary to restrict ourselves to a subset of the equations in the system, so we define a form of restriction that applies both to the assignments of values to these equations and the partitions that the assignments induce.
\begin{definition}[Restriction of assignments, sets, and partitions]\label{Judy}
Let $F \subseteq E \subseteq \N$.
If $\tau \in \As(E)$, then we use $\tau\res{F}$ as a shorthand for $\tau\vert_{\Findexset}$ and we call $\tau\res{F}$ the {\it restriction of $\tau$ to $F$}.
For $P\subseteq \Eindexset$, the {\it restriction of $P$ to $F$}, written $P_F$, is $P\cap(\Findexset)$.
If $\cP\in \Part(E)$, then the {\it restriction of $\cP$ to $F$}, written $\cP_F$, is the partition $\{P_F : P \in \cP\} \smallsetminus \{\emptyset\}$ of $\Findexset$.
\end{definition}
\begin{example}\label{Helen}
Let $E=[2]=\{0,1\}$ and let $\cP \in \Part(E)=\Part(2)$ and $\tau \in \As(\cP,=,6)$ be as defined in Examples \ref{Elaine} and \ref{Fiona} and elaborated in Example \ref{Gail}, where it is shown that $\tau$ induces $\cP$.
Let $F=\{0\} \subseteq E$.
Then $\tau\res{F}$ is the function from $\{0\} \times \lindexset$ to $\N$ with $(\tau\res{F})_{000}=(\tau\res{F})_{001}=4$, $(\tau\res{F})_{010}=3$, and $(\tau\res{F})_{011}=5$.

Since $\cP=\{P,Q,R\}$, where $P=\{(0,0,0),(0,0,1),(1,0,0),(1,0,1)\}$, $Q=\{(0,1,0),(1,1,0)\}$, and $R=\{(0,1,1),(1,1,1)\}$, we have $\cP_F=\{P_F,Q_F,R_F\}$ where $P_F=\{(0,0,0),(0,0,1)\}$, $Q_F=\{(0,1,0)\}$, and $R_F=\{(0,1,1)\}$.
Thus, $\cP_F=\cP_{\{0\}}=\Big\{\{(0,0,0),(0,0,1)\}, \{(0,1,0)\}, \{(0,1,1)\}\Big\}$, which is a partition of $\{0\} \times \lindexset$.

Notice that $\tau\res{F}$ is in $\As(F)$ because it is a function from $F\times\lindexset$ into $\N$.
In fact, $\tau\res{F} \in \As(F,=)$ because $(\tau\res{F})_{000}+(\tau\res{F})_{001}=4+4=3+5=(\tau\res{F})_{010}+(\tau\res{F})_{011}$.
Also, $\tau\res{F} \in \As(F,6)$ because all the outputs of $\tau\res{F}$ lie in $[6]$, and so actually $\tau\res{F} \in \As(F,=,6)=\As(F,=)\cap\As(F,6)$.
Furthermore, notice that $\cP$ is the partition of $\Eindexset$ induced by $\tau$ (as shown in \cref{Gail}), and after we restrict, $\cP_F$ is the partition of $\Findexset$ induced by $\tau\res{F}$ because $\tau\res{F}$ maps all the elements of $\{(0,0,0),(0,0,1)\}$ to the common value $4$; $\tau\res{F}$ maps the element of $\{(0,1,0)\}$ to the value $3$; and $\tau\res{F}$ maps the element of $\{(0,1,1)\}$ to the value $5$.
Thus, $\tau\res{F} \in \As(\cP_F)$, and because $\tau\res{F} \in \As(F,=,6)$, we  know that $\tau\res{F} \in \As(\cP_F,=,6)=\As(\cP_F)\cap\As(F,=,6)$.
\end{example}
The observations in the last paragraph of \cref{Helen} are not coincidental.  We now show that the restriction process respects the various attributes for assignments that we specified in Definitions \ref{Gennady-1} and \ref{Gennady-2}.
\begin{lemma}\label{flower}
Let $p,\ell \in \N$, $F\subseteq E \subseteq [p]$, and $\cP \in \Part(E)$.
\begin{enumerate}[label=(\roman*)]
\item\label{flower-a} If $\tau\in\As(E)$, then $\tau\res{F} \in \As(F)$.
\item\label{flower-b} If $\tau\in\As(E,\ell)$, then $\tau\res{F} \in \As(F,\ell)$.
\item\label{flower-c} If $\tau\in\As(E,=)$, then $\tau\res{F} \in \As(F,=)$.
\item\label{flower-d} If $\tau\in\As(E,=,\ell)$, then $\tau\res{F} \in \As(F,=,\ell)$.
\item\label{flower-e} If $\tau\in\As(\cP)$, then $\tau\res{F} \in \As(\cP_F)$.
\item\label{flower-f} If $\tau\in\As(\cP,\ell)$, then $\tau\res{F} \in \As(\cP_F,\ell)$.
\item\label{flower-g} If $\tau\in\As(\cP,=)$, then $\tau\res{F} \in \As(\cP_F,=)$.
\item\label{flower-h} If $\tau\in\As(\cP,=,\ell)$, then $\tau\res{F} \in \As(\cP_F,=,\ell)$.
\end{enumerate}
\end{lemma}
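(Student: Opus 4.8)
The plan is to prove all eight parts by unwinding the definitions, treating the parts in essentially the order they are stated, since each later part builds on the earlier ones. The key observation is that restriction $\tau \mapsto \tau\res{F} = \tau\vert_{\Findexset}$ simply forgets the values of $\tau$ on coordinates $(e,s,v)$ with $e \in E \smallsetminus F$, and never changes the values on coordinates with $e \in F$. Part \ref{flower-a} is immediate: $\tau\res{F}$ is by construction a function from $\Findexset$ into $\N$, hence an element of $\As(F)$. For part \ref{flower-b}, if the image $\tau(\Eindexset) \subseteq [\ell]$, then $\tau\res{F}(\Findexset) = \tau(\Findexset) \subseteq \tau(\Eindexset) \subseteq [\ell]$, so $\tau\res{F} \in \As(F,\ell)$.

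For part \ref{flower-c}, the condition $\tau \in \As(E,=)$ says $\tau_{e00} + \tau_{e01} = \tau_{e10} + \tau_{e11}$ for every $e \in E$; since $F \subseteq E$, this holds in particular for every $e \in F$, and because restriction does not alter the values at these coordinates, $(\tau\res{F})_{e00} + (\tau\res{F})_{e01} = (\tau\res{F})_{e10} + (\tau\res{F})_{e11}$ for all $e \in F$, giving $\tau\res{F} \in \As(F,=)$. Part \ref{flower-d} then follows by combining \ref{flower-b} and \ref{flower-c}, since $\As(F,=,\ell) = \As(F,=) \cap \As(F,\ell)$ and $\As(E,=,\ell) = \As(E,=) \cap \As(E,\ell)$.

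The one part needing slightly more care is \ref{flower-e}: the set $\As(\cP)$ consists of assignments $\tau$ with $\tau_\beta = \tau_\gamma$ if and only if $\beta \equiv \gamma \bmod \cP$, so both an equality and a non-equality condition must be checked. Given $\beta, \gamma \in \Findexset$, we have $\beta \equiv \gamma \bmod \cP_F$ if and only if $\beta, \gamma$ lie in a common block $P_F$ for some $P \in \cP$, which (since $\beta,\gamma \in \Findexset$) is equivalent to $\beta \equiv \gamma \bmod \cP$; and since $(\tau\res{F})_\beta = \tau_\beta$ and $(\tau\res{F})_\gamma = \tau_\gamma$, the biconditional $\tau_\beta = \tau_\gamma \iff \beta \equiv \gamma \bmod \cP$ restricts to the biconditional defining membership in $\As(\cP_F)$. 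Here the main (mild) obstacle is just confirming that the restricted partition $\cP_F$, as defined in \cref{Judy}, really is the partition whose blocks are exactly the nonempty traces of blocks of $\cP$ on $\Findexset$, and that these traces are disjoint and cover $\Findexset$ — which is routine. Once \ref{flower-e} is established, parts \ref{flower-f}, \ref{flower-g}, \ref{flower-h} are obtained exactly as \ref{flower-b}, \ref{flower-c}, \ref{flower-d} were, by intersecting with the appropriate conditions $\As(F,\ell)$, $\As(F,=)$, or both, using parts \ref{flower-b} and \ref{flower-c} together with \ref{flower-e}.
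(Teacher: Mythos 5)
Your proposal is correct and follows essentially the same route as the paper: parts (i)--(iv) by direct unwinding of the definitions, part (v) by observing that for $\beta,\gamma\in\Findexset$ the relation modulo $\cP_F$ agrees with the relation modulo $\cP$ and restriction does not change values, and parts (vi)--(viii) by combining (v) with (ii)--(iv). Your treatment of (v) is in fact slightly more explicit than the paper's, which simply asserts the corresponding biconditional.
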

\begin{proof}
First of all, \ref{flower-a} is immediate since $\tau$ is a function from $\Eindexset$ into $\N$, so $\tau\res{F}=\tau\vert_{\Findexset}$ is a function from $\Findexset$ into $\N$.
Then \ref{flower-b} follows because the image of a restriction of a function is contained in the image of the original.  
If $\tau \in \As(E,=)$, then for each $f \in F$, we have $\tau_{f00}+\tau_{f01}=\tau_{f10}+\tau_{f11}$, so $(\tau\res{F})_{f00}+(\tau\res{F})_{f01}=(\tau\res{F})_{f10}+(\tau\res{F})_{f11}$, and so $\tau\res{F} \in \As(F,=)$, thus proving \ref{flower-c}.
Then \ref{flower-d} follows from \ref{flower-b} and \ref{flower-c} because $\As(A,=,\ell)=\As(A,\ell)\cap \As(A,=)$ for $A \in \{E,F\}$.

Let $\tau\in\As(\cP)$, so that for every $\beta,\gamma \in \Findexset$, we have $\tau_\beta=\tau_\gamma$ if and only if $\beta \equiv \gamma \pmod{\cP}$, and so
$(\tau\res{F})_\beta=(\tau\res{F})_\gamma$ if and only if $\beta \equiv \gamma \pmod{\cP_F}$, so then $\tau\res{F} \in \As(\cP_F)$, thus proving \ref{flower-e}.
Then \ref{flower-f}, \ref{flower-g}, and \ref{flower-h} respectively follow from \ref{flower-e} in conjunction with \ref{flower-b}, \ref{flower-c}, \ref{flower-d}, respectively.
\end{proof}
In our combinatorial methods, the sizes and parities of sizes of classes in partitions are significant.
\begin{definition}[Partition type]
Let $\cP$ be a partition of a set $S$. Then the {\it type of $\cP$} is the multiset $\ms{|P|: P \in \cP}$.
\end{definition}
 \begin{definition}[Even partition] 
A partition is {\it even} if and only if the cardinality of each equivalence class is finite and even, i.e., if and only if every element of the partition's type is an even number.
\end{definition}
\begin{example}\label{Grace}
Let $\cP \in \Part(2)$ be the partition defined in \cref{Fiona}.  The type of $\cP$ is $\ms{4,2,2}$ since its classes have cardinalities $4$, $2$, and $2$, and since all of these cardinalities are even, this makes $\cP$ an even partition.
\end{example}
The following definition characterizes partitions of great importance in our calculations.
\begin{definition}[Globally even, locally odd (GELO) partition]\label{Hammurabi}
Let $E \subseteq \N$ and $\cP \in \Part(E)$.  We say that $\cP$ is {\it globally even, locally odd} (abbreviated {\it GELO}) to mean that $\cP$ is even but for every $e \in E$ the restricted partition $\cP_{\{e\}}$ is not even.  We denote the set of globally even, locally odd partitions of $\Eindexset $ as $\gelo(E)$.  For $p \in \N$, we write $\gelo(p)$ to mean $\gelo([p])$.
\end{definition}
\begin{example}\label{Ignatius}
Let $E=[2]=\{0,1\}$ and let $\cP$ be the partition from $\Part(E)=\Part(2)$ defined in \cref{Fiona} and elaborated in the intervening examples.
We noted that $\cP$ is even in \cref{Grace}.
In \cref{Helen}, we showed that $\cP_{\{0\}}=\Big\{\{(0,0,0),(0,0,1)\}, \{(0,1,0)\}, \{(0,1,1)\}\Big\}$ and here we also note that $\cP_{\{1\}}=\Big\{\{(1,0,0),(1,0,1)\}, \{(1,1,0)\}, \{(1,1,1)\}\Big\}$.
Both $\cP_{\{0\}}$ and $\cP_{\{1\}}$ have classes of size $1$, so they are not even.
Since $\cP$ is even, but for every $e \in E$, the restriction $\cP_{\{e\}}$ is not even, we see that $\cP$ is a globally even, locally odd partition in $\gelo(2)$.
\end{example}
When we calculate the moments of the distribution of demerit factors, it turns out that every nonzero term in our calculation corresponds to some partition combining the attributes of both Definitions \ref{Sargon} and \ref{Hammurabi}, so we name such partitions accordingly.
\begin{definition}[Contributory partition]
Let $p \in \N$.  Then $\cP \in \Part(p)$ is said to be {\it contributory} if it is globally even, locally odd and satisfiable.
We denote the set of contributory partitions of $\indexset$ as $\Con(p)$.
That is, $\Con(p)=\gelo(p)\cap\Sat(p)$.
\end{definition}
\begin{example}
Let $E=[2]=\{0,1\}$ and let $\cP$ be the partition in $\Part(E)=\Part(2)$ defined in \cref{Fiona} and elaborated in the intervening examples.
We showed that $\cP \in \Sat(2)$ in \cref{Gertrude} and that $\cP \in \gelo(2)$ in  \cref{Ignatius}, so $\cP \in \Con(2)$, i.e., $\cP$ is a contributory partition of $\twoindexset$.
\end{example}

\section{Moments from partitions}\label{Monte}

In this section, we obtain an exact formula for central moments of $\ssac(f)$, the sum of the squares of the autocorrelation values for a sequence $f$, where the moments are computed with $f$ ranging over the set $\Seq(\ell)$ of all binary sequences of a given length $\ell$ (equipped with uniform probability measure).
Recall from the Introduction that we use $\ev v(f)=\expv{f \in \Seq(\ell)} (v(f))$ to denote the expected value of a random variable $v$ depending on $f$ as $f$ ranges over $\Seq(\ell)$.
Also, recall from \eqref{Monique} that the $p$th central moment of the random variable $v(f)$ as $f$ ranges over $\Seq(\ell)$ is denoted
\[
\mom{p} v(f)=\ev \left(v(f)-\ev v(f)\right)^p.
\]
In this section, we will show the theory behind how we calculate the $p$th central moment of $\ssac$.
Since \eqref{Herman} shows that the demerit factor of a binary sequence $f$ of length $\ell$ is $\ADF(f)=-1+\ssac(f)/\ell^2$, it is easy to determine the $p$th central moment of the demerit factor from that of $\ssac$.
\cref{Sanri} provides a way of calculating central moments of the sum of squares of the autocorrelation in terms of contributory partitions and assignments.
The proof of this result requires some technical lemmas and their corollaries, which we defer to \cref{Paul}.
\begin{proposition}\label{Sanri}
For $p,\ell \in \N$, we have
\[ \mom{p} \ssac(f) = \sum_{\cP \in \Con(p)} |\As(\cP,=,\ell)|. \]
\end{proposition}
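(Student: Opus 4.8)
The plan is to expand $\ssac(f)$ into a sum indexed by assignments, subtract the mean, raise to the $p$th power, and then see which terms survive the expectation. Since a binary sequence is real-valued, $C_f(s) = \sum_{j\in\Z}f_{j+s}f_j$, so $\card{C_f(s)}^2 = \sum_{j,k\in\Z}f_{j+s}f_jf_{k+s}f_k$; summing over $s\in\Z$ and substituting $(a,b,c,d) = (j+s,\,j,\,k+s,\,k)$ gives $\ssac(f) = \sum f_af_bf_cf_d$, the sum running over all $(a,b,c,d)\in\Z^4$ with $a+d=b+c$, and since $f$ vanishes off $[\ell]$ only tuples with $a,b,c,d\in[\ell]$ contribute. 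Encoding such a tuple as an assignment for a one-element set $\{e\}$ via $(\tau_{e00},\tau_{e01},\tau_{e10},\tau_{e11}) = (a,d,b,c)$ turns the constraint $a+d=b+c$ into the defining relation of $\As(\{e\},=)$, the support restriction into that of $\As(\{e\},\ell)$, and the product $f_af_bf_cf_d$ into $\prod_{(s,v)\in\lindexset}f_{\tau_{esv}}$; this correspondence is a bijection, so for each $e\in[p]$,
\[ \ssac(f) = \sum_{\sigma\in\As(\{e\},=,\ell)}\;\prod_{(s,v)\in\lindexset}f_{\sigma_{esv}}. \]

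Next I would write $\mom{p}\ssac(f) = \ev\bigl(\prod_{e\in[p]}(\ssac(f)-\ev\ssac(f))\bigr)$, viewing the $p$ equal factors as indexed by $e\in[p]$. By linearity $\ev\ssac(f)=\sum_{\sigma}\ev\prod_{(s,v)}f_{\sigma_{esv}}$, so each factor is $\sum_{\sigma\in\As(\{e\},=,\ell)}\bigl(\prod_{(s,v)}f_{\sigma_{esv}}-\ev\prod_{(s,v)}f_{\sigma_{esv}}\bigr)$. Distributing the product over $e$, using that choosing one $\sigma^{(e)}\in\As(\{e\},=,\ell)$ for every $e$ is the same as choosing one $\tau\in\As([p],=,\ell)$ with $\tau\res{\{e\}}=\sigma^{(e)}$ for all $e$, and then swapping the finite summation with the expectation, I get
\[ \mom{p}\ssac(f) = \sum_{\tau\in\As([p],=,\ell)}\ev\prod_{e\in[p]}\bigl(Y_e-\ev Y_e\bigr), \qquad Y_e := \prod_{(s,v)\in\lindexset}f_{\tau_{esv}}. \]

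The heart of the argument is to evaluate the inner expectation for a fixed $\tau\in\As([p],=,\ell)$ in terms of the partition $\cP\in\Part(p)$ it induces. Writing $m_{e,j}$ for the number of $(s,v)\in\lindexset$ with $\tau_{esv}=j$, we have $Y_e=\prod_{j\in[\ell]}f_j^{m_{e,j}}$, and since the $f_j$ with $j\in[\ell]$ are independent and uniform on $\{-1,1\}$, $\ev Y_e$ equals $1$ if every $m_{e,j}$ is even --- that is, if $\cP_{\{e\}}$ is even --- and equals $0$ otherwise. When $\cP_{\{e\}}$ is even, all exponents $m_{e,j}$ are even and $f_j^2=1$, so $Y_e$ is the constant function $1$ and $Y_e-\ev Y_e$ vanishes identically; when $\cP_{\{e\}}$ is not even, $Y_e-\ev Y_e=Y_e$. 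Hence $\prod_{e\in[p]}(Y_e-\ev Y_e)$ vanishes identically unless $\cP$ is locally odd, and when $\cP$ is locally odd it equals $\prod_{e\in[p]}Y_e=\prod_{(e,s,v)\in\indexset}f_{\tau_{esv}}=\prod_{j\in[\ell]}f_j^{M_j}$, where $M_j=\card{\tau^{-1}(\{j\})}$, and the nonzero $M_j$ are precisely the cardinalities of the classes of $\cP$. The expectation of this last product is $1$ if every $M_j$ is even --- i.e.\ if $\cP$ is even --- and $0$ otherwise. Thus the inner expectation equals $1$ exactly when $\cP$ is both globally even and locally odd, i.e.\ $\cP\in\gelo(p)$, and equals $0$ otherwise. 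Grouping the surviving $\tau$ by the partition they induce and recalling that $\As(\cP,=,\ell)$ is precisely the set of $\tau\in\As([p],=,\ell)$ that induce $\cP$, I obtain $\mom{p}\ssac(f)=\sum_{\cP\in\gelo(p)}\card{\As(\cP,=,\ell)}$; any $\cP\in\gelo(p)$ that is not satisfiable has $\As(\cP,=)=\emptyset$, hence $\As(\cP,=,\ell)=\emptyset$ and contributes nothing, so the sum restricts to $\Con(p)=\gelo(p)\cap\Sat(p)$, which is the claim.

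The conceptual skeleton above is short, but it rests on several bookkeeping steps that must be executed with care and that I expect to be exactly what the technical lemmas of \cref{Paul} handle: the expansion of $\ssac(f)$ as a sum over one-equation assignments, the identification of $\As([p],=,\ell)$ with $p$-tuples of one-equation assignments (in the spirit of \cref{flower}), and the distributive/Fubini manipulations leading to the $\tau$-sum. I expect the main obstacle to be less any single calculation than the precise phrasing of the third paragraph: pinning down that the centered factors $Y_e-\ev Y_e$ force the induced partition to be locally odd at every equation, while the overall expectation simultaneously forces it to be globally even, so that the surviving partitions are exactly the GELO ones --- and then observing that unsatisfiable GELO partitions can be dropped because their assignment sets are empty, which is what upgrades $\gelo(p)$ to $\Con(p)$.
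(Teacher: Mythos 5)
Your proof is correct, and it shares the paper's overall skeleton --- expand $\ssac(f)$ as a sum over one-equation assignments, merge the $p$ choices $\sigma^{(0)},\ldots,\sigma^{(p-1)}$ into a single $\tau\in\As([p],=,\ell)$ (this is exactly what \cref{Nestor}, \cref{Valerian}, and \cref{Vladimir} in \cref{Paul} formalize), classify each $\tau$ by its induced partition, and evaluate expectations via parity of class sizes --- but it routes the centering step differently, and more cleanly. The paper expands $\prod_{e\in[p]}\bigl(\ssac(f)-\ev\ssac(f)\bigr)$ binomially into $2^{p}$ terms indexed by subsets $E\subseteq[p]$, applies \cref{Kirkland} for arbitrary $E$, and then eliminates the unwanted partitions through the alternating sum $\sum_{E\subseteq C}(-1)^{\card{E}}$, which vanishes unless $C=\emptyset$; you instead keep the centered factors $Y_e-\ev Y_e$ intact for each fixed $\tau$ and observe pointwise that if $\cP_{\{e\}}$ is even then $Y_e\equiv 1$ (all exponents are even and each $f_{\tau_{esv}}=\pm 1$ since $\tau$ takes values in $[\ell]$), so that factor vanishes identically, whereas if $\cP_{\{e\}}$ is not even then $\ev Y_e=0$ and the factor is just $Y_e$. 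This makes the survivors exactly the GELO partitions with no inclusion--exclusion at all, and you only need the parity computation of \cref{Kirkland} for singletons and for all of $[p]$. The two small points you leave implicit, and should state in a write-up, are that $\cP_{\{e\}}$ is indeed the partition induced by $\tau\res{\{e\}}$ (that is \cref{flower}) and that the identification of $p$-tuples of one-equation assignments with elements of $\As([p],=,\ell)$ respects the equality and range constraints (that is \cref{Nestor}). Net effect: both arguments prove the same identity, but yours trades the paper's combinatorial cancellation over subsets for a structural observation about the centered factors, which shortens the middle of the proof.
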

\begin{proof}
Let $f=(\ldots,f_0,f_1,f_2,\ldots)$ be a binary sequence of length $\ell$, so $f_j=0$ for $j\not\in[\ell]$.  If $E \subseteq N$ and $\tau \in \As(E)$, then we shall write $f^\tau$ to mean $\prod_{\gamma \in \Eindexset } f_{\tau_\gamma}$.
Now \cite[eq.\ (14)]{Katz16} shows that
\[
\ssac(f)=\sums{s,t,u,v \in \Z\\ s+t=u+v} f_s f_t f_u f_v = \sums{s,t,u,v \in [\ell] \\ s+t=u+v} f_s f_t f_u f_v, 
\]
which, by our notation becomes $\ssac(f)=\sum_{\tau \in \As(\{e\},=,\ell)} f^\tau$ for every $e \in \N$.  Thus,
\[
\mom{p} \ssac(f) = \ev\left(\prod_{e \in [p]} \left[\sum_{\tau^{(e)} \in \As(\{e\},=,\ell)} f^{\tau^{(e)}} - \ev\left(\sum_{\tau^{(e)} \in \As(\{e\},=,\ell)} f^{\tau^{(e)}} \right) \right] \right),
\]
so by the binomial expansion we have
\[
\mom{p} \ssac(f) = \ev\left(\sum_{E \subseteq [p]} \prod_{d \in [p]\smallsetminus E} \, \sum_{\tau^{(d)} \in \As(\{d\},=,\ell)} f^{\tau^{(d)}} (-1)^{\card{E}}  \prod_{e \in E}  \ev\left(\sum_{\tau^{(e)} \in \As(\{e\},=,\ell)} f^{\tau^{(e)}} \!\! \right)\!\! \right),
\]
and then \cref{Valerian} from \cref{Paul} gives
\begin{align*}
\mom{p} \ssac(f)
& = \ev\left(\sum_{E \subseteq [p]} \sum_{\upsilon \in \As([p]\smallsetminus E,=,\ell)} f^\upsilon (-1)^{\card{E}}  \prod_{e \in E}  \ev\left(\sum_{\tau^{(e)} \in \As(\{e\},=,\ell)} f^{\tau^{(e)}} \right) \right) \\
& = \sum_{E \subseteq [p]} (-1)^{\card{E}} \left(\sum_{\upsilon \in \As([p]\smallsetminus E,=,\ell)} \ev(f^\upsilon)\right) \left(\prod_{e \in E}  \sum_{\tau^{(e)} \in \As(\{e\},=,\ell)} \ev\left(f^{\tau^{(e)}} \right)\right),
\end{align*}
to which we apply \cref{Vladimir} of \cref{Paul} to obtain
\[
\mom{p} \ssac(f) = \sum_{E \subseteq [p]} (-1)^{\card{E}} \sum_{\tau \in \As([p],=,\ell)} \ev(f^{\tau\vert_{[p]\smallsetminus E}})  \prod_{e \in E}  \ev\left(f^{\tau\vert_{\{e\}}} \right),
\]
and then sort the tuples in $\As([p],=,\ell)$ according to which partitions they induce to obtain
\begin{align*}
\mom{p} \ssac(f)
&= \sum_{E \subseteq [p]} (-1)^{\card{E}} \sum_{\cP \in \Part(p)} \sum_{\tau \in \As(\cP,=,\ell)} \ev(f^{\tau\vert_{[p]\smallsetminus E}}) \prod_{e \in E}  \ev\left(f^{\tau\vert_{\{e\}}} \right) \\
&= \sum_{\cP \in \Part(p)} \sum_{E \subseteq [p]} (-1)^{\card{E}}  \sum_{\tau \in \As(\cP,=,\ell)} \ev(f^{\tau\vert_{[p]\smallsetminus E}})  \prod_{e \in E}  \ev\left(f^{\tau\vert_{\{e\}}} \right).
\end{align*}
Let $\cP$ be a fixed element of $\Part(p)$ and examine the value of the innermost summation for this fixed $\cP$.
If $\tau \in \As(\cP,=,\ell)$, then Lemma \ref{Kirkland} from \cref{Paul} tells us that
\[
\ev(f^{\tau\vert_{[p]\smallsetminus E}})  \prod_{e \in E}  \ev(f^{\tau\vert_{\{e\}}})
\]
equals $1$ if and only if both $\cP_{[p]\smallsetminus E}$ is even and $\cP_{\{e\}}$ is even for every $e \in E$; otherwise it is $0$.
Thus, we get a nonzero contribution if and only if both $\cP$ is even and $\cP_{\{e\}}$ is even for every $e \in E$.
So we fix an even $\cP \in \Part(p)$, and let $C$ be the set of all $c \in [p]$ such that $ \cP_{\{c\}}$ is even, so that
\[
\sum_{E \subseteq [p]} (-1) ^{\card{E}} \sum_{\tau \in \As(\cP,=,\ell)} \left[\ev(f^{\tau\vert_{[p]\smallsetminus E}}) \prod_{e \in E} \ev(f^{\tau\vert_{\{e\}}}) \right] 
= \sum_{E \subseteq C} (-1) ^{\card{E}} \card{\As(\cP,=,\ell)}.
\]
If $C$ is nonempty, then the sum on the right hand side is always zero; otherwise we get $|\As(\cP,=,\ell)|$.
Thus, we only get nonzero contributions from partitions that are globally even, locally odd, so our expectation calculation becomes
\begin{align*}
\mom{p} \ssac(f) & = \sum_{\cP \in \gelo(p)} |\As(\cP,=,\ell)| \\
& = \sum_{\cP \in \Con(p)} |\As(\cP,=\ell)|,
\end{align*}
where the final step simply eliminates $\cP$ such that $|\As(\cP,=)|=0$ (and thus $|\As(\cP,=,\ell)|=0$).
\end{proof}

\begin{remark}
\cref{Sanri} shows that each central moment is a sum of cardinalities, so it must be nonnegative.  Eventually in \cref{Lynn} we shall give the necessary and sufficient conditions for a central moment to be strictly positive.
\end{remark}

\begin{remark}
Since there is no GELO partition of $[1]\times\lindexset$, we know that $\Con(1)=\emptyset$, so then \cref{Sanri} correctly indicates that the first central moment is $0$.
\end{remark}

\section{Moments from isomorphism classes of partitions}\label{Idelphonse}

The exact formula for central moments in the previous section (cf.\ \cref{Sanri}) typically involves many similar partitions, so here we devise an equivalence relation (via a group action) to organize these partitions into classes.
This makes the moment calculations much easier, and produces a formula in \cref{Sanria}, which is used for the exact determination of variance, skewness, and kurtosis in Sections \ref{Veronica}--\ref{Curtis}.

Let $p, \ell \in \N $ be fixed and take $ \cP , \cQ \in \Part(p) $.
We wish to find the conditions that  guarantee $|\As(\cP,=,\ell)|=|\As(\cQ,=,\ell)|$.  

For example, suppose that $\cP, \cQ \in \Part(2)$ are given by
\begin{align*}
\cP & = \{ \{(000),(001), (100), (101)  \}, \{ (010), (110) \}, \{ (011), (111)  \} \}  \\
\cQ & = \{ \{(010),(011), (100), (101)  \}, \{ (000), (111) \}, \{ (001), (110)  \} \}.
\end{align*}
(Notice that $\cP$ here is the same partition $\cP$ defined in \cref{Fiona} and elaborated in the further examples of \cref{Nellie}.)
Let $\pi$ be the permutation of $\twoindexset$ with
\begin{equation}\label{Petra}
\pi(e,s,v)=\begin{cases}
(1-e,1-s,v) & \text{if $(e,s)=(0,0)$,} \\
(1-e,1-s,1-v) & \text{if $(e,s)=(0,1)$,} \\
(1-e,s,v) & \text{if $(e,s)=(1,0)$,} \\
(1-e,s,1-v) & \text{if $(e,s)=(1,1)$.}
\end{cases}
\end{equation}
We shall now show that the map $\tau \mapsto \tau\circ\pi$ is a bijection from $\As(\cP,=,\ell)$ to $\As(\cQ,=,\ell)$, which shows that $\card{\As(\cP,=,\ell)}=\card{\As(\cQ,=,\ell)}$.
If we start with $\tau \in \As(\cP,=,\ell)$, then there are distinct $A,B,C \in [\ell]$ such that
\begin{align*}
A & = \tau_{000} = \tau_{001} = \tau_{100} = \tau_{101}\\
B & = \tau_{010} = \tau_{110} \\
C & = \tau_{011} = \tau_{111},
\end{align*}
and 
\begin{align*}
A + A = \tau_{000} + \tau_{001} & = \tau_{010} + \tau_{011} = B+C\\
A + A = \tau_{100} + \tau_{101} & = \tau_{110} + \tau_{111} = B+C.
\end{align*}
Now set $\upsilon=\tau\circ\pi$; then 
\begin{align*}
(\upsilon_{000}, \upsilon_{001},\upsilon_{010},\upsilon_{011},\upsilon_{100},\upsilon_{101},\upsilon_{110},\upsilon_{111}) & =(\tau_{110},\tau_{111},\tau_{101},\tau_{100},\tau_{000},\tau_{001},\tau_{011},\tau_{010}) \\
& = (B,C,A,A,A,A,C,B),
\end{align*}
which has $\upsilon_{esv}=\upsilon_{e's'v'}$ if and only if $(e,s,v) \equiv (e',s',v') \pmod{\cQ}$ and also satisfies
\begin{align*}
B + C = \upsilon_{000} + \upsilon_{001} & = \upsilon_{010} + \upsilon_{011} = A+A\\
A + A = \upsilon_{100} + \upsilon_{101} & = \upsilon_{110} + \upsilon_{111} = C+B,
\end{align*}
so that $\tau\circ\pi \in \As(\cQ,=,\ell)$.
The effect of composing on the right with $\pi$ is to permute the indices of the assignment in a way that the system of equations for $\tau$ is transformed into a system of equations for $\upsilon$ by (i) first transposing the two equations, (ii) then transposing the sides of one of the equations, and (iii) then, for each of the two equations, transposing the two summands on the right-hand side.
Moving items in our system of equations in this manner will always yield a system that is still satisfied.
One can similarly show that if $\phi$ is an arbitrary element of $\As(\cQ,=,\ell)$, then $\phi \circ \pi^{-1} \in \As(\cP,=,\ell)$, and then one sees that the map $\tau \mapsto \tau \circ \pi$ from $\As(\cP,=,\ell)$ to $\As(\cQ,=,\ell)$ and the map $\phi \mapsto \phi \circ \pi^{-1}$ in the opposite direction are inverses of each other, and so $|\As(\cP,=,\ell)|=|\As(\cQ,=,\ell)|$.
Now notice that $\cP=\pi(\cQ)$.
This is no coincidence; we shall eventually show that $\{\tau\circ\pi : \tau \in \As(\pi(\cQ),=,\ell)\} = \As(\cQ,=,\ell)$ if $\cQ$ is any partition of $\indexset$ and $\pi$ is a permutation of $\indexset$ that respects the underlying structure of our system of equations.
This $\pi$ cannot be an arbitrary permutation of $\indexset$: for example, it is not hard to show that if $\pi$ is the transposition of $(0,0,0)$ and $(0,1,0)$, then $\As(\cQ,=,3)\not=\emptyset$ (one element is the $\upsilon$ displayed above with $A=1$, $B=0$, and $C=2$), but $\As(\pi(\cQ),=,\ell)=\emptyset$.

This leads us to the notion of isomorphic partitions.
To define this notion, we must first speak of groups of permutations of $\indexset$ and their action on $\Part(p)$ and $\As([p])$.
 
If $A$ is any set, we use $S_A$ to denote the group of all permutations of $A$, and for $p \in \N$ we use $S_p$ as a shorthand for $S_{[p]}$.
If $\pi \in S_{\indexset}$ and $P \subseteq \indexset$ and $\cQ$ is a set of subsets of $\indexset$, then we let $\pi$ act on $P$ by setting $\pi(P)=\{\pi(e,s,v): (e,s,v) \in P\}$ and we let $\pi$ act on $\cQ$ by setting $\pi(\cQ) = \{\pi(Q): Q \in \cQ\}$.  This gives an action of $\pi$ on $\Part(p)$.

If $\pi \in S_\indexset$, then the map $\pi^*$ is the permutation of $\As([p])$ given by $\tau\mapsto\tau\circ\pi$; the inverse of $\pi^*$ is $(\pi^{-1})^*$.
Thus $(\pi^*(\tau))_{e,s,v}=(\tau\circ\pi)_{e,s,v}=\tau_{\pi(e,s,v)}$ for each $(e,s,v)\in\indexset$.
For example, if $\pi \in S_{\twoindexset}$ is the permutation defined in \eqref{Petra}, then we can restrict the domain and codomain of $\pi^*$ to $\As(\cP,=,\ell)$ and $\As(\cQ,=,\ell)$, respectively, thus demonstrating $\card{\As(\cP,=,\ell)}=\card{\As(\cQ,=,\ell)}$.
However, the full group $S_\indexset$ of permutations of our indices does not respect the equations $\tau_{e00}+\tau_{e01}=\tau_{e10}+\tau_{e11}$ that are defining conditions for $\tau \in \As([p],=)$, so $\As([p],=)$ is not stable under this action.

We represent the elements of the wreath product $S_2 \Wr_{[2]} S_2$ as pairs of the form $\delta=\left((\digamma_0,\digamma_1),\sigma\right)$, where $\digamma_0,\digamma_1,\sigma \in S_2$ and $\delta$ acts on $(s,v) \in \lindexset$ by the rule
\[
\delta(s,v)=\left((\digamma_0,\digamma_1),\sigma\right)(s,v)=\left(\sigma(s),\digamma_{\sigma(s)}(v)\right) .
\]
This $S_2 \Wr_{[2]} S_2$ is isomorphic to the dihedral group of order $8$, and is a building block of the group of interest to us, which we now define.
\begin{notation}[$\cWp$]
For $p \in \N$, we use $\cWp$ to denote the wreath product $(S_2 \Wr_{[2]} S_2) \Wr_{[p]} S_p$, whose elements are pairs of the form $\pi=\left((\delta_0,\ldots,\delta_{p-1}),\epsilon\right)$, with $\delta_0,\ldots,\delta_{p-1} \in S_2 \Wr_{[2]} S_2$ and $\epsilon \in S_p$, where $\pi$ acts on the $(e,s,v) \in \indexset$ by the rule
\[
\pi(e,s,v)=\left((\delta_0,\ldots,\delta_{p-1}),\epsilon\right)(e,s,v)=\left(\epsilon(e),\delta_{\epsilon(e)}(s,v)\right).
\]
We can make this even more explicit if for each $j \in [p]$, we write
\[
\delta_j=\left((\digamma_{j,0},\digamma_{j,1}),\sigma_j\right),
\]
and then
\[
\pi=\left(\left(\left((\digamma_{0,0},\digamma_{0,1}),\sigma_0\right),\ldots,\left((\digamma_{p-1,0},\digamma_{p-1,1}),\sigma_{p-1}\right)\right),\epsilon\right),
\]
with
\[
\pi(e,s,v)=\left(\epsilon(e),\sigma_{\epsilon(e)}(s),\digamma_{\epsilon(e),\sigma_{\epsilon(e)}}(v)\right).
\]
We say that {\it $\pi$ uses the permutation $\epsilon$ to permute the equations, then the permutation $\sigma_e$ to permute the sides of equation $e$ for each $e \in [p]$, and then the permutation $\digamma_{e,s}$ to permute the places on side $s$ of equation $e$ for each $(e,s) \in [p]\times [2]$}.
\end{notation}
Since each element $\pi \in \cWp$ permutes $\indexset$ by this action, we identify each $\pi \in \cWp$ with the permutation in $S_\indexset$ that permutes $\indexset$ the same way.
Thus, $\cWp$ is regarded as a subgroup of $S_\indexset$ of order
\begin{equation}\label{Sidney}
|\cWp|=|S_p| \cdot |S_2 \Wr_{[2]} S_2|^p = p! 8^p = p! 2^{3 p}.
\end{equation}
Now we show that our subgroup $\cWp$ preserves important properties within $\As([p])$.
\begin{lemma} \label{Cactus}
Let $p, \ell \in \N$ and suppose that $\pi \in \cWp$ and $\cP \in \Part(p)$.  Then
\begin{enumerate}[label=(\roman*)]
\item\label{BalloonCactus} $\pi^*(\As([p]))=\As([p])$,
\item\label{Saguaro} $\pi^*(\As([p],\ell))=\As([p],\ell)$,
\item\label{PricklyPear} $\pi^*(\As([p],=))=\As([p],=)$,
\item\label{BarrelCactus} $\pi^*(\As([p],=,\ell))=\As([p],=,\ell)$,
\item\label{CrownCactus} $\pi^*(\As(\pi(\cP))=\As(\cP)$,
\item\label{FeatherCactus} $\pi^*(\As(\pi(\cP),\ell))=\As(\cP,\ell)$,
\item\label{MoonCactus} $\pi^*(\As(\pi(\cP),=))=\As(\cP,=)$, and
\item\label{StarCactus} $\pi^*(\As(\pi(\cP),=,\ell))=\As(\cP,=,\ell)$.
\end{enumerate}
Therefore the action of $\cWp$ on $\As([p])$ can be restricted to any of the following subsets: $\As([p],\ell)$, $\As([p],=)$, and $\As([p],=,\ell)$.
\end{lemma}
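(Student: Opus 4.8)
My strategy rests on two facts already in hand: $\pi^*$ is a permutation of $\As([p])$ with inverse $(\pi^{-1})^*$, and $\cWp$ is a group, so $\pi^{-1}\in\cWp$ whenever $\pi\in\cWp$. These reduce each of the eight equalities to a single inclusion. For \ref{Saguaro}, \ref{PricklyPear}, \ref{BarrelCactus} it suffices to prove $\pi^*(X)\subseteq X$ for every $\pi\in\cWp$, where $X$ is the relevant subset: applying this with $\pi^{-1}$ in place of $\pi$ and then composing with $\pi^*$ forces $X\subseteq\pi^*(X)$, hence equality. For the partition statements it suffices to prove $\pi^*(\As(\pi(\cP)))\subseteq\As(\cP)$ for every $\pi\in\cWp$ and $\cP\in\Part(p)$; running this with $\pi^{-1}$ and $\pi(\cP)$ in place of $\pi$ and $\cP$, noting that $\pi\mapsto$ (its action on $\Part(p)$) is a group action so $\pi^{-1}(\pi(\cP))=\cP$, and composing with $\pi^*$ gives the reverse inclusion. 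Item \ref{BalloonCactus} needs no work, as $\pi^*$ is by definition a permutation of $\As([p])$, and \ref{Saguaro} holds because $\pi$ is a bijection of $\indexset$, so $\tau$ and $\pi^*\tau=\tau\circ\pi$ have the same image in $\N$.

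The crux is \ref{PricklyPear}. Fix $\tau\in\As([p],=)$ and $e\in[p]$, and for $s\in[2]$ put $b_s=\sigma_{\epsilon(e)}(s)$, so that the fully expanded action reads $\pi(e,s,v)=(\epsilon(e),b_s,\digamma_{\epsilon(e),b_s}(v))$. Then
\[
(\pi^*\tau)_{es0}+(\pi^*\tau)_{es1}=\tau_{\epsilon(e),b_s,\digamma_{\epsilon(e),b_s}(0)}+\tau_{\epsilon(e),b_s,\digamma_{\epsilon(e),b_s}(1)}=\tau_{\epsilon(e),b_s,0}+\tau_{\epsilon(e),b_s,1},
\]
the last step because $\digamma_{\epsilon(e),b_s}\in S_2$ merely permutes $\{0,1\}$ and addition is commutative; thus this is the total of the two places on side $b_s$ of equation $\epsilon(e)$ under $\tau$. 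Since $\sigma_{\epsilon(e)}\in S_2$, we have $\{b_0,b_1\}=\{0,1\}$, so the left side ($s=0$) and right side ($s=1$) of equation $e$ for $\pi^*\tau$ are precisely the two sides of equation $\epsilon(e)$ for $\tau$, in one order or the other; these agree because $\tau\in\As([p],=)$, so $\pi^*\tau\in\As([p],=)$. Then \ref{BarrelCactus} follows from \ref{Saguaro} and \ref{PricklyPear} because the bijection $\pi^*$ commutes with intersection and $\As([p],=,\ell)=\As([p],\ell)\cap\As([p],=)$.

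For the partition statements, the one combinatorial observation needed is that the classes of $\pi(\cP)$ are exactly $\{\pi(P):P\in\cP\}$, so $\beta\equiv\gamma\pmod{\pi(\cP)}$ if and only if $\pi^{-1}(\beta)\equiv\pi^{-1}(\gamma)\pmod{\cP}$. Hence, for $\tau\in\As(\pi(\cP))$ and $\beta,\gamma\in\indexset$, the equality $(\pi^*\tau)_\beta=(\pi^*\tau)_\gamma$ unwinds to $\tau_{\pi(\beta)}=\tau_{\pi(\gamma)}$, which by $\tau\in\As(\pi(\cP))$ is equivalent to $\pi(\beta)\equiv\pi(\gamma)\pmod{\pi(\cP)}$, and that is equivalent to $\beta\equiv\gamma\pmod{\cP}$ by the observation; so $\pi^*\tau$ induces $\cP$, giving $\pi^*(\As(\pi(\cP)))\subseteq\As(\cP)$ and, via the reduction above, \ref{CrownCactus}. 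Items \ref{FeatherCactus}, \ref{MoonCactus}, \ref{StarCactus} then follow by intersecting \ref{CrownCactus} with \ref{Saguaro}, \ref{PricklyPear}, \ref{BarrelCactus}, respectively, using once more that $\pi^*$ commutes with intersection together with the definitional identities $\As(\cP,\ell)=\As(\cP)\cap\As([p],\ell)$, $\As(\cP,=)=\As(\cP)\cap\As([p],=)$, and $\As(\cP,=,\ell)=\As(\cP)\cap\As([p],=,\ell)$; the final sentence of the lemma is then just a restatement of \ref{Saguaro}, \ref{PricklyPear}, \ref{BarrelCactus}. I expect the only real obstacle to be the three-coordinate bookkeeping in \ref{PricklyPear}: one must track scrupulously that $\epsilon$ moves only the equation label, each $\sigma_e$ moves only the side label within equation $e$, and each $\digamma_{e,s}$ moves only the place label within side $s$ of equation $e$ — which is exactly why $\cWp$, and not all of $S_\indexset$, is the right symmetry group here.
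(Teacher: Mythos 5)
Your proposal is correct and follows essentially the same route as the paper: reduce each equality to a single inclusion by invoking $\pi^{-1}$ (and $\pi(\cP)$ in place of $\cP$), verify \ref{Saguaro} via images, \ref{CrownCactus} via the chain of equivalences through $\tau_{\pi(\gamma)}$, and obtain the remaining parts by intersecting. The only difference is cosmetic: for \ref{PricklyPear} the paper packages your direct side-and-place computation as a commutative diagram with the sign character $\phi\colon S_2\to\{\pm 1\}$ acting on $\Z^p$ via $\psi(\tau)=(\tau_{e00}+\tau_{e01}-\tau_{e10}-\tau_{e11})_{e\in[p]}$, but the underlying argument is identical to yours.
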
 
\begin{proof}
Each part claims that there is an equality of sets, but we need only prove that the set on the left is contained in the set on the right, and then we can apply what we have proved with $\pi^{-1}$ in place of $\pi$ and $\pi(\cP)$ in place of $\cP$ to prove the other containment.  The first statement, \ref{BalloonCactus}, has already been made clear: $\pi^*$ is a permutation of $\As([p])$.

To prove \ref{Saguaro}, note that if $\tau \in \As([p],\ell)$, then $\tau(e,s,v) \in [\ell]$ for every $(e,s,v) \in \indexset$, and so of course $(\pi^*(\tau))(e,s,v)=(\tau\circ\pi)(e,s,v) \in [\ell]$ for every $(e,s,v) \in \indexset$.

To prove \ref{PricklyPear}, we define $\psi: \As([p]) \to \Z^p $ by $\psi(\tau) = (\tau_{e00}+\tau_{e01}-\tau_{e10}-\tau_{e11})_{e \in [p]}$.
Notice that if $\tau\in\As([p])$, then $\psi(\tau)=0$ if and only if $\tau\in \As([p],=)$.
Let $ \phi $ be the nontrivial homomorphism from $S_2$ onto $\Z^\times=(\{\pm 1\}, \cdot)$.
Now we define an action of $\cWp$ on $\Z^p$: if $\pi\in\cWp$ with
\[ \pi=\left(\left(\left((\digamma_{0,0},\digamma_{0,1}),\sigma_0\right),\ldots,\left((\digamma_{p-1,0},\digamma_{p-1,1}),\sigma_{p-1}\right)\right),\epsilon\right),\]
with $\epsilon \in S_p$ and all $\sigma_e$ and $\digamma_{e,s}$ in $S_2$, and if $u = (u_e)_{e \in [p] } \in \Z^p$, then $\pi \cdot u = (\phi(\sigma_{\epsilon(e)}) u_{\epsilon(e)})_{e \in [p]}$.
Notice that if $\pi \in \cWp$ and $\tau \in \As([p])$, then $\pi \cdot \psi(\tau)=\psi(\pi^*(\tau)))$.
That is, the following diagram commutes:
\[
\begin{CD}
\As([p]) @> \pi^* >> \As([p])\\
@VV \psi V @VV \psi V\\
\Z^p
@> \pi \cdot  >> \Z^p.
\end{CD}
\]
Recall that for $\tau\in\As([p])$, we have $\tau\in \As([p],=)$ if and only if $\psi(\tau)=0$.
Moreover, $\psi(\tau)=0 $ if and only if $\pi \cdot \psi(\tau) = 0$ for all $\pi \in \cWp$.
By the commutativity of our diagram $\pi \cdot \psi(\tau) = 0 $ for all $\pi \in \cWp$ if and only if $\psi(\pi^*(\tau))=0$ for all $\pi \in \cWp$, i.e., $\pi^*(\tau) \in \As([p],=)$ for all $\pi \in \cWp$.

Now \ref{BarrelCactus} follows from \ref{Saguaro} and \ref{PricklyPear} because $\As([p],=,\ell)=\As([p],\ell)\cap\As([p],=)$.

To prove \ref{CrownCactus}, suppose that $\tau \in \As(\pi(\cP))$.
Suppose $\gamma, \gamma' \in \indexset$.
Then $(\pi^*(\tau))_\gamma = (\pi^*(\tau))_{\gamma'}$ if and only if $\tau_{\pi(\gamma)} = \tau_{\pi(\gamma')}$.
Since $\tau\in \As(\pi(\cP))$, we have $\tau_{\pi(\gamma)} = \tau_{\pi(\gamma')}$ if and only if $\pi(\gamma)\equiv \pi(\gamma')  \pmod{\pi(\cP)}$.
Now, $\pi(\gamma) \equiv \pi(\gamma')  \pmod{\pi(\cP)}$ if and only if $\gamma \equiv \gamma' \pmod{\cP}$.
This shows that $\cP$ is the partition induced by $\pi^*(\tau)$, that is $\pi^*(\tau) \in \As(\cP)$.

Then \ref{FeatherCactus}, \ref{MoonCactus}, and \ref{StarCactus} respectively follow from \ref{CrownCactus} in conjunction with \ref{Saguaro}, \ref{PricklyPear}, \ref{BarrelCactus}, respectively.
\end{proof}

Recall that $\cWp$ is a subgroup of $S_\indexset$, and therefore it acts on elements of $\indexset$, and by extension on subsets of $\indexset$, that is, $\pi(P)=\{\pi(e,s,v): (e,s,v) \in P\}$ for $\pi\in\cWp$ and $P \subseteq \indexset$.
By further extension, $\cWp$ acts on sets of subsets of $\indexset$, that is, $\pi(\cQ)=\{\pi(Q): Q \in \cQ\}$ when $\pi\in\cWp$ and $\cQ$ is a set of subsets of $\indexset$.
These give actions of $\cWp$ on $\indexset$ and on $\Part(p)$.
\begin{definition}[Isomorphic partitions, isomorphism class]
Let $ p \in \N $ and $ \cP, \cQ \in \Part(p)$.  Then we say that $ \cP $ and $\cQ$ are {\it isomorphic} and write $\cP \cong \cQ$ to mean that there exists $ \pi \in \cWp $ such that $\cQ = \pi(\cP)$.  The {\it isomorphism class} of $\cP$ is the set of all partitions that are isomorphic to $\cP$.
\end{definition}
Since $\cWp$ is a group, the isomorphism relation is clearly an equivalence relation.   
We introduce the stabilizer of a partition under the action of $\cWp$; this will help us to determine the size of isomorphism classes.

\begin{notation}[$\Stab_{\cWp}(\cP)$]
Let $p \in \N$ and $ \cP \in \Part(p)$.
Then we use $ \Stab_{\cWp}(\cP)$ to denote $\{\pi \in \cWp: \pi(\cP)=\cP\}$, the stabilizer in $\cWp$  of $\cP$ under the action of $\cWp$ on $\Part(p)$.
\end{notation}
If $ p \in \N$, $\cP \in \Part(p)$, and $\fP$ is the isomorphism class of $\cP$, then the orbit of $ \cP $ under the action of $ \cWp $ is $ \fP $. Hence, by the orbit-stabilizer formula and \eqref{Sidney} we arrive at the following lemma.
\begin{lemma}\label{Stanley}
If $p \in \N$, $ \cP \in \Part(p)$, and $\fP$ is the isomorphism class of $\cP$, then 
\[|\fP| = \frac{|\cWp|}{|\Stab_{\cWp}(\cP)|}=\frac{p! 2^{3 p}}{|\Stab_{\cWp}(\cP)|}.\]
\end{lemma}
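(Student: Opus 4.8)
The plan is to recognize that \cref{Stanley} is an immediate consequence of the orbit--stabilizer theorem, once we identify the isomorphism class $\fP$ with an orbit of the $\cWp$-action on $\Part(p)$. First I would recall the definition of isomorphic partitions: $\cP\cong\cQ$ means precisely that $\cQ=\pi(\cP)$ for some $\pi\in\cWp$. Thus the isomorphism class $\fP$ of $\cP$ equals $\{\cQ\in\Part(p):\cP\cong\cQ\}=\{\pi(\cP):\pi\in\cWp\}$, which is exactly the orbit of $\cP$ under the action of $\cWp$ on $\Part(p)$ described just before the lemma. Here I would note that, because $\cWp$ is a group (closed under composition and inverses), the relation $\cong$ is genuinely an equivalence relation, so the orbit really does coincide with the equivalence class $\fP$; this is the observation already recorded in the sentence preceding the lemma.

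Second, I would invoke the orbit--stabilizer theorem for the finite group $\cWp$ acting on the finite set $\Part(p)$: for the fixed $\cP$, the map sending a left coset $\pi\,\Stab_{\cWp}(\cP)$ to $\pi(\cP)$ is a well-defined bijection from the set of left cosets of $\Stab_{\cWp}(\cP)$ in $\cWp$ onto the orbit $\fP$. Consequently $|\fP|=[\cWp:\Stab_{\cWp}(\cP)]=|\cWp|/|\Stab_{\cWp}(\cP)|$. Finally I would substitute the order $|\cWp|=p!\,2^{3p}$ computed in \eqref{Sidney} to obtain the displayed formula.

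There is essentially no genuine obstacle in this argument; the one point meriting a word of care is that $\cWp$ really does act on $\Part(p)$, i.e.\ that $\pi(\cP)$ is again a partition of $\indexset$ and that the assignment $\pi\mapsto(\cP\mapsto\pi(\cP))$ is a group action. But this was built into the setup: each $\pi\in\cWp$ is identified with a permutation of $\indexset$, a permutation carries partitions to partitions, and composition of these permutations implements the group law of $\cWp$. So the ``hard part'' is merely the bookkeeping already completed in the preceding paragraphs, and the proof itself is a one-line application of the orbit--stabilizer formula together with \eqref{Sidney}.
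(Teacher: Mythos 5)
Your proof is correct and matches the paper's reasoning exactly: the paper likewise observes that the isomorphism class $\fP$ is the orbit of $\cP$ under $\cWp$ and then applies the orbit--stabilizer formula together with the order computation in \eqref{Sidney}. No gaps.
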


\begin{example}\label{Carl}
Later in \cref{Isidore}, we shall find that there are exactly two isomorphism classes of partitions in $\Con(2)$, with the partitions
\begin{align*}
\cP_1 & = \Big\{\{(0,0,0),(0,0,1),(1,0,0),(1,0,1)\}, \{(0,1,0),(1,1,0)\}, \{(0,1,1),(1,1,1)\}\Big\} \text{ and }\\
\cP_2 & = \Big\{\{(0,0,0),(1,0,0)\}, \{(0,0,1),(1,0,1)\}, \{(0,1,0),(1,1,0)\}, \{(0,1,1),(1,1,1)\}\Big\}
\end{align*}
being representatives of these two classes.  (Notice that $\cP_1$ here is the same as partition $\cP$ defined in \cref{Fiona} and elaborated in the further examples of Section \ref{Nellie}.)
We let $\fC_1$ (resp., $\fC_2$) be the isomorphism class containing $\cP_1$ (resp., $\cP_2$).
We use \cref{Stanley} here to show that $|\fC_1|=|\fC_2|=2^3$.

\cref{Stanley} tells us that $|\fC_j|=2^7/|\Stab(\cP_j)|$.
Thus, proving our claims about cardinalities is tantamount to proving that $|\Stab(\cP_1)|=|\Stab(\cP_2)|=2^4$.
Throughout this example, we write an element $\pi\in\cWtwo$ as
\[
\pi=\left(\left(\left((\digamma_{0,0},\digamma_{0,1}),\sigma_0\right),\ldots,\left((\digamma_{p-1,0},\digamma_{p-1,1}),\sigma_{p-1}\right)\right),\epsilon\right),
\]
where $\epsilon$, $\sigma_e$, and $\digamma_{e,s}$ are in $S_2$ for every $(e,s) \in [p]\times [2]$ and where
\[
\pi(e,s,v)=\left(\epsilon(e),\sigma_{\epsilon(e)}(s),\digamma_{\epsilon(e),\sigma_{\epsilon(e)}}(v)\right)
\]
for every $(e,s,v) \in \indexset$.

One can check that $\pi$ stabilizes $\cP_1$ if and only if both $\sigma_0$ and $\sigma_1$ are the identity permutation and $\digamma_{0,1}=\digamma_{1,1}$,  Thus, one is free to pick any values for $\epsilon$, $\digamma_{0,0}$, $\digamma_{1,0}$, and $\digamma_{0,1}$, and then the rest are determined.  So the stabilizer of $\cP_1$ in $\cWtwo$ has $2^4$ elements.

Similarly one can check that $\pi$ stabilizes $\cP_2$ if and only if $\sigma_0=\sigma_1$, $\digamma_{0,0}=\digamma_{1,0}$, and $\digamma_{0,1}=\digamma_{1,1}$.  Thus, one is free to pick any values for $\epsilon$, $\sigma_0$, $\digamma_{0,0}$, and $\digamma_{0,1}$, and then the rest are determined.  So the stabilizer of $\cP_2$ in $\cWtwo$ has $2^4$ elements.
\end{example}

In the next few results, we show that our group action applied to partitions preserves the properties of global evenness, local oddness, satisfiability, and contributoriness.
Recall that a permutation $ \pi \in \cWp $ acts on $\As([p],\ell)$ by composition on the right.
\begin{lemma}\label{Nicholas}
Let $ p \in \N $, let $ \pi=\left((\delta_0,\ldots,\delta_{p-1}),\epsilon\right) \in \cWp$, let $ \cP \in \Part(p) $, and let $e\in [p] $.
Then $\pi(\cP)_{\{\epsilon(e)\}}=\pi(\cP_{\{e\}})$, which has the same type as $\cP_{\{e\}}$.
\end{lemma}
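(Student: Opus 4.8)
The plan is to reduce the whole statement to one structural observation about $\pi$: acting on $\indexset$, the permutation $\pi$ restricts to a bijection from $\eindexset$ onto $\{\epsilon(e)\}\times\lindexset$. This is immediate from the action rule $\pi(e,s,v)=\left(\epsilon(e),\delta_{\epsilon(e)}(s,v)\right)$, whose first coordinate depends on $(e,s,v)$ only through $e$ (via $\epsilon$); since $\pi$ is a bijection of $\indexset$ and the two blocks $\eindexset$ and $\{\epsilon(e)\}\times\lindexset$ each have four elements, the inclusion $\pi(\eindexset)\subseteq\{\epsilon(e)\}\times\lindexset$ is in fact an equality.

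First I would unwind both sides of the claimed identity using \cref{Judy} and the definition of the action of $\pi$ on a set of subsets. On the left, $\pi(\cP)_{\{\epsilon(e)\}}=\{\pi(P)\cap(\{\epsilon(e)\}\times\lindexset):P\in\cP\}\smallsetminus\{\emptyset\}$, and on the right, $\pi(\cP_{\{e\}})=\{\pi(P\cap(\eindexset)):P\in\cP\}\smallsetminus\{\emptyset\}$. Hence it suffices to prove, for an arbitrary $P\subseteq\indexset$, that $\pi(P)\cap(\{\epsilon(e)\}\times\lindexset)=\pi\left(P\cap(\eindexset)\right)$. With the block identity $\pi(\eindexset)=\{\epsilon(e)\}\times\lindexset$ in hand, this is just the elementary fact that an injective map commutes with intersections, namely $\pi(A)\cap\pi(B)=\pi(A\cap B)$ with $A=P$ and $B=\eindexset$. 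Since $\pi(\emptyset)=\emptyset$, the deletions of the empty set match up on the two sides; this also confirms that $\pi(\cP_{\{e\}})$ is a bona fide partition of $\{\epsilon(e)\}\times\lindexset$, being the image under a bijection of the partition $\cP_{\{e\}}$ of $\eindexset$.

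For the claim about type, I would use once more that $\pi$ restricts to a bijection $\eindexset\to\{\epsilon(e)\}\times\lindexset$: it carries each class $Q\in\cP_{\{e\}}$ onto $\pi(Q)$ bijectively, so $\card{\pi(Q)}=\card{Q}$, and since distinct classes of $\cP_{\{e\}}$ are disjoint and $\pi$ is injective, their images are distinct. Thus $Q\mapsto\pi(Q)$ is a cardinality-preserving bijection from $\cP_{\{e\}}$ onto $\pi(\cP_{\{e\}})=\pi(\cP)_{\{\epsilon(e)\}}$, which is exactly the assertion that $\cP_{\{e\}}$ and $\pi(\cP)_{\{\epsilon(e)\}}$ have the same type $\ms{\card{Q}:Q\in\cP_{\{e\}}}$.

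The argument is essentially bookkeeping; the only thing to watch is keeping the two roles of the symbol $\pi$ straight --- as a permutation of the points of $\indexset$ and as the induced map on subsets --- together with the observation that $\pi$ genuinely interchanges the coordinate blocks $\eindexset$ and $\{\epsilon(e)\}\times\lindexset$, rather than merely mapping one into the other. I do not anticipate a substantive obstacle.
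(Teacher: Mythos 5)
Your proposal is correct and follows essentially the same route as the paper: both hinge on the observation that $\pi(\eindexset)=\{\epsilon(e)\}\times\lindexset$, deduce that restriction commutes with the action of $\pi$ (the paper records this as a commutative diagram, you verify it directly via $\pi(P)\cap\pi(\eindexset)=\pi(P\cap(\eindexset))$), and conclude the type statement from $\card{\pi(Q)}=\card{Q}$. Your write-up simply makes explicit the bookkeeping the paper leaves to the diagram.
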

\begin{proof}
For each $k \in [p]$, we define $ \Res_k : \Part(p) \to \Part(\{k\})$ by $\Res_k(\cP) = \cP_{\{k\}}$.
Notice that $ \pi(\eindexset) =  \{\epsilon(e) \}\times \lindexset$ because $\pi \in \cWp$.
This leads to the following commutative diagram:
\[
\begin{CD}
\Part(p) @>\pi>> \Part(p) \\
@VV\Res_e V @VV\Res_{\epsilon(e)}V\\
\Part(\{e\})
@>\pi>>  \Part(\{\epsilon(e)\}).
\end{CD}
\]
That is, $\pi(\cP)_{\{\epsilon(e)\}}=\pi(\cP_{\{e\}})$. 
Since $ \pi $ permutes the elements in $ \indexset $, we have $|P|=|\pi(P)| $ for $ P \in \cP_{\{e\}} $.
It follows that $ \pi(\cP_{\{e\}})$ is of the same type as $ \cP_{\{e\}} $.
\end{proof}
The globally even, locally odd property is stable under our action.
\begin{lemma}\label{Pineapple}
Let $p\in \N$ and $ \cP, \cQ \in \Part(p)$ with $ \cP \cong \cQ $.  Then $ \cP \in \gelo(p) $ if and only if $\cQ \in \gelo(p)$.
\end{lemma}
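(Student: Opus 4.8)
The plan is to show that the action of $\cWp$ on $\Part(p)$ preserves both halves of the GELO condition — global evenness and local oddness — using only that every $\pi \in \cWp$ is identified with a permutation of $\indexset$ together with the structural bookkeeping already recorded in \cref{Nicholas}. Since isomorphism of partitions is a symmetric relation, it suffices to fix $\pi=\left((\delta_0,\ldots,\delta_{p-1}),\epsilon\right) \in \cWp$ with $\cQ = \pi(\cP)$ and prove that $\cP \in \gelo(p)$ implies $\cQ \in \gelo(p)$; the reverse implication then follows at once by applying the same argument with $\pi^{-1}$ in place of $\pi$ (and $\cP = \pi^{-1}(\cQ)$).

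For the global part: because $\pi$ permutes $\indexset$, it restricts to a bijection from each class $P \in \cP$ onto $\pi(P) \in \cQ$, so $|P| = |\pi(P)|$ for every $P \in \cP$, and hence $\cP$ and $\cQ = \pi(\cP)$ have the same type. In particular, $\cP$ is even if and only if $\cQ$ is even. For the local part: \cref{Nicholas} gives, for each $e \in [p]$, that $\cQ_{\{\epsilon(e)\}} = \pi(\cP)_{\{\epsilon(e)\}} = \pi\!\left(\cP_{\{e\}}\right)$, and that this partition has the same type as $\cP_{\{e\}}$; consequently $\cP_{\{e\}}$ is even if and only if $\cQ_{\{\epsilon(e)\}}$ is even. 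Since $\epsilon \in S_p$ is a bijection of $[p]$, the statement ``$\cP_{\{e\}}$ is not even for every $e \in [p]$'' is equivalent to the statement ``$\cQ_{\{e'\}}$ is not even for every $e' \in [p]$.'' Combining the two observations, $\cP$ is even with all single-equation restrictions non-even precisely when $\cQ$ is, i.e., $\cP \in \gelo(p)$ if and only if $\cQ \in \gelo(p)$.

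I do not anticipate any real obstacle here: the argument is a direct bookkeeping exercise once \cref{Nicholas} is in hand. The only point requiring a little care is transporting the local (per-equation) conditions correctly through the equation-permuting component $\epsilon$ of $\pi$, and that relabeling is exactly what \cref{Nicholas} was set up to control, so this step should be routine.
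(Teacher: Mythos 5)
Your proposal is correct and follows essentially the same route as the paper: it establishes global evenness preservation from $|P|=|\pi(P)|$ and local oddness preservation from \cref{Nicholas} together with the fact that $\epsilon$ permutes $[p]$. The only cosmetic difference is that you prove one implication and appeal to symmetry via $\pi^{-1}$, whereas the paper phrases each step directly as a biconditional; the substance is identical.
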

\begin{proof}
Since $ \cP \cong \cQ $ then $ \pi(\cP) = \cQ $ for some $ \pi \in \cWp $ where $\pi=(\delta,\epsilon)$ with $\delta \in (S_2\Wr_{[2]} S_2)^p$ and $\epsilon \in S_p$.
Notice that since $\pi$ permutes elements of the underlying set $\indexset$ of which $\cP$ and $\cQ$ are partitions, then $|P| = |\pi(P)| $ for all $ P \in \cP $.
So,  $ \cP $ is even if and only if $ \pi(\cP) $ is even.
		
Since $ \pi $ is a permutation, $\cP_{\{e\}}$ is non-even for all $ e \in [p] $ if and only if $ \pi(\cP_{\{e\}})  $ is non-even for all $ e \in [p] $.
By \cref{Nicholas} this means that $\cP_{\{e\}}  $ is non-even for all $ e \in [p] $ if and only if $ \pi(\cP)_{\{\epsilon(e)\}}  $ is non-even for all $ e \in [p] $.
Since $ \epsilon $ is a permutation of $ [p] $ this means $\cP_{\{e\}}  $ is non-even for all $ e \in [p] $ if and only if $ \pi(\cP)_{\{f\}}  $ is non-even for all $ f \in [p] $.
Thus $ \cP \in \gelo(p) $ if and only if $\pi(\cP) \in \gelo(p)$. 
\end{proof}
The property of satisfiability is also stable under our action.
\begin{lemma} \label{Celery}
Let $p,\ell \in \N$. If $\cP, \cQ \in \Part(p)$ with $\cP \cong \cQ$, then $|\As(\cP,=,\ell)|=|\As(\cQ,=,\ell)|$.
In particular, $\cP \in \Sat(p)$ if and only if $\cQ \in \Sat(p)$.
\end{lemma}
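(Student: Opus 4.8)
The plan is to produce an explicit bijection between $\As(\cP,=,\ell)$ and $\As(\cQ,=,\ell)$, using the map $\pi^*$ already analyzed in \cref{Cactus}. Since $\cP \cong \cQ$, by definition there is some $\pi \in \cWp$ with $\cQ = \pi(\cP)$. First I would recall that $\pi^*$ is a permutation of $\As([p])$, with inverse $(\pi^{-1})^*$, so in particular it is a bijection. Then \cref{Cactus}, part~\ref{StarCactus}, applied to this very $\pi$ and to $\cP$, states precisely that $\pi^*(\As(\pi(\cP),=,\ell)) = \As(\cP,=,\ell)$, that is, $\pi^*(\As(\cQ,=,\ell)) = \As(\cP,=,\ell)$. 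Restricting the bijection $\pi^*$ to the subset $\As(\cQ,=,\ell)$ of its domain therefore yields a bijection onto $\As(\cP,=,\ell)$, so $|\As(\cP,=,\ell)| = |\As(\cQ,=,\ell)|$.

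For the second assertion, I would simply unwind \cref{Sargon}: $\cP \in \Sat(p)$ means $\As(\cP,=) \neq \emptyset$, which, as noted there, is equivalent to the existence of some $\ell \in \N$ with $\As(\cP,=,\ell) \neq \emptyset$. By the cardinality equality just established, for every $\ell \in \N$ we have $\As(\cP,=,\ell) \neq \emptyset$ if and only if $\As(\cQ,=,\ell) \neq \emptyset$; taking the disjunction over all $\ell$ gives $\cP \in \Sat(p)$ if and only if $\cQ \in \Sat(p)$.

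There is essentially no obstacle here beyond bookkeeping, since all the substantive work -- that $\cWp$ preserves the defining equations of $\As([p],=)$ and the partition-fiber structure -- was done in \cref{Cactus}. The only point deserving a moment's care is to invoke part~\ref{StarCactus} with $\cP$ and $\pi(\cP)$ in the correct roles, so that the image of the restricted map is $\As(\cP,=,\ell)$; because $\pi^*$ is a bijection, either orientation would do, but stating it cleanly keeps the argument transparent.
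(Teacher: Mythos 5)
Your proposal is correct and follows essentially the same route as the paper's own proof: identify $\pi\in\cWp$ with $\cQ=\pi(\cP)$, invoke \cref{Cactus}\ref{StarCactus} to get $\pi^*(\As(\cQ,=,\ell))=\As(\cP,=,\ell)$, use that $\pi^*$ is a permutation of $\As([p])$ to conclude the cardinality equality, and then unwind the definition of satisfiability via the existence of some $\ell$ with $\As(\cdot,=,\ell)\neq\emptyset$. No differences worth noting.
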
 
\begin{proof}
Since $\cP \cong \cQ$, then $\cQ = \pi(\cP)$ for some $\pi \in \cWp$.
Then by \cref{Cactus}\ref{StarCactus}, we know that $\pi^*(\As(\cQ,=,\ell))=\As(\cP,=,\ell)$, and since $\pi^*$ is a permutation of $\As([p])$, we have $|(\As(\cQ,=,\ell))|=|\As(\cP,=,\ell)|$.
Since the partition $\cP$ (resp., $\cQ$) lies in $\Sat(p)$ if and only if $\As(\cP,=,\ell)$ (resp., $\As(\cQ,=,\ell)$) is nonempty for some $\ell \in \N$, we now see that $\cP$ is in $\Sat(p)$ if and only if $\cQ$ is in $\Sat(p)$.
\end{proof}
Recalling that $\Con(p)=\gelo(p)\cap\Sat(p)$, \cref{Pineapple} and \cref{Celery} show that the contributory property is also stable under our action.
\begin{corollary}\label{sticks}
Let $p, \ell \in \N $. If $ \cP, \cQ \in \Part(p)$ with $ \cP \cong \cQ $, then $ \cP \in \Con(p) $ if and only if $\cQ \in \Con(p)$, and furthermore $|\As(\cP,=,\ell)|=|\As(\cQ,=,\ell)|$.
\end{corollary}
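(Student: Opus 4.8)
The plan is to obtain this corollary as an immediate consequence of the two lemmas that immediately precede it, together with the definition $\Con(p) = \gelo(p) \cap \Sat(p)$. The point is that "contributory" is, by definition, the conjunction of two properties—globally even, locally odd, and satisfiable—and each of these two properties has just been shown to be an isomorphism invariant. So the only task is to assemble these facts and to note that the cardinality statement is literally part of what \cref{Celery} already asserts.

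Concretely, I would proceed as follows. First, assume $\cP, \cQ \in \Part(p)$ with $\cP \cong \cQ$. By \cref{Pineapple}, $\cP \in \gelo(p)$ if and only if $\cQ \in \gelo(p)$. By \cref{Celery}, $\cP \in \Sat(p)$ if and only if $\cQ \in \Sat(p)$, and moreover $\card{\As(\cP,=,\ell)} = \card{\As(\cQ,=,\ell)}$ for every $\ell \in \N$ (the hypothesis of \cref{Celery} requires fixing such an $\ell$, which is exactly the $\ell$ in the statement of the corollary). Now combine the first two biconditionals: $\cP \in \Con(p)$, i.e.\ $\cP \in \gelo(p) \cap \Sat(p)$, holds if and only if $\cP \in \gelo(p)$ and $\cP \in \Sat(p)$, which by the two lemmas holds if and only if $\cQ \in \gelo(p)$ and $\cQ \in \Sat(p)$, i.e.\ $\cQ \in \Con(p)$. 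The cardinality claim is carried over verbatim from \cref{Celery}.

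I do not expect any genuine obstacle here: all the substantive work—verifying that the $\cWp$-action preserves evenness and local non-evenness of restricted partitions (via \cref{Nicholas}), and that it induces cardinality-preserving bijections among the relevant assignment sets (via \cref{Cactus}\ref{StarCactus})—has already been carried out in \cref{Nicholas,Pineapple,Celery}. The only mild point of care is to make sure the quantifier on $\ell$ is handled cleanly: \cref{Celery} is stated for a fixed pair $p,\ell \in \N$, so the cardinality equality is available for the particular $\ell$ named in the corollary, and for the "satisfiability" half one uses that $\cP \in \Sat(p)$ is equivalent to $\As(\cP,=,\ell')$ being nonempty for \emph{some} $\ell'$, which is again exactly the equivalence packaged inside \cref{Celery}. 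Hence the proof is a two-line citation of the preceding results.
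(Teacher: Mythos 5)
Your proposal is correct and matches the paper's own treatment: the paper derives \cref{sticks} immediately from \cref{Pineapple} and \cref{Celery} together with the definition $\Con(p)=\gelo(p)\cap\Sat(p)$, with the cardinality equality carried over from \cref{Celery}, exactly as you do.
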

This last result shows that each orbit in $\Part(p)$ under the action of $\cWp$ either contains only contributory partitions or no contributory partitions at all.
Since we are primarily interested in the contributory partitions and their equivalence classes, we make a name for the set of all such classes.
\begin{definition}[$\Isom(p)$]
Let $ p \in \N $. We use $\Isom(p)$ to denote the set of isomorphism classes of partitions in $ \Con(p)$.
\end{definition}
In view of \cref{sticks}, it is helpful to have a notation for the common value of $|\As(\cP,=,\ell)|$ for all partitions $\cP$ in an isomorphism class of contributory partitions.
\begin{definition}[$\Sols(\fP,\ell)$]\label{Sonia}
Let $ p, \ell \in \N $. 
If $\fP$ is any subset of $\Part(p)$ such that all partitions in $\fP$ are isomorphic to each other, we let $\Sols(\fP,\ell)$ be the common value (by \cref{sticks}) of $|\As(\cP,=,\ell)|$ for $\cP \in \fP$.
\end{definition}
We most commonly use this definition when $\fP \in \Isom(p)$.
Now our formula in \cref{Sanri} for central moments of the sum of squares of autocorrelation can be made much less unwieldy by grouping terms according to isomorphisms classes.
\begin{proposition}\label{Sanria}
If $ p,\ell \in \N$, then
\[ 
\mom{p} \ssac(f) = \sum_{ \fP \in \Isom(p)} |\fP| \Sols(\fP,\ell).
\]
\end{proposition}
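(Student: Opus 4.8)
The plan is to start from the formula for central moments already established in \cref{Sanri}, namely $\mom{p}\ssac(f)=\sum_{\cP\in\Con(p)}|\As(\cP,=,\ell)|$, and simply reorganize the sum on the right by grouping partitions into isomorphism classes. Since $\indexset$ is a finite set, $\Part(p)$ is finite, and hence so is its subset $\Con(p)$; thus every sum involved is a finite sum and no convergence issues arise.

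The key structural point is that the relation $\cong$ on $\Part(p)$ is an equivalence relation (as noted above, because $\cWp$ is a group acting on $\Part(p)$), so its classes partition $\Part(p)$. By \cref{sticks}, each such class is either entirely contained in $\Con(p)$ or disjoint from it; therefore the classes that meet $\Con(p)$ — which are exactly the elements of $\Isom(p)$ by definition — partition $\Con(p)$ itself. This lets me rewrite $\sum_{\cP\in\Con(p)}|\As(\cP,=,\ell)| = \sum_{\fP\in\Isom(p)}\sum_{\cP\in\fP}|\As(\cP,=,\ell)|$. I would spend one sentence making this ``all-or-nothing'' point explicit, since it is the only place where the reorganization could conceivably fail to be faithful.

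Next I would fix a class $\fP\in\Isom(p)$ and evaluate the inner sum. By \cref{sticks} (equivalently by \cref{Celery}), all partitions $\cP$ lying in a single isomorphism class have the same value of $|\As(\cP,=,\ell)|$, and this common value is precisely the quantity named $\Sols(\fP,\ell)$ in \cref{Sonia}. Hence the inner sum is a sum of $|\fP|$ equal terms, so $\sum_{\cP\in\fP}|\As(\cP,=,\ell)|=|\fP|\,\Sols(\fP,\ell)$. Substituting this back into the double sum yields $\mom{p}\ssac(f)=\sum_{\fP\in\Isom(p)}|\fP|\,\Sols(\fP,\ell)$, which is the claimed formula.

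There is no real obstacle here: the substantive work — that $\cong$ preserves the GELO property (\cref{Pineapple}), preserves satisfiability, and preserves the cardinality $|\As(\cP,=,\ell)|$ (\cref{Celery}, \cref{sticks}) — has already been done, and what remains is bookkeeping. The one detail worth stating carefully is that $\Isom(p)$ genuinely partitions $\Con(p)$, rather than merely being a family of subsets of it; that is exactly the role played by the ``an orbit is either all contributory or none contributory'' half of \cref{sticks}, and it is what makes the passage from the single sum of \cref{Sanri} to the grouped sum legitimate.
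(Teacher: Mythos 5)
Your proposal is correct and follows essentially the same route as the paper: invoke \cref{Sanri}, group the sum over $\Con(p)$ by isomorphism classes, and use \cref{sticks} together with \cref{Sonia} to replace each inner sum by $|\fP|\,\Sols(\fP,\ell)$. Your explicit remark that \cref{sticks} makes each orbit either wholly contributory or wholly non-contributory (so that $\Isom(p)$ genuinely partitions $\Con(p)$) is a point the paper leaves implicit, but the argument is the same.
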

\begin{proof}
By \cref{Sanri} we know
\begin{align*}
\mom{p} \ssac(f)
& = \sum_{\cP \in \Con(p)} |\As(\cP,=,\ell)| \\
& = \sum_{\fP \in \Isom(p)} \sum_{\cP \in \fP} |\As(\cP,=,\ell)|.
\end{align*}
Thus by \cref{sticks} and \cref{Sonia}
\[
\mom{p} \ssac(f) = \sum_{\fP \in \Isom(p)} |\fP| \Sols(\fP,\ell).  \qedhere
\]
\end{proof}

\section{Finding contributory partitions}\label{Scott}
In this section we give a practical method for finding a set of representatives for all the isomorphism classes of contributory partitions using a matrix formulation.
The following notion is helpful in describing and classifying contributory partitions.
\begin{definition}[Twin class, split class]
Let $ p \in \N$.  A {\it twin class} is any subset $P$ of $\indexset$ such that there exists some $(e,s) \in [p] \times [2]$ with $\{ (e,s,0), (e,s,1) \} \subseteq P$.
A {\it split class} is any subset $P$ of $\indexset$ such that there exists some $e \in [p]$ and $v,w \in [2]$ with $\{ (e,0,v), (e,1,w) \} \subseteq P$.
\end{definition}
The following result indicates some of the general features of contributory partitions that help narrow the search for them.
\begin{lemma}\label{Gideon}
Let $p\in \N$ and $\cP \in \Con(p)$.
\begin{enumerate}[label=(\roman*)]
\item\label{Epazote} For each $e\in[p]$, the restriction $\cP_{\{e\}}$ is a partition of type $\ms{2,1,1}$ or $\ms{1,1,1,1}$.  In the former case, the equivalence class with cardinality $2$ is a twin class, i.e., there is some $s \in [2]$ such that $\cP_{\{e\}} = \{\{(e,s,0),(e,s,1)\}, \{(e,1-s,0)\},\{(e,1-s,1)\}\}$.  In the latter case, $\cP_{\{e\}}= \{\{(e,0,0)\},\{(e,0,1)\},\{(e,1,0)\},\{(e,1,1)\}\}$.  No class of $\cP$ is split.
\item\label{Dill} We have $3 \leq |\cP| \leq 2 p$, and if $p$ is odd then $|\cP| \geq 4$.
\end{enumerate}
\end{lemma}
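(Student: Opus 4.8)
The plan is to prove \ref{Epazote} first and then read \ref{Dill} off it. Throughout, fix $\cP\in\Con(p)$. Since $\cP\in\Sat(p)$, I would fix an assignment $\tau\in\As(\cP,=)$; by \cref{flower}\ref{flower-g}, its restriction $\tau\res{\{e\}}$ lies in $\As(\cP_{\{e\}},=)$ for every $e\in[p]$. Concretely this means $\tau$ is constant on each class of $\cP$, takes distinct values on distinct classes, and obeys $\tau_{e00}+\tau_{e01}=\tau_{e10}+\tau_{e11}$ for each $e$; these are the only properties I will need.

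For \ref{Epazote}, the set $\eindexset$ has four elements, so the non-even partition $\cP_{\{e\}}$ (non-even because $\cP\in\gelo(p)$) has type $\ms{3,1}$, $\ms{2,1,1}$, or $\ms{1,1,1,1}$. I would rule out type $\ms{3,1}$: with three indices of equation $e$ assigned a common value $a$ and one assigned $b\neq a$, one side of the relation reads $a+b$ while the other reads $a+a$ (in some order), forcing $a=b$, a contradiction. For type $\ms{2,1,1}$ I would show the doubleton cannot be a split class: if it contained one index from each side of equation $e$, then each side of the relation would be the common doubleton value $a$ plus one of the two distinct singleton values, forcing those singleton values to coincide — again a contradiction. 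Hence the doubleton is a twin class $\{(e,s,0),(e,s,1)\}$, which yields the stated shapes of $\cP_{\{e\}}$. Finally, if some $P\in\cP$ were split, then for the relevant $e$ the set $P_{\{e\}}$ would be a class of $\cP_{\{e\}}$ containing two indices $(e,0,v),(e,1,w)$, i.e.\ a split class of $\cP_{\{e\}}$, which is impossible by what was just shown; so no class of $\cP$ is split.

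For \ref{Dill}, the upper bound is immediate: $\cP$ is even, so every class has size at least $2$, while $\indexset$ has $4p$ elements, whence $|\cP|\le 2p$. For $|\cP|\ge 3$ (taking $p\ge1$), note that $P\mapsto P_{\{e\}}$ is a bijection of $\{P\in\cP:P_{\{e\}}\neq\emptyset\}$ onto $\cP_{\{e\}}$, and by \ref{Epazote} the target has at least three elements, so $|\cP|\ge3$. The remaining claim — $|\cP|\ge4$ when $p$ is odd — is the heart of the matter. Suppose $|\cP|=3$. Each $\cP_{\{e\}}$ has three or four classes by \ref{Epazote} but is the image of only three classes of $\cP$, so it has exactly three classes, i.e.\ type $\ms{2,1,1}$, and every class of $\cP$ meets $\eindexset$: for each $e$ one class $P_{i(e)}$ absorbs the twin pair (contributing two elements from equation $e$) while each of the other two classes receives exactly one of the two indices on the opposite side. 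Now invoke satisfiability: $\tau$ takes three distinct values $v_1,v_2,v_3$ on the classes, and equation $e$ reads $2v_{i(e)}=v_j+v_k$ with $\{i(e),j,k\}=\{1,2,3\}$, equivalently $3v_{i(e)}=v_1+v_2+v_3$. Since the $v_i$ are distinct, only one of them can equal the average $(v_1+v_2+v_3)/3$, so $i(e)$ does not depend on $e$; call it $i_0$. Then each of the two classes other than $P_{i_0}$ receives exactly one element from every one of the $p$ equations, hence has size $p$; since $\cP$ is even, $p$ is even. Contrapositively, if $p$ is odd then $|\cP|\neq3$, so $|\cP|\ge4$.

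The main obstacle is the odd-$p$ case of \ref{Dill}: a bare count of class sizes (three even numbers summing to $4p$) leaves $|\cP|=3$ combinatorially possible, so one genuinely needs the satisfiability hypothesis. The decisive observation is that, in the type-$\ms{2,1,1}$ situation, the relation $\tau_{e00}+\tau_{e01}=\tau_{e10}+\tau_{e11}$ forces the common $\tau$-value on the twin class of equation $e$ to be the average of the three class-values, which pins that twin class down uniformly across all $p$ equations and thereby fixes the parities of the other two class sizes. Everything else is routine case analysis and counting.
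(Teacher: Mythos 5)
Your proposal is correct and follows essentially the same route as the paper's proof: satisfiability rules out the $\ms{3,1}$ and split-doubleton cases in part \ref{Epazote}, and in part \ref{Dill} the case $|\cP|=3$ is shown to force two classes of size $p$, hence $p$ even. Your observation that the twin-holding class is pinned down as the unique class whose value equals the average $(v_1+v_2+v_3)/3$ is just a rephrasing of the paper's argument that two of the three classes cannot both be twin classes (via $2A=B+C$, $2B=A+C$ forcing $A=B$), so there is no substantive difference.
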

\begin{proof}
First we prove \ref{Epazote}.
Since $\cP$ is GELO, we know that for each $e \in [p]$, the restriction $\cP_{\{e\}}$ is a non-even partition of the set $\{e\}\times\lindexset$, so it cannot have only one set with all four elements.
Thus, for every $P \in \cP$, we know that $|P_{\{e\}}|\not=4$.
Suppose that there were some $e \in [p]$ and $P \in \cP$ with $|P_{\{e\}}|=3$ to show a contradiction.
Then there are some $s,v \in [2]$ such that $(e,s,0),(e,s,1),(e,1-s,v) \in P$, and there must be some $Q \in \cP$ with $Q\not=P$ and $(e,1-s,1-v) \in Q$.
Since $\cP \in \Sat(p)$, there is some $\tau \in \As(\cP,=)$ such that $\tau_{e,s,0}=\tau_{e,s,1}=\tau_{e,1-s,v}\not=\tau_{e,1-s,1-v}$ and $\tau_{e,s,0}+\tau_{e,s,1}=\tau_{e,1-s,v}+\tau_{e,1-s,1-v}$, which is a contradiction.
So for every $P \in \cP$, we have $|P_{\{e\}}|\leq 2$.

Let $e \in [p]$.
Then $\cP_{\{e\}}$ is a non-even partition (because $\cP$ is GELO) of the set $\{(e,s,v): s,v \in [2]\}$ of $4$ elements into classes of size at most $2$ (by the previous paragraph), so it must be of type $\ms{2,1,1}$ or $\ms{1,1,1,1}$.
In the former case, suppose for a contradiction that the class of size $2$ in $\cP_{\{e\}}$ contained $(e,t,v)$ and $(e,1-t,w)$ for some $t,v,w \in [2]$, so that the classes of size $1$ are $\{(e,t,1-v)\}$ and $\{(e,1-t,1-w)\}$.
Since $\cP \in \Sat(p)$, there is some $\tau \in \As(\cP,=)$ such that $\tau_{e,t,1-v}$ and $\tau_{e,1-t,1-w}$ are distinct from each other and from $\tau_{e,t,v}=\tau_{e,1-t,w}$.
But then this means that $\tau_{e,t,v}+\tau_{e,t,1-v}\not=\tau_{e,1-t,w}+\tau_{e,1-t,1-w}$, contradicting the fact that $\tau\in\As(\cP,=)$.
Thus the class of size $2$ in $\cP_{\{e\}}$ must contain two elements of the form $(e,s,0)$ and $(e,s,1)$ for some $s\in[2]$, while the other two classes are the singleton classes $\{(e,1-s,0)\}$ and $\{(e,1-s,1)\}$.
This means that no class of $\cP$ can be split.
On the other hand, if $\cP_{\{e\}}$ is of type $\ms{1,1,1,1}$, then $\cP_{\{e\}}$ is a partition of $\{(e,0,0),(e,0,1),(e,1,0),(e,1,1)\}$ into four singleton classes, which is again a non-split partition.

Now we prove \ref{Dill}.
By \ref{Epazote}, $\cP_{\{0\}}$ has at least $3$ classes, so $\cP$ must have at least three classes.
Since $\cP\in\Con(p)\subseteq\gelo(p)$, all classes in the partition $\cP$ are of even cardinality, hence they have at least $2$ elements, and the sum of all these cardinalities is $\big|\indexset\big|=4 p$, so the number $|\cP|$ of classes cannot exceed $4 p/2=2 p$.

We prove the final claim in \ref{Dill} by contraposition: we assume that $\cP$ has precisely three equivalence classes and prove that $p$ is even.
By \ref{Epazote}, for each $j \in [p]$, the set $\cP_{\{j\}}$ must be type $\ms{2,1,1}$, so every class $S$ of $\cP$ has $S_{\{j\}}\not=\emptyset$ and there must be at least one twin class $P$ in $\cP$.
So there are some $e \in [p]$ and $s \in [2]$ such that $(e,s,0), (e,s,1) \in P$.

We claim that the other two classes in $\cP$ are not twin classes.
Suppose that $Q$ is another twin class to show contradiction.
Then there would be some $f \in [p]$ and $t \in [2]$ such that $(f,t,0), (f,t,1) \in Q$ and some $u,v \in [2]$ such that $(e,1-s,u) \in Q$ and $(f,1-t,v) \in P$.
So if $R$ is the third class in $\cP$, then $(e,1-s,1-u),(f,1-t,1-v) \in R$.
We claim that there cannot be any $\tau \in \As(\cP,=)$, for if we let $A$, $B$, and $C$ be the values of $\tau_\gamma$ for $\gamma$ respectively in $P$, $Q$, and $R$, then $A$, $B$, and $C$ must be distinct (because $\tau\in\As(\cP)$) and (because $\tau\in\As(=)$), they must satisfy
\begin{align*}
2A &=B+C \\
2B &=A+C,
\end{align*}
which implies that $A=B$, contradicting the distinctness that we just noted.

So if $\cP$ has precisely three classes, then $\cP$ has one twin class and two non-twin classes.
If $S$ is a non-twin class of $\cP$, then since we have seen that $S_{\{j\}}$ is nonempty for every $j \in [p]$, we must have $|S_{\{j\}}|=1$ for every $j \in [p]$.
This implies $|S|=p$, and since a contributory partition is even, this means that $p$ is even.
\end{proof}
Now we describe a way of associating matrices to partitions of $\indexset$ that will help us find isomorphism classes of contributory partitions.
\begin{definition}[Display vector and matrix]
Let $p \in \N$ and let $P$ be a non-split class that is a subset of $\indexset$.  The {\it display vector} of $P$ is a $p\times 1$ matrix $v$ with rows indexed by $[p]$ where we write $v_e$ for the entry of row $e$.  Each entry of $v$ is drawn from the set $\{0,1_r,1_b,2,-1_r,-1_b,-2\}$, where integers with no subscript are considered uncolored while those with $r$ or $b$ subscripts are respectively considered to be colored red or blue.  We set $v_e=0$, $1_r$, $1_b$, $2$, $-1_r$, $-1_b$, or $-2$ respectively when $P\cap (\{e\}\times \lindexset)$ is $\emptyset$, $\{(e,0,0)\}$, $\{(e,0,1)\}$, $\{(e,0,0),(e,0,1)\}$, $\{(e,1,0)\}$, $\{(e,1,1)\}$, or $\{(e,1,0),(e,1,1)\}$.  A {\it display matrix} of collection of $\cP$ of non-split classes in $\indexset$ is any matrix obtained by arranging the display vectors for the classes of $\cP$ in some order and making these the columns of a $p\times |\cP|$ matrix.
\end{definition}
Note that \cref{Gideon}\ref{Epazote} shows that all classes in contributory partitions are non-split, so every contributory partition has a display matrix.
Note that the correspondence between non-split classes and their display vectors is bijective, while the correspondence between collections of such classes and their display matrices is not bijective, but becomes bijective if we consider equivalence classes of matrices modulo the action of the group of all permutations of columns.
We now look at other matrices that give less information about non-split classes and contributory partitions.
\begin{definition}[Map $\mono$, monochrome vector and matrix]
We let the {\it monochromator}, written $\mono$, be the map that takes any matrix with entries in $\{0,1_r,1_b,2,-1_r,-1_b,-2\}$ to a matrix with entries in $\{0,1,2,-1,-2\}$ by replacing instances of $1_r$ and $1_b$ with $1$ and instances of $-1_r$ and $-1_b$ with $-1$.  For $p \in \N$, the {\it monochrome vector} of a non-split class $P$ in $\indexset$ is $\mono(v)$ where $v$ is the display vector of $P$.  A {\it monochrome matrix} of a collection $\cP$ of non-split classes in $\indexset$ is $\mono(M)$ for $M$ a display matrix of $\cP$.
\end{definition}
\begin{definition}[Map $\abs$, absolute vector and matrix] We let the {\it absolutizer}, written $\abs$, be the map that takes any matrix with entries in $\{0,1,2,-1,-2\}$ to a matrix with entries in $\{0,1,2\}$ by replacing each entry with its absolute value.  For $p \in \N$, the {\it absolute vector} of a non-split class $P$ in $\indexset$ is $\abs(\mono(v))$ where $v$ is the display vector of $P$.  An {\it absolute matrix} of a collection $\cP$ of non-split classes in $\indexset$ is any matrix $\abs(\mono(M))$ for $M$ a display matrix of $\cP$.
\end{definition}
We are interested in the display, monochromatic, and absolute matrices of contributory partitions, so we now explore what conditions on these matrices make them correspond to partitions (\cref{Patrick}), GELO partitions (\cref{Gerald}), satisfiable partitions (\cref{Terrence}), and ultimately, contributory partitions (\cref{Clarence} and Corollaries \ref{Clarence-mono} and \ref{Clarence-abs}).
\begin{lemma}\label{Patrick}
Let $p\in \N$ and let $M$ be a display matrix for some collection $\cP$ of non-split classes in $\indexset$.
The set of classes that $M$ represents is a partition of $\indexset$ if and only if $M$ satisfies the following conditions:
\begin{enumerate}[label=(\roman*)]
\item Each row of $M$ has the following entries (not necessarily in the order we list them):
\begin{itemize}
\item one instance each of $2$ and $-2$, and $|\cP|-2$ instances of $0$;
\item one instance of each of $2$, $-1_r$, and $-1_b$, and $|\cP|-3$ instances of $0$;
\item one instance of each of $-2$, $1_r$, and $1_b$, and $|\cP|-3$ instances of $0$;
\item one instance of each of $1_r$, $1_b$, $-1_r$, and $-1_b$, and $|\cP|-4$ instances of $0$.
\end{itemize}
\item The sums of the absolute values of the elements (ignore color) in each column of $M$ is a positive integer.
\end{enumerate}
\end{lemma}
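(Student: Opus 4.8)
The plan is to unwind the definition of the display matrix and translate the three defining properties of a partition — nonemptiness of each class, pairwise disjointness, and covering of $\indexset$ — into elementary combinatorial statements about the rows and columns of $M$. Write the columns of $M$ as $v^{(1)},\dots,v^{(|\cP|)}$, the display vectors of the classes of $\cP$ (the conditions (i) and (ii) are invariant under permuting columns, so it is immaterial which display matrix we use). Fix a row $e\in[p]$. The key observation, read directly off the definition of the display vector, is that a class $P$ with display vector $v$ contains the element $(e,0,0)$ if and only if $v_e\in\{1_r,2\}$; it contains $(e,0,1)$ if and only if $v_e\in\{1_b,2\}$; it contains $(e,1,0)$ if and only if $v_e\in\{-1_r,-2\}$; and it contains $(e,1,1)$ if and only if $v_e\in\{-1_b,-2\}$. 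In particular $P$ is empty exactly when its display vector is the zero column.

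From here I would argue as follows. A class of $\cP$ is empty if and only if the corresponding column of $M$ is all zeros, equivalently has absolute-value column sum $0$; since that sum is always a nonnegative integer, requiring every class to be nonempty is exactly condition (ii). Next, assuming (ii), the collection $\cP$ is a partition of $\indexset$ if and only if every element of $\indexset$ lies in exactly one class, and by the observation above this decouples, row by row, into four independent ``exactly one'' conditions on row $e$: exactly one entry of row $e$ lies in $\{1_r,2\}$ (accounting for $(e,0,0)$), exactly one lies in $\{1_b,2\}$ (for $(e,0,1)$), exactly one in $\{-1_r,-2\}$ (for $(e,1,0)$), and exactly one in $\{-1_b,-2\}$ (for $(e,1,1)$).

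The remaining step is a short case analysis showing these four conditions on a single row are equivalent to row $e$ having one of the four listed patterns. If $2$ occurs in the row then it occurs exactly once (a second $2$ would overcount $(e,0,0)$), and then no $1_r$ or $1_b$ may occur; splitting on whether $-2$ occurs forces either the entries $2,-2$ together with $|\cP|-2$ zeros, or the entries $2,-1_r,-1_b$ with $|\cP|-3$ zeros. If $2$ does not occur, the conditions force exactly one $1_r$ and exactly one $1_b$, and splitting on whether $-2$ occurs forces either $1_r,1_b,-2$ with $|\cP|-3$ zeros or $1_r,1_b,-1_r,-1_b$ with $|\cP|-4$ zeros; conversely each of these four patterns visibly meets all four conditions. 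Assembling the pieces yields the stated equivalence. One should also note in the ``if'' direction that (i) and (ii) force the columns of $M$ to be pairwise distinct (no row has a repeated nonzero entry, and no column is zero), so the set $\cP$ really does consist of distinct classes, consistent with the bijectivity of the display-vector correspondence.

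As for difficulty: there is no substantive obstacle. The only point needing care is organizing the final case analysis so that all four patterns are produced and none is missed or double-counted; everything else is a direct transcription of the definitions of ``display vector'' and ``partition.''
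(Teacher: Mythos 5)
Your proposal is correct and follows essentially the same route as the paper's proof: the row condition is exactly the statement that each of $(e,0,0)$, $(e,0,1)$, $(e,1,0)$, $(e,1,1)$ lies in exactly one class, and the column condition is exactly nonemptiness of each class. You simply spell out the membership translation and the four-pattern case analysis that the paper leaves implicit, which is fine.
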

\begin{proof}
Let $\cP$ be the set of classes that $M$ represents.
For each $e \in [p]$, the condition on the $e$th row is tantamount to saying that each $(e,0,0)$, $(e,0,1)$, $(e,1,0)$, and $(e,1,1)$ occurs in exactly one of the classes of $\cP$.
This is true for all $e \in [p]$ if and only if $\cP$ is a disjoint covering of $\indexset$.
The condition on the columns is true if and only if each class in $\cP$ is nonempty.
\end{proof}
\begin{lemma}\label{Gerald}
Let $p\in \N$ and let $M$ be a display matrix for a collection $\cP$ consisting of non-split classes in $\indexset$.
Then $\cP$ is a GELO partition if and only if $M$ satisfies the following conditions:
\begin{enumerate}[label=(\roman*)]
\item Each row of $M$ has the following entries (not necessarily in the order we list them):
\begin{itemize}
\item one instance of each of $2$, $-1_r$, and $-1_b$, and $|\cP|-3$ instances of $0$;
\item one instance of each of $-2$, $1_r$, and $1_b$, and $|\cP|-3$ instances of $0$;
\item one instance of each of $1_r$, $1_b$, $-1_r$, and $-1_b$, and $|\cP|-4$ instances of $0$;
\end{itemize}
\item The sums of the absolute values of the elements (ignore color) in each column of $M$ is a positive even integer.
\end{enumerate}
\end{lemma}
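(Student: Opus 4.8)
The plan is to build directly on \cref{Patrick}, which already characterizes when the collection $\cP$ of non-split classes represented by a display matrix $M$ is a partition of $\indexset$. Since a GELO partition is in particular a partition, the task reduces to translating the two remaining defining properties---global evenness and local oddness---into further constraints on $M$, and then checking that these constraints are exactly what distinguishes the hypotheses of \cref{Gerald} from those of \cref{Patrick}.

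The first ingredient is an elementary identity: the cardinality of a class $P \in \cP$ equals the sum of the absolute values (ignoring color) of the entries in the column of $M$ corresponding to $P$. Indeed, since $P$ is non-split, $P \cap (\{e\}\times\lindexset)$ is empty, a singleton, or a two-element set, and by the definition of the display vector this contributes $0$, $1$, or $2$ respectively to that absolute-value sum, matching $|P \cap (\{e\}\times\lindexset)|$; summing over $e \in [p]$ gives $|P|$. Consequently $\cP$ is even---every class has finite even cardinality---if and only if every column of $M$ has even absolute-value sum. In combination with condition (ii) of \cref{Patrick} (each such sum is a positive integer), this is precisely condition (ii) of \cref{Gerald}.

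The second ingredient is a case analysis of local oddness. For each $e \in [p]$ the restricted partition $\cP_{\{e\}}$ is a partition of the four-element set $\{e\}\times\lindexset$ whose structure is read off the $e$th row of $M$, and \cref{Patrick}(i) lists four admissible shapes for that row. Working through them: a row with one $2$ and one $-2$ produces $\cP_{\{e\}}$ of type $\ms{2,2}$, which is even; a row of type $2, -1_r, -1_b$ or of type $-2, 1_r, 1_b$ produces type $\ms{2,1,1}$, and a row of type $1_r, 1_b, -1_r, -1_b$ produces type $\ms{1,1,1,1}$, none of which is even. Hence $\cP_{\{e\}}$ fails to be even exactly when the $e$th row of $M$ is not of the first shape, so $\cP$ is locally odd if and only if that shape is excluded from every row. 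Deleting it from the list in \cref{Patrick}(i) gives exactly condition (i) of \cref{Gerald}.

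Combining the two ingredients proves both implications simultaneously. If $\cP$ is GELO, then it is a partition, so \cref{Patrick} applies, and the refinements above upgrade its conditions (i) and (ii) to conditions (i) and (ii) of \cref{Gerald}. Conversely, conditions (i) and (ii) of \cref{Gerald} imply conditions (i) and (ii) of \cref{Patrick}---the row shapes allowed in (i) are among those permitted by \cref{Patrick}, and a positive even integer is in particular a positive integer---so $\cP$ is a partition, whereupon the column condition forces global evenness and the row condition forces local oddness. The only point requiring care is the bookkeeping in the third step: correctly matching each admissible row shape of $M$ to the type of $\cP_{\{e\}}$ and verifying which of those types is even. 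I expect this modest case analysis, together with the load-bearing cardinality identity of the second step, to be the main obstacle.
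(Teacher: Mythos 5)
Your proposal is correct and follows essentially the same route as the paper: reduce to \cref{Patrick}, identify each column's absolute-value sum with the cardinality of the corresponding class to get global evenness, and exclude the $2,-2$ row shape (the only one giving $\cP_{\{e\}}$ of even type $\ms{2,2}$) to get local oddness. The paper states this in one sentence; your write-up simply fills in the same bookkeeping explicitly.
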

\begin{proof}
These conditions are those of \cref{Patrick} combined with the parity condition on column sums that is equivalent to saying that every class in $\cP$ has even cardinality (global evenness) and a parity condition on row entries that makes the restriction of the partition to each equation have a class with an odd number of elements (local oddness).
\end{proof}
We now want to determine which of our matrices correspond to satisfiable partitions.  To do this, we note the connection between a monochrome matrix of a partition and counts of assignments of interest to us.
\begin{lemma}\label{Terrence}
Let $p, \ell \in \N$, let $\cP$ be a partition of $\indexset$, and let $M$ be a monochrome matrix of $\cP$.
Then $|\As(\cP,=,\ell)|$ is equal to the number of solutions $x$ in $[\ell]^{|\cP|}$ of the homogeneous linear system $M x = 0$ such that no two coordinates of $x$ have the same value.
Therefore, $\cP$ is satisfiable if and only if the homogeneous linear system $M x=0$ has a solution in $\N^{|\cP|}$ such that no two coordinates of $x$ have the same value.
\end{lemma}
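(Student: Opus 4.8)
The plan is to set up an explicit bijection between $\As(\cP,=,\ell)$ and the set of solutions $x \in [\ell]^{|\cP|}$ of $M x = 0$ having pairwise distinct coordinates, and then to obtain the satisfiability statement by letting $\ell$ grow. Throughout we use that the very existence of a monochrome matrix $M$ of $\cP$ forces every class of $\cP$ to be non-split (so that display vectors, and hence $M$, are defined).

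First I would put $n = |\cP|$ and fix an enumeration $P_1,\dots,P_n$ of the classes of $\cP$ matching the columns of $M$; since two such enumerations differ by a column permutation, which merely permutes the coordinates of $x$, it is enough to handle one enumeration. Define $\Phi\colon \As(\cP)\to\N^n$ by letting $\Phi(\tau)_i$ be the common value of $\tau$ on $P_i$, which makes sense because members of $\As(\cP)$ are constant on classes of $\cP$. By the defining property of $\As(\cP)$ in \cref{Gennady-2} (namely $\tau_\beta=\tau_\gamma$ iff $\beta\equiv\gamma\bmod\cP$), the coordinates of $\Phi(\tau)$ are pairwise distinct; conversely, any $x\in\N^n$ with pairwise distinct coordinates equals $\Phi(\tau)$ for the unique $\tau$ with $\tau_\gamma=x_i$ whenever $\gamma\in P_i$. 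Thus $\Phi$ is a bijection from $\As(\cP)$ onto the set of pairwise-distinct-coordinate tuples in $\N^n$, and clearly $\tau\in\As(\cP,\ell)$ iff $\Phi(\tau)\in[\ell]^n$.

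The core step is to show $\tau\in\As([p],=)$ iff $M\,\Phi(\tau)=0$. Fix $e\in[p]$ and a class $P_i$; non-splitness of $P_i$ forces $P_i\cap(\eindexset)$ to be one of the seven sets appearing in the definition of the display vector, and a case check over those seven possibilities shows that the monochrome entry $M_{e,i}$ equals $\bigl|\{v\in[2]:(e,0,v)\in P_i\}\bigr|-\bigl|\{v\in[2]:(e,1,v)\in P_i\}\bigr|$, i.e., the number of terms of equation $e$ on its left side lying in $P_i$ minus the number on its right side lying in $P_i$. Writing $x=\Phi(\tau)$ and summing over $i$, the fact that each of $(e,0,0),(e,0,1),(e,1,0),(e,1,1)$ belongs to exactly one class yields $(M x)_e=(\tau_{e00}+\tau_{e01})-(\tau_{e10}+\tau_{e11})$; this vanishes for every $e$ exactly when $\tau\in\As([p],=)$. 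Combining with the previous paragraph, $\Phi$ restricts to a bijection from $\As(\cP,=,\ell)=\As(\cP)\cap\As([p],=)\cap\As([p],\ell)$ onto $\{x\in[\ell]^n: M x=0,\ x_i\text{ pairwise distinct}\}$, which is the cardinality assertion.

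For the last sentence, $\cP$ is satisfiable iff $\As(\cP,=)\ne\emptyset$, which by the parenthetical equivalence in \cref{Sargon} holds iff $\As(\cP,=,\ell)\ne\emptyset$ for some $\ell\in\N$; by the bijection just built this is the same as $M x=0$ having a pairwise-distinct-coordinate solution in $[\ell]^n$ for some $\ell$, and since every tuple of $\N^n$ lies in $[\ell]^n$ once $\ell$ exceeds all of its entries, this is equivalent to the existence of such a solution in $\N^n$. I do not expect a genuine obstacle; the only points needing care are the seven-case verification that the monochrome entry is the signed term-count, and keeping straight that $\Phi$ maps onto the distinct-coordinate tuples (not all of $\N^n$), which is precisely where one uses the definition of $\As(\cP)$ rather than the larger $\As([p])$.
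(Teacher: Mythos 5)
Your proposal is correct and follows essentially the same route as the paper's proof: both identify assignments with vectors indexed by the classes of $\cP$ (your $\Phi$ is the inverse of the paper's correspondence $f\mapsto f\circ\phi$), both verify that the monochrome entry $M_{e,P}$ is the signed count $\sum_{(s,v)\in\lindexset}(-1)^s\charfunc_P(e,s,v)$, and both deduce satisfiability by letting $\ell$ exceed the entries of a solution. No gaps.
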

\begin{proof}
Let us index the rows of $M$ by elements of $[p]$ and the columns of $M$ by classes in $\cP$, where the class $P$ indexes the column of $M$ which is the monochrome vector for $P$.
That is, $\cP$ consists of some number $t$ of classes, say $P_1,\ldots,P_t$, and so $M$ has $t$ columns that are the monochrome vectors of $P_1,\ldots,P_t$ in that order.
We identify each function $f$ from $\cP$ to $\N$ with a $|\cP|\times 1$ matrix (column vector) with entries in $\N$, where the rows of the matrix (entries of the vector) are indexed by the classes of $\cP$, arranged in the same order $P_1,\ldots,P_r$ that they are as indices of columns of $M$; in this correspondence $f(P_j)$ is the $P_j$th component of the vector, which is the $j$th component if we read out the vector from top to bottom.

Recall that an element of $\As(\cP,=,\ell)$ is a function $\tau\colon \indexset \to \N$ with $\tau(\alpha) \in [\ell]$ for every $\alpha \in \indexset$, with $\tau(\alpha)=\tau(\beta)$ if and only if $\alpha\equiv\beta \pmod{\cP}$, and such that for every $e \in [p]$ we have $\sum_{(s,v) \in \lindexset} (-1)^s \tau(e,s,v)=0$.
Let $\phi \colon \indexset \to \cP$ be the function that maps an element $\alpha$ to the equivalence class of $\cP$ that contains $\alpha$.
Let $F$ be the set of injective functions $f$ from $\cP$ to $\N$ that have $f(P)\in[\ell]$ for every $P \in \cP$ and, for every $e \in [p]$, have
\begin{equation}\label{Ricardo}
\sum_{(s,v) \in \lindexset} (-1)^s f(\phi(e,s,v))=0.
\end{equation}
Then there is a bijective correspondence between $F$ and $\As(\cP,=,\ell)$ given by $f \mapsto f \circ \phi$.
For any subset $X$ of $\indexset$, we let $\charfunc_X$ be the indicator function of $X$.
Then \eqref{Ricardo} is equivalent to $\sum_{(s,v) \in \lindexset} \sum_{P \in \cP} \charfunc_P(e,s,v) (-1)^s f(\phi(e,s,v))=0$ since every element $(e,s,v)$ of $\indexset$ occurs in one and only one class $P$ of the partition $\cP$.
This last equation, in turn, is equivalent to $\sum_{(s,v) \in \lindexset} \sum_{P \in \cP} \charfunc_P(e,s,v) (-1)^s f(P)=0$ due to the indicator function and the definition of $\phi$.
Notice that $\sum_{(s,v) \in \lindexset} (-1)^s \charfunc_P(e,s,v)$ is exactly $M_{e,P}$ because of the rules about how entries of display matrices (and therefore monochrome matrices) are determined.
Thus, \eqref{Ricardo} is equivalent to $\sum_{P \in \cP} M_{e,P} f(P)=0$.
Therefore $|\As(\cP,=,\ell)|$ is equal to the number of injective functions $f$ from $\cP$ to $[\ell]$ with the property such that for every $e \in [p]$ we have $\sum_{P \in \cP} M_{e,P} f(P)=0$.
When we identify these functions with vectors as described in the opening paragraph of this proof, this is the same as the number of $|\cP|\times 1$ column vectors $f$ with distinct entries in $[\ell]$ such that that $M f=0$.

The partition $\cP$ is satisfiable if $\As(\cP,=)$ is nonempty, which is true if and only if there is some $n\in\N$ such that $\As(\cP,=,n)$ is nonempty.  In light of what we have just proved, this is true if and only if there is some solution $x \in \N^{|\cP|}$ to the homogeneous system $M x=0$ where distinct coordinates of $x$ have different values.
\end{proof}
\cref{Terrence} gives us a concrete way of computing numbers of assignments in $|\As(\cP,=,\ell)|$ for some partition $\cP$ and $\ell \in \N$ based on a monochrome matrix for $\cP$.
We want to turn Lemmas \ref{Gerald} and \ref{Terrence} into a concrete characterization of what the display, monochrome, and absolute matrices of contributory partitions look like.
Our characterization uses the following definition.
\begin{definition}[Hamming weight and distance]\label{Hamilcar}
The {\it Hamming weight} of a matrix is the number of nonzero entries in the matrix.
The {\it Hamming distance} between two matrices of the same shape is the Hamming weight of their difference.
\end{definition}
Now we can characterize the display matrices of contributory partitions with the help of a technical lemma from \cref{Lester}.
\begin{lemma}\label{Clarence}
Let $p,t \in \N$.  A $p\times t$ matrix $M$ is a display matrix of some partition in $\Con(p)$ if and only if the following hold:
\begin{enumerate}[label=(\roman*)]
\item\label{Rowena} Each row of $M$ has one of the following as its entries (not necessarily in the order we list them):
\begin{itemize}
\item one instance of each of $2$, $-1_r$, and $-1_b$, and $t-3$ instances of $0$;
\item one instance of each of $-2$, $1_r$, and $1_b$, and $t-3$ instances of $0$; or
\item one instance of each of $1_r$, $1_b$, $-1_r$, and $-1_b$, and $t-4$ instances of $0$.
\end{itemize}
\item\label{Colin} The sum of the absolute values of the elements (ignore colors) of each column in $M$ is a positive even integer.
\item\label{Sanjay} The matrix $\mono(M)$ has a reduced row echelon form in which no row has Hamming weight $2$ and no pair of rows has Hamming distance $2$.
\end{enumerate}
Furthermore, if $M$ represents a contributory partition, then the number $t$ of columns satisfies $3 \leq t \leq 2 p$, and when $p$ is odd it also satisfies $t \geq 4$.
\end{lemma}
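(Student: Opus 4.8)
The plan is to split the biconditional along the decomposition $\Con(p)=\gelo(p)\cap\Sat(p)$: conditions \ref{Rowena} and \ref{Colin} will encode ``$M$ is a display matrix of a GELO partition,'' and condition \ref{Sanjay} will encode satisfiability of that partition. The size bounds at the end will then come for free from \cref{Gideon}\ref{Dill}.

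For the first part, I would first note that any $p\times t$ matrix all of whose entries lie in $\{0,1_r,1_b,2,-1_r,-1_b,-2\}$ and which has no zero column is a display matrix of a collection of $t$ nonempty non-split classes: reading column $j$ as a display vector, it encodes a subset of $\indexset$ that meets each $\{e\}\times\lindexset$ inside a single side, hence is non-split, and this subset is nonempty exactly when the column is nonzero. Condition \ref{Rowena} forces the entries into that seven-element set and condition \ref{Colin} forces the columns to be nonzero, so \ref{Rowena} and \ref{Colin} together say exactly that $M$ is a display matrix of some collection $\cP$ of non-split classes; these are verbatim the hypotheses of \cref{Gerald}, which then tells us that \ref{Rowena}--\ref{Colin} hold if and only if $\cP$ is in fact a GELO partition of $\indexset$. (The sole difference between \ref{Rowena} and condition~(i) of \cref{Patrick} is the deletion of the ``one $2$, one $-2$, rest $0$'' row type, which is the row type producing an even $\cP_{\{e\}}$; its absence is the matrix incarnation of local oddness.) So it remains to show that, for a $\cP\in\gelo(p)$ with display matrix $M$, condition \ref{Sanjay} holds if and only if $\cP\in\Sat(p)$.

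For the satisfiability part I would pass to $N:=\mono(M)$ and invoke \cref{Terrence}: $\cP$ is satisfiable if and only if $Nx=0$ has a solution in $\N^{t}$ with pairwise distinct coordinates. The lever is that every admissible row of $N$, after forgetting colour, is $(2,-1,-1)$, $(-2,1,1)$, or $(1,1,-1,-1)$ padded with zeros, so it sums to $0$; thus $\mathbf{1}\in\ker N$, and a solution of $Nx=0$ with pairwise distinct coordinates exists in $\N^{t}$ if and only if one exists in $\Q^{t}$ (add a large multiple of $\mathbf{1}$ to make all coordinates positive, then clear denominators). Next, a $\Q$-subspace $V\subseteq\Q^{t}$ fails to contain a vector with pairwise distinct coordinates precisely when $V\subseteq\bigcup_{i\neq j}\{x:x_i=x_j\}$; in that case $V$ is a finite union of its own subspaces $V\cap\{x:x_i=x_j\}$, and since a vector space over an infinite field is not a union of finitely many proper subspaces, $V$ equals one of them, i.e.\ $V\subseteq\{x:x_i=x_j\}$ for a single pair $i\neq j$, equivalently $e_i-e_j$ lies in the row space of $N$. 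Hence $\cP$ is satisfiable if and only if the row space of $N$ contains no difference $e_i-e_j$ of distinct standard basis vectors of $\Q^t$.

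The last move is to translate ``the row space of $N$ contains no $e_i-e_j$'' into the reduced-row-echelon condition \ref{Sanjay}. One direction is free once we use again that every row-space vector sums to $0$: a weight-$2$ row of the reduced row echelon form of $N$ is necessarily $e_i-e_j$ for some $i<j$ (its pivot entry is $1$ and its two nonzero entries must sum to $0$), and two rows at Hamming distance $2$ differ by a nonzero weight-$2$ vector lying in the row space, again a scalar multiple of some $e_i-e_j$; either situation exhibits a forbidden vector. The converse — that if $e_i-e_j$ does lie in the row space then the reduced row echelon form must already display a weight-$2$ row or a distance-$2$ pair of rows — is the genuinely nontrivial linear-algebra input, and it is exactly the technical result quoted from \cref{Lester}; I expect proving (or even correctly formulating) that lemma to be the main obstacle, since possessing a short vector in a subspace need not be visible in an arbitrary spanning set, only in the canonical echelon basis. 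Assembling the two directions gives \ref{Sanjay} $\Leftrightarrow$ $\cP\in\Sat(p)$, which combined with the first part yields the stated biconditional. Finally, since $t=|\cP|$ for the represented $\cP\in\Con(p)$, the bounds $3\le t\le 2p$, with $t\ge 4$ when $p$ is odd, are immediate from \cref{Gideon}\ref{Dill}.
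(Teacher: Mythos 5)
Your proposal is correct and takes essentially the same route as the paper: conditions \ref{Rowena}--\ref{Colin} are matched to the GELO property via \cref{Gerald}, condition \ref{Sanjay} is matched to satisfiability via \cref{Terrence} together with the linear-algebra result of \cref{Lester} (\cref{Edith}, applicable because \ref{Rowena} forces every row of $\mono(M)$ to sum to zero), and the bounds on $t$ come from \cref{Gideon}\ref{Dill}. Your additional steps (the $\N$-to-$\Q$ reduction using $\mathbf{1}\in\ker \mono(M)$ and the union-of-subspaces characterization of non-satisfiability as $e_i-e_j$ lying in the row space) simply re-derive part of what \cref{Evelyn} and \cref{Edith} already provide, with the genuinely nontrivial direction still deferred to that appendix, exactly as in the paper's proof.
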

\begin{proof}
Let $\cP$ be the collection of non-split classes represented by $M$.
By \cref{Gerald}, conditions \ref{Rowena} and \ref{Colin} taken together are necessary and sufficient to make $\cP$ a GELO partition.
Once these are satisfied, the final condition \ref{Sanjay} is equivalent to saying that $\cP$ is satisfiable; this is due to \cref{Terrence} and \cref{Edith} (from \cref{Lester}), the latter of which applies because condition \ref{Rowena} makes every row sum of $\mono(M)$ zero.
The bounds on $t$ in the case that $\cP$ is contributory follow from \cref{Gideon}\ref{Dill}.
\end{proof}
The necessary and sufficient conditions for monochrome matrices of contributory partitions are then easy to deduce.
\begin{corollary}\label{Clarence-mono}
Let $p,t \in \N$.  A $p\times t$ matrix $M$ is a monochromatic matrix of some contributory partition if and only if the following hold:
\begin{enumerate}[label=(\roman*)]
\item\label{Rowena-mono} Each row of $M$ has one of the following as its entries (not necessarily in the order we list them):
\begin{itemize}
\item one instance of $2$, two instances of $-1$, and $t-3$ instances of $0$;
\item one instance of $-2$, two instances of $1$, and $t-3$ instances of $0$; or
\item two instances of $1$, two instances of $-1$, and $t-4$ instances of $0$.
\end{itemize}
\item The sum of the absolute values of the elements of each column in $M$ is a positive even integer.
\item\label{Sanjay-mono} The matrix $M$ has a reduced row echelon form in which no row has Hamming weight $2$ and no pair of rows has Hamming distance $2$.
\end{enumerate}
If $M$ is the monochromatic matrix of a contributory partition, then the number $t$ of columns satisfies $3 \leq t \leq 2 p$, and when $p$ is odd it also satisfies $t \geq 4$.
Furthermore, one obtains from $M$ a display matrix of a partition in $\Con(p)$ if one colors all of the $1$ and $-1$ entries of $M$ where one replaces every pair of instances of $1$ that occur within a row with one $1_r$ and one $1_b$ (in whichever order one desires) and every pair of instances of $-1$ that occur within a row with one $-1_r$ and one $-1_b$ (in whichever order one desires). 
\end{corollary}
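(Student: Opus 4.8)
The plan is to deduce this characterization directly from the display-matrix version, \cref{Clarence}, by transporting everything through the monochromator $\mono$. The key structural point is that $\mono$ merely forgets the red/blue coloring of the $\pm 1$ entries, while fixing the $0$, $2$, and $-2$ entries and preserving all absolute values. Consequently, conditions \ref{Colin} and \ref{Sanjay} of \cref{Clarence}, which refer only to the absolute values of column entries and to the reduced row echelon form of $\mono(M)$ respectively, survive the passage through $\mono$ verbatim; and condition \ref{Rowena} of \cref{Clarence} is carried by $\mono$ to condition \ref{Rowena-mono} here, since a row with one each of $2$, $-1_r$, $-1_b$ and $t-3$ zeros becomes a row with one $2$, two $-1$'s, and $t-3$ zeros, and similarly for the other two admissible row shapes.

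For the forward implication, suppose $M$ is a monochromatic matrix of a contributory partition $\cP$. By definition $M=\mono(D)$ for some display matrix $D$ of $\cP$, and \cref{Clarence} guarantees that $D$ satisfies its conditions \ref{Rowena}--\ref{Sanjay} together with the stated bounds on the number of columns. Applying the observations of the previous paragraph, $M=\mono(D)$ then satisfies the three conditions listed in the corollary, and the bounds $3\le t\le 2p$ (with $t\ge 4$ when $p$ is odd) are inherited unchanged from \cref{Clarence} (equivalently from \cref{Gideon}\ref{Dill}).

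For the converse, suppose $M$ satisfies the three conditions of the corollary. I would recover a display matrix $D$ with $\mono(D)=M$ by the coloring described in the final sentence of the statement: leave every $0$, $2$, and $-2$ entry of $M$ untouched, and in each row replace the unique pair of $1$'s by one $1_r$ and one $1_b$, and the unique pair of $-1$'s by one $-1_r$ and one $-1_b$, in whichever order one likes. Condition \ref{Rowena-mono} is exactly what makes this recoloring well defined --- in every row the $1$'s (resp.\ the $-1$'s) occur in a pair or not at all --- and it turns each row of $D$ into one of the three shapes of \cref{Clarence}\ref{Rowena}. Recoloring changes no absolute value, so $D$ still satisfies \cref{Clarence}\ref{Colin}, and $\mono(D)=M$ by construction, so $D$ satisfies \cref{Clarence}\ref{Sanjay}. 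Hence $D$ is a display matrix of some $\cP\in\Con(p)$ by \cref{Clarence}, and therefore $M=\mono(D)$ is a monochromatic matrix of a contributory partition; this simultaneously establishes the converse and the concluding assertion about passing from $M$ back to a display matrix.

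There is no genuinely hard step here once \cref{Clarence} is in hand; the only point requiring a moment's care is the claim, implicit in the last sentence of the corollary, that the resulting partition is contributory no matter how the red/blue choices are made, and that no constraint beyond the three listed is needed to reconstruct a legitimate display matrix. Both follow because display-matrix-hood (via \cref{Patrick} and \cref{Gerald}) imposes only the row condition \cref{Clarence}\ref{Rowena} alongside the color-blind conditions \cref{Clarence}\ref{Colin} and \cref{Clarence}\ref{Sanjay}, and because \cref{Clarence}\ref{Rowena} is manifestly insensitive to swapping $1_r\leftrightarrow 1_b$ or $-1_r\leftrightarrow -1_b$ within a row; thus the colorings in different rows, and in different pairs within a single row, may be chosen completely independently.
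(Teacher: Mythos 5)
Your proposal is correct and follows exactly the route the paper intends: the paper states this corollary without proof as an immediate consequence of \cref{Clarence}, since the monochromator only forgets the red/blue labels while leaving conditions \ref{Colin} and \ref{Sanjay} untouched, and any admissible recoloring of a matrix satisfying \ref{Rowena-mono} restores a legitimate display matrix. You have simply written out in full the color-forgetting/recoloring argument that the paper treats as evident, including the correct observation that condition \ref{Sanjay} of \cref{Clarence} already refers to $\mono$ of the display matrix and so transfers verbatim.
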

Now we list necessary conditions for absolute matrices of contributory partitions.
\begin{corollary}\label{Clarence-abs}
Let $p,t \in \N$.  If a $p\times t$ matrix $M$ is an absolute matrix of some contributory partition $\cP \in \Con(p)$, then the following must hold:
\begin{enumerate}[label=(\roman*)]
\item The number $t$ of columns satisfies $3 \leq t \leq 2 p$, and when $p$ is odd it also satisfies $t \geq 4$.
\item Each row of $M$ has one of the following as its entries (not necessarily in the order we list them):
\begin{itemize}
\item one instance of $2$, two instances of $1$, and $t-3$ instances of $0$; or
\item four instances of $1$ and $t-4$ instances of $0$.
\end{itemize}
\item The sum of the elements of each column in $M$ is a positive even integer.
\end{enumerate}
\end{corollary}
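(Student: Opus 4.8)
The plan is to deduce \cref{Clarence-abs} directly from \cref{Clarence-mono} by applying the absolutizer $\abs$ to a monochrome matrix and tracking its effect on rows, columns, and the column count. If $M$ is an absolute matrix of some $\cP\in\Con(p)$, then by definition $M=\abs(\mono(D))$ for some display matrix $D$ of $\cP$, so $N=\mono(D)$ is a monochromatic matrix of the contributory partition $\cP$ and $M=\abs(N)$. Thus $N$ satisfies the conditions of \cref{Clarence-mono} together with the column-count bounds stated there, and it remains only to transfer the relevant consequences through $\abs$.

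Since $N$ and $M=\abs(N)$ have the same shape $p\times t$ with $t=|\cP|$, the bounds $3\le t\le 2p$ (and $t\ge 4$ when $p$ is odd) from \cref{Clarence-mono} immediately give condition (i). For condition (ii), I would run through the three row patterns allowed for $N$ by \cref{Clarence-mono}\ref{Rowena-mono}: a row with one $2$, two $-1$'s, and $t-3$ zeros, and a row with one $-2$, two $1$'s, and $t-3$ zeros, both map under $\abs$ to a row with one $2$, two $1$'s, and $t-3$ zeros; while a row with two $1$'s, two $-1$'s, and $t-4$ zeros maps to a row with four $1$'s and $t-4$ zeros. These are exactly the two options in condition (ii) of \cref{Clarence-abs}. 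For condition (iii), note that $\abs$ replaces each entry of $N$ by its absolute value, so the sum of the entries in a given column of $M$ equals the sum of the absolute values of the entries in the corresponding column of $N$, which is a positive even integer by the column condition of \cref{Clarence-mono}.

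I do not anticipate any genuine obstacle: all the mathematical content lives in \cref{Clarence} and \cref{Clarence-mono}, and the argument here is bookkeeping under $\abs$. The one point worth flagging is that \cref{Clarence-abs} records only necessary conditions, since $\abs$ discards both the coloring and the reduced-row-echelon-form condition \cref{Clarence-mono}\ref{Sanjay-mono} (a property of $N=\mono(D)$, not of $M=\abs(N)$, and the one governing satisfiability); hence one should not attempt to prove a converse.
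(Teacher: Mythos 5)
Your proposal is correct and matches the paper's intended argument: the paper states \cref{Clarence-abs} without proof precisely because, as you observe, it is immediate bookkeeping obtained by pushing the row patterns, column-sum condition, and column-count bounds of \cref{Clarence-mono} through the absolutizer $\abs$. Your closing remark that only necessary conditions survive (since $\abs$ destroys the reduced-row-echelon-form condition \ref{Sanjay-mono}) is also consistent with the paper, which phrases \cref{Clarence-abs} as necessary conditions only.
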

Now we investigate how the action of our group $\cWp$ affects these matrices.  Recall that if $\pi \in \cWp$, then we can write it as 
\[
\pi=\left(\left(\left((\digamma_{0,0},\digamma_{0,1}),\sigma_0\right),\ldots,\left((\digamma_{p-1,0},\digamma_{p-1,1}),\sigma_{p-1}\right)\right),\epsilon\right),
\]
where $\epsilon \in S_p$ and $\sigma_e$, and $\digamma_{e,s}$ are in $S_2$ for every $(e,s) \in [p]\times [2]$, so that if $(e,s,v) \in \indexset$ we have
\[
\pi(e,s,v)=\left(\epsilon(e),\sigma_{\epsilon(e)}(s),\digamma_{\epsilon(e),\sigma_{\epsilon(e)}}(v)\right),
\]
and we say that $\pi$ uses the permutation $\epsilon$ to permute the equations, then the permutation $\sigma_e$ to permute the sides of equation $e$ for each $e \in [p]$, and then the permutation $\digamma_{e,s}$ to permute the places on side $s$ of equation $e$ for each $(e,s) \in [p]\times [2]$.
If we apply $\pi$ to some $P \subseteq \indexset$ with display vector $v$, the rows of $v$ are permuted by $\epsilon$; then for each $e \in [p]$, we negate row $e$ if $\sigma_e$ transposes $0$ and $1$ (otherwise we do nothing); and then for each $(e,s) \in [p]\times[2]$, if $((-1)^{s})_r$ or $((-1)^s)_b$ occurs on row $e$, then we change its color subscript to the other color if $\digamma_{e,s}$ transposes $0$ and $1$ (otherwise we do nothing).
Thus, the effect of $\pi$ on a display matrix of a contributory partition $\cP \in \Con(p)$ is to permute the rows in some way, then negate some subset (possibly empty) of the rows, then swap color labels between pairs $1_r$ and $1_b$ that occur within a row for some subset of the rows that contain such elements, and then swap color labels between pairs $-1_r$ and $-1_b$ that occur within a row for some subset of the rows that contain such elements.
Furthermore, every possible choice of row permutation, negation of rows, and color change within rows is possible.
From this, we then see that the effect of $\pi$ on a monochromatic matrix of $\cP$ is to permute rows and negate some subset of rows, where every possible choice of row permutation and negation of rows is possible.
And from this, the effect of $\pi$ on an absolute matrix of $\cP$ is to permute rows, and every row permutation is possible.
With this in mind, we let $D_p$, $M_p$, and $A_p$ respectively be the set of all display, monochromatic, and absolute matrices of partitions in $\Con(p)$, and we make an equivalence relation $\sim$ on each set that decrees that $M\sim N$ under the action of the group generated by column permutations (this just reflects different choices of what order to list the display vectors of classes), row permutations, negation of any row, swapping the color labels of any pair $1_r, 1_b$ that might occur on a row, and swapping of the color labels of any pair $-1_r,-1_b$ that might occur on a row.  Then there is a bijective map from $\Isom(p) \to D_p/\sim$ and $\mono$ induces bijection from $D_p/\sim$ to $M_p/\sim$ while $\abs$ induces a surjection from $M_p/\sim$ to $A_p/\sim$.
This suggests a procedure for finding a set of representatives for $\Isom(p)$.
\begin{procedure}\label{Natasha}
For $p \in \N$, the following procedure produces a set of partitions in $\Con(p)$ with one representative from each class in $\Isom(p)$.
\begin{enumerate}
\item Find representatives of all classes in $A_p/\sim$.
\item For each representative $R$ found in the previous step, find a set $\{S_1,\ldots,S_n\}$ of representatives for all classes in $M_p/\sim$ whose elements $\abs$ maps to $R$.  This can be done by negating two of the $1$ entries in each row of $R$ in every possible way modulo permutation of columns, permutation of rows, and negation of rows.
\item Each representative $S_j$ found in the previous step corresponds to a unique class in $D_p/\sim$ of which a representative $T$ may be obtained by any coloring of the $1$ and $-1$ entries that occur in the rows of $S_j$ so that each pair of $1$'s turns into one $1_r$ and one $1_b$ and each pair of $-1$'s turns into one $-1_r$ and one $-1_b$.
\item Each representative $T$ found in the previous step corresponds to a unique class in $\Isom(p)$, a representative of which can be found by writing down the partition whose classes have the columns of $T$ as their display vectors.
\end{enumerate}
\end{procedure}
As an example, we now use this procedure to find a set of representatives for $\Isom(2)$, which we use to compute the variance in \cref{Veronica}.
\begin{example}\label{Isidore}
We claim that there are precisely two equivalence classes, $\fC_1$ and $\fC_2$, in $\Isom(2)$, which are represented respectively by partitions
\begin{align*}
\cP_1 & = \Big\{\{(0,0,0),(0,0,1),(1,0,0),(1,0,1)\}, \{(0,1,0),(1,1,0)\}, \{(0,1,1),(1,1,1)\}\Big\} \text{ and }\\
\cP_2 & = \Big\{\{(0,0,0),(1,0,0)\}, \{(0,0,1),(1,0,1)\}, \{(0,1,0),(1,1,0)\}, \{(0,1,1),(1,1,1)\}\Big\}.
\end{align*}
To prove this claim, we use \cref{Natasha}.
We start by finding representatives modulo row and column permutations for all the absolute matrices for partitions in $\Con(2)$.
\cref{Clarence-abs} tells us that absolute matrices for partitions in $\Con(2)$ are $2\times t$ matrices with $3 \leq t \leq 4$ that have entries in $\{0,1,2\}$ where each column sum is a positive even integer and each row has either a $2$ and two $1$'s (and if there is any remaining entry, then it must equal $0$) or else four $1$'s.
When $t=3$, both rows have $2,1,1$ in some order, and the $2$'s must be stacked to avoid an odd column sum, so the only possible representative is
\[
A=\begin{pmatrix}
2 & 1 & 1 \\
2 & 1 & 1
\end{pmatrix}.
\]
When $t=4$, each row either has $2,1,1,0$ in some order or $1,1,1,1$.  One cannot mix both kinds of rows, since then there would be columns with odd sums.  The only way to obtain positive and even column sums with two rows each containing $2,1,1,0$ would be
\[
B=\begin{pmatrix}
2 & 0 & 1 & 1 \\
0 & 2 & 1 & 1
\end{pmatrix},
\]
while both rows being $1,1,1,1$ produces
\[
C=\begin{pmatrix}
1 & 1 & 1 & 1 \\
1 & 1 & 1 & 1
\end{pmatrix}.
\]
Now we find representatives of classes of monochrome matrices that map to $A$, $B$, and $C$ by $\abs$, so we are assigning signs to the elements of the matrices so as to produce rows meeting condition \ref{Rowena-mono} of \cref{Clarence-mono}, and we are looking for representatives modulo row and column permutations and negation of any selection of rows.  If $X$ is an absolute matrix, then we label the monochrome matrices that arise from assigning signs to $X$ by $X_1,\ldots,X_n$ for some $n$, and in this way we obtain
\begin{center}
\begin{tabular}{ll}
$A_1=\begin{pmatrix}
2 & -1 & -1 \\
2 & -1 & -1
\end{pmatrix}$,
&
$B_1=\begin{pmatrix}
2 & 0 & -1 & -1 \\
0 & 2 & -1 & -1
\end{pmatrix}$,
\\[20pt]
$C_1=\begin{pmatrix}
1 & 1 & -1 & -1 \\
1 & 1 & -1 & -1
\end{pmatrix}$, and
&
$C_2=\begin{pmatrix}
1 & 1 & -1 & -1 \\
1 & -1 & 1 & -1
\end{pmatrix}$.
\end{tabular}
\end{center}
These meet the first two conditions of \cref{Clarence-mono}, but then when we consider the final condition \ref{Sanjay-mono} of \cref{Clarence-mono}, we see that $B_1$ has a reduced row echelon form where a pair of rows has Hamming distance $2$ while $C_2$ has a reduced row echelon form where a row has Hamming weight $2$, while $A_1$ and $C_1$ have reduced row echelon forms that meet condition \ref{Sanjay-mono}.
To obtain representatives of classes of display matrices, we then color the $1$ and $-1$ entries of $A_1$ and $C_1$ as we wish while respecting condition \ref{Rowena} of \cref{Clarence} to obtain
\[
A_1'=\begin{pmatrix}
2 & -1_r & -1_b \\
2 & -1_r & -1_b
\end{pmatrix}
\qquad \text{and} \qquad
C_1'=\begin{pmatrix}
1_r & 1_b & -1_r & -1_b \\
1_r & 1_b & -1_r & -1_b
\end{pmatrix},
\]
which respectively represent to the partitions $\cP_1$ and $\cP_2$ that we claimed to be representatives of $\Isom(2)$ at the beginning of this example.
\end{example}
We also use \cref{Natasha} to find a set of representatives for $\Isom(3)$ in \cref{Isabel-proof}, which we use to compute the skewness in \cref{Simon}.
We also programmed a computer to find sets of representatives for $\Isom(4)$ and $\Isom(5)$ in our computer-assisted calculation of kurtosis and the fifth moment (see \cref{Curtis}).

\section{Positivity of moments}\label{Prunella}

We now use our general theory to prove that the central moments are strictly positive, except when the sequences involved are very short (in which case the central moments are zero).
\begin{theorem}\label{Lynn}
Let $\ell$ and $p$ be positive integers.
Then $\mom{p} \ADF(f)$ is nonnegative.
Moreover, if (i) $p=1$, (ii) $p$ is odd with $p>1$ and $\ell\leq 3$, or (iii) $p$ is even and $\ell\leq 2$, then $\mom{p} \ADF(f)$ is zero; otherwise it is strictly positive.  
\end{theorem}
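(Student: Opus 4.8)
The plan is to first reduce the statement about $\mom{p}\ADF(f)$ to one about $\mom{p}\ssac(f)$. Since $\ADF(f)=-1+\ssac(f)/\ell^2$ for every binary sequence $f$ of length $\ell$, subtracting the mean and raising to the $p$th power gives $\mom{p}\ADF(f)=\mom{p}\ssac(f)/\ell^{2p}$, so it suffices to prove the corresponding assertions for $\mom{p}\ssac(f)$. By \cref{Sanri}, $\mom{p}\ssac(f)=\sum_{\cP\in\Con(p)}|\As(\cP,=,\ell)|$ is a sum of nonnegative integers, hence nonnegative, and it is strictly positive if and only if $\As(\cP,=,\ell)\neq\emptyset$ for at least one contributory partition $\cP$.

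For the cases in which the moment vanishes, I would invoke \cref{Gideon}\ref{Dill}: every $\cP\in\Con(p)$ has $|\cP|\geq 3$, and $|\cP|\geq 4$ when $p$ is odd. An element of $\As(\cP,=,\ell)$ assigns $|\cP|$ pairwise distinct values, all drawn from the $\ell$-element set $[\ell]$, so $\As(\cP,=,\ell)=\emptyset$ whenever $\ell<|\cP|$. Consequently $\mom{p}\ssac(f)=0$ whenever $\ell\leq 2$, and also whenever $p$ is odd and $\ell\leq 3$; combined with the fact that $\Con(1)=\emptyset$ (there is no GELO partition of $[1]\times\lindexset$), this settles cases (i), (ii), and (iii).

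The real work is to show strict positivity in the remaining cases by exhibiting, for each such $(p,\ell)$, a single contributory partition with a nonempty solution set. When $p$ is even (hence $p\geq 2$) and $\ell\geq 3$, I would take the three-class partition $\cP=\{P,Q,R\}$ with $P=\{(e,0,v):e\in[p],\ v\in[2]\}$, $Q=\{(e,1,0):e\in[p]\}$, and $R=\{(e,1,1):e\in[p]\}$; all three classes have even cardinality because $p$ is even, each restriction $\cP_{\{e\}}$ has type $\ms{2,1,1}$ (so $\cP$ is GELO), and the assignment sending $P,Q,R$ to $1,0,2$ lies in $\As(\cP,=,3)\subseteq\As(\cP,=,\ell)$ because $1+1=0+2$; thus $\cP\in\Con(p)$ and the term it contributes is positive. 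When $p$ is odd (hence $p\geq 3$) and $\ell\geq 4$, \cref{Gideon}\ref{Dill} forces $|\cP|\geq 4$, and I would build a four-class partition from a three-equation "gadget": use the assignment $(x_{000},x_{001},x_{010},x_{011})=(0,3,1,2)$, $(x_{100},x_{101},x_{110},x_{111})=(1,1,0,2)$, $(x_{200},x_{201},x_{210},x_{211})=(2,2,1,3)$, which is consistent, has non-even restriction to each of the three equations, and uses each of the values $0,1,2,3$ an even number of times, so its induced partition on $[3]\times\lindexset$ is GELO. For $p=3$ this partition is already contributory, with the displayed assignment witnessing $\As(\cP,=,4)\neq\emptyset$; for odd $p\geq 5$, I would take the assignment on $[p]\times\lindexset$ that agrees with this gadget on equations $0,1,2$ and with the even-$p$ construction (twin of value $1$ on the left, $\{0,2\}$ on the right) on equations $3,\ldots,p-1$, and let $\cP$ be its induced partition. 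Because $p-3$ is even, every value-class of $\cP$ has even size, each equation's restriction is non-even, and the assignment takes values in $[4]\subseteq[\ell]$ with distinct classes getting distinct values, so it lies in $\As(\cP,=,4)\subseteq\As(\cP,=,\ell)$ and $\mom{p}\ssac(f)>0$.

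The main obstacle is this explicit construction in the odd case: one needs a consistent system of (few) equations whose value-partition is globally even but locally odd, and the parity accounting for the class sizes is the subtle point — a single extra equation appended to the even construction necessarily spoils global evenness, which is why a three-equation gadget, itself using every value an even number of times, is required as the odd "seed."
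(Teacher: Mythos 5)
Your proposal is correct and follows essentially the same route as the paper's proof: reduce to $\mom{p}\ssac(f)$, get nonnegativity and the vanishing cases from \cref{Sanri} together with \cref{Gideon}\ref{Dill} (too few values in $[\ell]$ for the required number of classes, plus $\Con(1)=\emptyset$), and prove strict positivity by exhibiting an explicit contributory partition with a satisfying assignment valued in $[3]$ (even $p$) or $[4]$ (odd $p$). Your explicit witnesses differ only cosmetically from the paper's (your odd-$p$ partition is a three-equation gadget padded with copies of the even-equation pattern, while the paper uses a single uniform four-class construction), and I checked that your gadget and its padding are indeed globally even, locally odd, and satisfied by the stated assignment.
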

\begin{proof}
\cref{Sanri} gives the $p$th central moment as
\begin{equation}\label{Eustace}
\mom{p} \ssac(f) = \sum_{\cP \in \Con(p)} |\As(\cP,=,\ell)|,
\end{equation}
which is clearly nonnegative.
The first central moment is trivially zero.
By \cref{Gideon}\ref{Dill}, a contributory partition $\cP$ must have at least three classes if $p$ is even and at least four classes if $p$ is odd.
If $\ell < |\cP|$, then $|\As(\cP,=,\ell)|=0$ since the tuples in distinct classes must be assigned distinct elements of $[\ell]$.
Therefore the $p$th central moment must be zero if $p$ is even and $\ell \leq 2$ or if $p$ is odd and $\ell \leq 3$.

Now suppose that $p$ is even and $\ell \geq 3$ and we shall show that the $p$th central moment is strictly positive.
Let $\cP=\{P_0,P_1,P_2\}$ be the following partition:
\begin{itemize}
\item $P_k=\{(e,0,k): e \in [p]\}$ for each $k \in [2]$ and
\item $P_2=\{(e,1,v): e \in [p], v \in [2] \}$.
\end{itemize}
Since $|P_0|=|P_1|=p$ and $|P_2|=2 p$, we see that $\cP$ is even.
Furthermore, $\cP_{\{e\}}$ is of type $\ms{2,1,1}$ for every $e \in [p]$, so $\cP$ is GELO.
Let $\tau \in \As([p])$ with $\tau_\gamma=0$, $2$, or $1$, respectively, when $\gamma \in P_0$, $P_1$, or $P_2$, respectively.
Then $\tau \in \As(\cP,=,\ell)$ because $\ell \geq 3$, so then $\cP \in \Con(p)$ and $|\As(\cP,=,\ell)| > 0$.
From \eqref{Eustace}, one sees that this makes the $p$th central moment strictly positive.

Finally, we suppose that $p>1$ is odd and $\ell \geq 4$ and show that the $p$th central moment is strictly positive.
Let $\cP=\{P_0,P_1,P_2,P_3\}$ be the following partition:
\begin{itemize}
\item $P_k = \{ (k,0,0), (k,0,1), (1-k,1,0), (e,0,k) : 2 \leq e < p \} $ for $ k\in [2] $ and 
\item $P_k = \{ (k-2, 1,1), (e,1,k-2) : 2 \leq e < p  \} $ for $k \in \{2,3\} $.
\end{itemize}
Since $|P_0|=|P_1|=p+1$ and $|P_2|=|P_3|=p-1$, the partition $\cP$ is even.
Furthermore, $\cP_{\{0\}}$ and $\cP_{\{1\}}$ are of type $\ms{2,1,1}$ and $\cP_{\{e\}}$ is of type $\ms{1,1,1,1}$ for $2 \leq e < p$, we see that $\cP$ is GELO.
Let $\tau \in \As([p])$ with $\tau_\gamma=1$, $2$, $0$, or $3$, respectively, when $\gamma \in P_0$, $P_1$, $P_2$, or $P_3$, respectively.
Then $\tau \in \As(\cP,=,\ell)$ because $\ell \geq 4$, so then $\cP \in \Con(p)$ and $|\As(\cP,=,\ell)| > 0$.
From \eqref{Eustace}, one sees that this makes the $p$th central moment strictly positive.
\end{proof}
\begin{remark}\label{Raphael}
\cref{Leonard} follows from \cref{Lynn} because when $\ell > 0$, we have $\ADF(f)=-1+\ssac(f)/\ell^2$.
\end{remark}

\section{Explicit calculation of variance}\label{Veronica}
In this section, we verify Jedwab's formula (\cref{Jessica}) for the variance of the demerit factor by finding a formula for the variance of $\ssac$, the sum of the squares of the autocorrelations.
This comes from the $p=2$ case of \cref{Sanria}, where we have explicitly determined $\Isom(2)$ in \cref{Isidore}, and for each class $\fP$ in $\Isom(2)$ we have determined the size of that class in \cref{Carl}, and where we shall determine $\Sols(\fP,\ell)$ for each isomorphism class $\fP$ in the next lemma.
Our calculation uses some basic combinatorial results that are recorded in \cref{Arthur}.
\begin{lemma}\label{Solomon}
For the classes $\fC_1$ and $\fC_2$ as defined in \cref{Isidore}, we have
\begin{align*}
\Sols(\fC_1,\ell) & =
\begin{cases}
\frac{\ell^2-2\ell}{2} & \text{if $\ell$ is even,} \\[4pt]
\frac{\ell^2-2\ell+1}{2} & \text{if $\ell$ is odd; and} \\
\end{cases}
\\[4pt]
\Sols(\fC_2,\ell) & =
\begin{cases}
\frac{2 \ell^3-9 \ell^2+10\ell}{3} & \text{if $\ell$ is even,} \\[4pt]
\frac{2 \ell^3-9 \ell^2+10\ell-3}{3} & \text{if $\ell$ is odd.}
\end{cases}
\end{align*}
\end{lemma}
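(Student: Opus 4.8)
By \cref{Sonia} we have $\Sols(\fC_i,\ell)=|\As(\cP_i,=,\ell)|$ for the representatives $\cP_1,\cP_2$ of the two classes produced in \cref{Isidore}, and \cref{Terrence} converts each of these cardinalities into a count of solutions over $[\ell]$ of a homogeneous linear system given by a monochrome matrix, subject to the constraint that no two coordinates coincide. Reading off the monochrome matrices $A_1$ and $C_1$ displayed in \cref{Isidore}, one sees that in both cases the two rows are identical, so the linear condition collapses to a single equation: for $\cP_1$ it is $2x_1=x_2+x_3$ on $(x_1,x_2,x_3)\in[\ell]^3$, and for $\cP_2$ it is $x_1+x_2=x_3+x_4$ on $(x_1,x_2,x_3,x_4)\in[\ell]^4$. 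Thus the whole computation reduces to two concrete lattice-point counts. My plan is, for each one, to first count \emph{all} solutions of the single equation and then remove, by inclusion--exclusion, those with a repeated coordinate, handling the even and odd cases separately via the elementary identities gathered in \cref{Arthur}.

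For $\cP_1$ I would first observe that any coincidence among $x_1,x_2,x_3$, combined with $2x_1=x_2+x_3$, forces $x_1=x_2=x_3$; hence the solutions with a repeated coordinate are exactly the $\ell$ constant ones, and it remains only to count all solutions of $2x_1=x_2+x_3$. Since $x_1$ is then determined by $x_2,x_3$ (and $(x_2+x_3)/2$ always lies in $[\ell]$ because $0\le x_2+x_3\le 2\ell-2$), this count equals the number of pairs $(x_2,x_3)\in[\ell]^2$ with $x_2+x_3$ even, namely $\ceil{\ell/2}^2+\floor{\ell/2}^2$, which is $\ell^2/2$ for $\ell$ even and $(\ell^2+1)/2$ for $\ell$ odd. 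Subtracting $\ell$ gives the claimed formula for $\Sols(\fC_1,\ell)$.

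For $\cP_2$ the total number of solutions of $x_1+x_2=x_3+x_4$ in $[\ell]^4$ is $\sum_{s}r(s)^2$, where $r(s)$ counts the ordered representations of $s$ as a sum of two elements of $[\ell]$; since $r$ runs through $1,2,\dots,\ell-1,\ell,\ell-1,\dots,1$, this sum equals $\ell^2+2\sum_{k=1}^{\ell-1}k^2=(2\ell^3+\ell)/3$ (an instance of the identities in \cref{Arthur}). To impose distinctness I would apply inclusion--exclusion to the four events $E_1\colon x_1=x_3$, $E_2\colon x_1=x_4$, $E_3\colon x_1=x_2$, $E_4\colon x_3=x_4$; within the solution set, $x_2=x_4$ already forces $x_1=x_3$ and $x_2=x_3$ forces $x_1=x_4$, so these four events genuinely capture every way in which two coordinates can agree. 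One checks that $|E_1|=|E_2|=\ell^2$ (the quadruples $(a,b,a,b)$ and $(a,b,b,a)$), that $|E_3|=|E_4|=\ceil{\ell/2}^2+\floor{\ell/2}^2$ exactly as in the previous paragraph, and --- the key simplification --- that every intersection of two or more of the $E_i$ forces $x_1=x_2=x_3=x_4$ and so has exactly $\ell$ elements. Inclusion--exclusion then yields $|E_1\cup E_2\cup E_3\cup E_4|=3\ell^2-3\ell$ for $\ell$ even and $3\ell^2-3\ell+1$ for $\ell$ odd, and subtracting this from $(2\ell^3+\ell)/3$ produces the stated value of $\Sols(\fC_2,\ell)$.

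The conceptual content is light: once \cref{Terrence} is invoked, everything is elementary. The one place I expect to need care is the $\cP_2$ inclusion--exclusion --- specifically, confirming that the four events $E_1,\dots,E_4$ exhaust all coincidences of coordinates once $x_1+x_2=x_3+x_4$ is imposed, and that all of their multiple intersections collapse to the diagonal --- together with the routine but error-prone bookkeeping of the parity cases. No new idea beyond the row-collapse observation and that inclusion--exclusion bookkeeping appears to be needed.
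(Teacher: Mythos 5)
Your proposal is correct and follows essentially the same route as the paper: reduce via \cref{Sonia} and \cref{Terrence} to counting distinct-coordinate solutions of $2A=B+C$ and $A+B=C+D$ over $[\ell]$, count all solutions, and subtract those with coincident coordinates. The only difference is cosmetic --- you re-derive inline (your parity count and the inclusion--exclusion over the four coincidence events, all of whose multiple intersections collapse to the diagonal) what the paper delegates to \cref{Persephone} and \cref{wowzers}\ref{wowzers-two} in \cref{Arthur}, and your arithmetic in both parity cases matches the stated formulas.
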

\begin{proof}
For each $j \in \{1,2\}$, the partition $\cP_j$ as described in \cref{Isidore} is an element of $\fC_j$, so \cref{Sonia} tells us that $\Sols(\fC_j,\ell)=|\As(\cP_j,=,\ell)|$, which by \cref{Terrence} is the number of solutions whose coordinates have distinct values in $[\ell]$ of the homogeneous system $M x=0$ where $M$ is a monochrome matrix of $\cP_j$, examples of which can be found in the proof of \cref{Isidore} as $A_1$ and $C_1$ for $\cP_1$ and $\cP_2$, respectively.  So the rest of this proof consists of counting such solutions for the two homogeneous linear systems derived from these matrices.

For $\Sols(\fC_1,\ell)$, we need to solve the system
\begin{align*}
A+A&=B+C\\
A+A&=B+C
\end{align*}
with distinct $A, B, C \in [\ell]$.  The number of solutions is $\floor{(\ell-1)^2/2}$ by \cref{Persephone}, and is easily seen to be the quasi-polynomial that we claimed.

For $\Sols(\fC_2,\ell)$, we need to solve the system
\begin{align*}
A+B&=C+D\\
A+B&=C+D
\end{align*}
with distinct $A, B, C, D \in [\ell]$.
So we are counting the number of solutions of $A+B=C+D$ with distinct elements $A,B,C,D \in [\ell]$, so $\Sols(\fC_2,\ell)$ is given by \cref{wowzers}\ref{wowzers-two}.
\end{proof}
Since \cref{Isidore} tells us that $\Isom(2)=\{\fC_1,\fC_2\}$ and Lemmas \ref{Carl} and \ref{Solomon} tell us $|\fC_j|$ and $\Sols(\fC_j,\ell)$ for $j\in\{1,2\}$, the variance of $\ssac$ is now a straightforward calculation using \cref{Sanria} with $p=2$.
\begin{theorem}\label{Julie}
For any positive integer $\ell$,
\[
\mom{2} \ssac(f) =
\begin{cases}
\frac{16\ell^3-60\ell^2+56\ell}{3} & \text{if $\ell$ is even,} \\[4pt]
\frac{16\ell^3-60\ell^2+56\ell -12}{3} & \text{if $\ell$ is odd.} 
\end{cases}
\]
\end{theorem}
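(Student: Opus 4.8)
The plan is to invoke \cref{Sanria} in the case $p=2$, which gives
\[
\mom{2} \ssac(f) = \sum_{\fP \in \Isom(2)} |\fP|\,\Sols(\fP,\ell),
\]
and then to substitute the values of all the quantities on the right-hand side, which have already been determined in the preceding sections. Indeed, \cref{Isidore} tells us that $\Isom(2)=\{\fC_1,\fC_2\}$, \cref{Carl} tells us (via the orbit--stabilizer computation) that $|\fC_1|=|\fC_2|=2^3=8$, and \cref{Solomon} supplies the two quasi-polynomials $\Sols(\fC_1,\ell)$ and $\Sols(\fC_2,\ell)$. So the right-hand side reduces to $8\,\Sols(\fC_1,\ell)+8\,\Sols(\fC_2,\ell)$, and nothing remains but a short algebraic simplification, carried out separately according to the parity of $\ell$ since the $\Sols$ formulas are quasi-polynomials of period $2$.

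First I would handle the even case: here $8\,\Sols(\fC_1,\ell)=4(\ell^2-2\ell)$ and $8\,\Sols(\fC_2,\ell)=\tfrac{8}{3}(2\ell^3-9\ell^2+10\ell)$, so after placing both terms over the common denominator $3$ and collecting like terms one obtains $(16\ell^3-60\ell^2+56\ell)/3$. The odd case is identical in form except that the two summands become $4(\ell^2-2\ell+1)$ and $\tfrac{8}{3}(2\ell^3-9\ell^2+10\ell-3)$; the new constants contribute $4-8=-4=-12/3$ to the numerator, yielding $(16\ell^3-60\ell^2+56\ell-12)/3$. Both match the claimed formula.

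Since all the conceptual work has already been done --- classifying the two isomorphism classes of contributory partitions in $\Con(2)$ under the action of $\cWtwo$, computing their common size in \cref{Carl}, and counting distinct-coordinate solutions of the associated homogeneous linear systems in \cref{Solomon} with the help of the combinatorial lemmas of \cref{Arthur} --- there is no real obstacle here: \cref{Julie} is purely the bookkeeping step that assembles these pieces. As a consistency check, dividing the resulting formula by $\ell^4$ recovers Jedwab's expression for $\mom{2}\ADF(f)$ in \cref{Jessica}, in agreement with \eqref{Herman}.
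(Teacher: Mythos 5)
Your proposal is correct and is exactly the paper's argument: the paper likewise obtains \cref{Julie} by applying \cref{Sanria} with $p=2$ and plugging in $\Isom(2)=\{\fC_1,\fC_2\}$ from \cref{Isidore}, $|\fC_1|=|\fC_2|=2^3$ from \cref{Carl}, and the quasi-polynomials of \cref{Solomon}, leaving only the parity-split arithmetic you carried out (and your simplifications check out).
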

Since $\ADF(f)=-1+\ssac(f)/\ell^2$, we can divide this result by $\ell^4$ to verify \cref{Jessica} in the Introduction.

\begin{remark}
Our calculation in \cref{Julie} confirms Jedwab's calculation of the variance of the autocorrelation demerit factor \cite[Theorem 1]{Jedwab}.
The previous calculations (\cite[pp.~42-44]{Aupetit} and \cite[pp.~5--6]{Jedwab}) of the variance of the autocorrelation demerit factor of a binary sequence $f=(\ldots,f_0,f_1,f_2,\ldots)$ are expressed as sums of expectation values of products $f_{i_1} f_{i_2} \cdots f_{i_k}$ of terms from the sequence, where $i_1,\ldots,i_k \in [\ell]$ when the sequence is of length $\ell$.
The independence and identical uniform distribution of the terms in $\{1,-1\}$ makes the expectation value $1$ if the indices $i_1,\ldots,i_k$ can be arranged into pairs of identical values; otherwise the expectation value is $0$; we also use this principle in \cref{Kirkland} (which is one of the ingredients used to prove \cref{Sanri}, which in turn is used to prove \cref{Sanria}).
To make use of this principle requires one to enumerate ways in which the pairing of indices can occur.
For the earlier mean and variance calculations, it was possible to enumerate these possibilities by a more or less direct approach, but recall that the earlier attempt to compute the variance in \cite{Aupetit} had errors in the enumeration.
If one wants to compute higher moments, a direct approach would become far too complicated, so we introduced additional combinatorial and group theoretic ideas (the contributory partitions and their isomorphism classes from Sections \ref{Nellie} and \ref{Idelphonse}).

In \cite{Jedwab}, Jedwab also calculates the variance of the autocorrelation demerit factor for collections of binary sequences that satisfy various constraints (symmetric, antisymmetric, and skew-symmetric sequences).  The authors of this paper have been asked whether the methods presented here may be applied to such families of sequences.  Although we have not adapted these methods to these other ensembles, we consider it likely that a similar group-theoretic approach could be used in the calculations of their higher moments.
\end{remark}

\section{Explicit calculation of skewness}\label{Simon}

In this section, we find an explicit formula for the skewness of the demerit factor of binary sequences of length $\ell $, with independent uniformly distributed entries, as a function of $\ell$.
To do this we follow the same steps that we used when we were calculating the variance, but the calculations are more complicated for the third moment, so we state the main results here but record their proofs in Appendices \ref{Isabel-proof}--\ref{Apple-proof}.

Since the demerit factor is $\ADF(f)=-1+\ssac(f)/\ell^2$, we first find a formula for the third central moment of $ \ssac $, the sum of the squares of the autocorrelations.
This comes from the $p=3$ case of \cref{Sanria}, where we shall explicitly determine $\Isom(3)$, the set of isomorphism classes of partitions in $\Con(3)$, and for each class we shall determine the size of that class and the number of solutions attached to it.
First we need to find the partitions in $\Con(3)$.
\begin{lemma}\label{Isabel}
There are precisely eight equivalence classes, $\fC_1,\ldots,\fC_8$, in $\Isom(3)$, which are represented respectively by partitions
\begin{align*}
\cP_1 & = \Big\{ \big\{(0,0,0),(0,0,1),(1,1,0),(2,0,0)\big\}, 
                 \big\{(1,0,0),(1,0,1),(0,1,0),(2,0,1)\big\}, \\
      & \qquad   \big\{(0,1,1),(2,1,0)\big\},   
                 \big\{(1,1,1),(2,1,1)\big\} \Big\}; \\
\cP_2 & = \Big\{ \big\{(0,0,0),(0,0,1),(1,0,0),(1,0,1)\big\},   
                 \big\{(0,1,0), (2,0,0)\big\}, \\
      & \qquad   \big\{(0,1,1), (2,0,1)\big\},   
                 \big\{(1,1,0), (2,1,0)\big\},   
                 \big\{(1,1,1), (2,1,1)\big\} \Big\}; \\
\cP_3 & = \Big\{ \big\{(0,0,0), (0,0,1), (1,1,0), (2,1,0)\big\},   
                 \big\{(1,0,0), (1,0,1)\big\}, \\
      & \qquad   \big\{(2,0,0), (2,0,1)\big\},   
                 \big\{(0,1,0), (1,1,1)\big\},   
                 \big\{(0,1,1), (2,1,1)\big\} \Big\}; \\
\cP_4 & = \Big\{ \big\{(0,0,0), (0,0,1), (1,0,0), (2,0,0)\big\}, 
                 \big\{(0,1,0), (1,1,0)\big\}, \\
      & \qquad   \big\{(0,1,1), (2,1,0)\big\},   
                 \big\{(1,0,1), (2,1,1)\big\},   
                 \big\{(1,1,1), (2,0,1)\big\} \Big\}; \\
\cP_5 & = \Big\{ \big\{(0,0,0), (0,0,1)\big\},  
                 \big\{(1,0,0), (1,0,1)\big\},   
                 \big\{(2,0,0), (2,0,1)\big\}, \\
      & \qquad   \big\{(1,1,0), (2,1,1)\big\},
                 \big\{(2,1,0), (0,1,1)\big\},
                 \big\{(0,1,0), (1,1,1)\big\} \Big\}; \\
\cP_6 & = \Big\{ \big\{(0,0,0), (0,0,1)\big\},  
                 \big\{(1,0,0), (1,0,1)\big\},
                 \big\{(0,1,0), (2,0,0)\big\}, \\ 
      & \qquad   \big\{(0,1,1), (2,1,0)\big\}, 
                 \big\{(1,1,0), (2,0,1)\big\},
                 \big\{(1,1,1), (2,1,1)\big\} \Big\}; \\
\cP_7 & = \Big\{ \big\{(0,0,0), (1,1,0)\big\},  
                 \big\{(0,0,1), (1,1,1)\big\},   
                 \big\{(1,0,0), (2,1,0)\big\}, \\
      & \qquad   \big\{(1,0,1), (2,1,1)\big\},   
                 \big\{(2,0,0), (0,1,0)\big\},   
                 \big\{(2,0,1), (0,1,1)\big\} \Big\}\text{; and} \\
\cP_8 & = \Big\{ \big\{(0,0,0), (1,1,1)\big\},  
                 \big\{(0,1,0), (1,0,1)\big\},   
                 \big\{(1,0,0), (2,1,1)\big\}, \\
     & \qquad    \big\{(1,1,0), (2,0,1)\big\},   
                 \big\{(2,0,0), (0,1,1)\big\},   
                 \big\{(2,1,0), (0,0,1)\big\} \Big\}.
\end{align*}
\end{lemma}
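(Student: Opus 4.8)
The plan is to apply \cref{Natasha} with $p=3$, working backwards from absolute matrices through monochrome matrices to display matrices and then reading off the partitions; completeness of the list and pairwise non-isomorphism are the two points that need care.

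First I would enumerate a set of representatives for $A_3/\sim$ using \cref{Clarence-abs}. An absolute matrix of a partition in $\Con(3)$ is a $3\times t$ matrix with $4\le t\le 6$ (the lower bound holding because $p=3$ is odd), in which each row has either one $2$ together with two $1$'s or else four $1$'s (and is otherwise zero), and in which each column sum is a positive even integer. The column sums are the prospective class sizes and must total $4p=12$, and each column sum is at most $6$ (a column has $p=3$ entries, each at most $2$); partitioning $12$ into even parts that are each at least $2$ and at most $6$ then leaves only $t=4$ with column sums $\ms{4,4,2,2}$, $t=5$ with column sums $\ms{4,2,2,2,2}$, and $t=6$ with column sums $\ms{2,2,2,2,2,2}$ (the remaining candidate $t=4$ with sums $\ms{6,2,2,2}$ is discarded at once, since a column of sum $6$ must be all $2$'s, which forces the shape of the whole matrix and makes the resulting monochrome system $Mx=0$ have no solution with pairwise distinct coordinates). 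Within each of these three cases I would enumerate, modulo permutations of rows and of columns, the admissible placements of the nonzero entries; this is a bounded finite check.

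Next, for each absolute representative $R$ I would list representatives of the classes of $M_3/\sim$ lying over $R$ by assigning signs as in \cref{Clarence-mono} (per row: one $2$ with two $-1$'s, or two $1$'s with two $-1$'s), taken modulo row negation and column permutation. Each candidate monochrome matrix is then tested against the satisfiability condition \cref{Clarence-mono}\ref{Sanjay-mono}: I would compute a reduced row echelon form and discard any matrix in which some row has Hamming weight $2$ or some pair of rows has Hamming distance $2$, which by \cref{Terrence} is the same as the system $Mx=0$ having no solution with pairwise distinct coordinates. This filter eliminates most candidates; the eight survivors are coloured as in Step~3 of the procedure (which introduces no new classes), and Step~4 turns each resulting display matrix into a partition. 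One then checks that these are exactly $\cP_1,\dots,\cP_8$, of types $\ms{4,4,2,2}$, $\ms{4,2,2,2,2}$ (three classes), and $\ms{2,2,2,2,2,2}$ (four classes). Each $\cP_i$ is contributory because it is constructed as the display matrix of a contributory partition, but one may also verify this directly by exhibiting an explicit element of $\As(\cP_i,=,\ell)$ for small $\ell$.

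The final step is to confirm that $\cP_1,\dots,\cP_8$ are pairwise non-isomorphic. Partition type separates the three families, and within a family the members can be told apart by a finer isomorphism invariant---for instance the multiset of Hamming weights of the rows of a reduced row echelon form of the monochrome matrix, or the solution counts $\Sols(\fC_i,\ell)$ computed later in the skewness calculation, or, where those do not suffice, a direct verification that no element of $\cWthree$ carries one partition onto another. I expect the main obstacle to be the volume and bookkeeping of the enumeration in the first two paragraphs: one must be scrupulous in listing every placement of nonzeros and every sign pattern up to the stated equivalences so that no isomorphism class is overlooked, and must apply the satisfiability test correctly so that no spurious class is retained. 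Since $\cWthree$ and all the constraints in \cref{Clarence}, \cref{Clarence-mono}, and \cref{Clarence-abs} are completely explicit, this is a routine if lengthy verification, well suited to being carried out---and independently rechecked by computer---in the appendix.
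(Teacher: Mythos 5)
Your proposal is correct and follows essentially the same route as the paper's proof in \cref{Isabel-proof}: apply \cref{Natasha}, enumerate absolute matrices via \cref{Clarence-abs}, lift them to monochrome matrices via \cref{Clarence-mono}, discard those violating the reduced-row-echelon condition, then color to obtain display matrices and read off the eight partitions; your early dismissal of the column-sum-$6$ case is a sound minor shortcut (the paper keeps that absolute matrix and eliminates its monochrome lift at the echelon-form filter), and your explicit non-isomorphism check is handled implicitly in the paper by enumerating modulo the equivalence $\sim$. The only caveat is that the actual case-by-case enumeration, which is the substance of the paper's appendix, is deferred in your write-up as a routine finite verification rather than carried out.
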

The proof is given in \cref{Isabel-proof}.  Now we find the cardinalities of the classes in $\Isom(3)$.
\begin{lemma}\label{Genevieve}
Let $\fC_1,\ldots,\fC_8$ be the classes of $\Isom(3)$ as described in \cref{Isabel}.
We have $\card{\fC_1}=3 \cdot 2^7$, $\card{\fC_2}=3 \cdot 2^5$, $\card{\fC_3}=3 \cdot 2^6$, $\card{\fC_4}=3 \cdot 2^8$, $\card{\fC_5}=2^6$, $\card{\fC_6}=3\cdot 2^6$, $\card{\fC_7}=2^6$, and $\card{\fC_8}=2^8$.
\end{lemma}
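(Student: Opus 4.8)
The plan is to apply the orbit--stabilizer formula of \cref{Stanley}. Since $|\cWthree|=3!\cdot 2^9=3\cdot 2^{10}$, for each representative $\cP_j$ of \cref{Isabel} we have $|\fC_j|=(3\cdot 2^{10})/|\Stab_{\cWthree}(\cP_j)|$, so the eight asserted values of $|\fC_j|$ are equivalent to the eight statements that $|\Stab_{\cWthree}(\cP_j)|$ equals, respectively, $2^3$, $2^5$, $2^4$, $2^2$, $3\cdot 2^4$, $2^4$, $3\cdot 2^4$, and $3\cdot 2^2$ for $j=1,\dots,8$. The entire proof thus reduces to computing these eight stabilizers, exactly as was done for $p=2$ in \cref{Carl}.

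For each $j$ I would determine, just as in \cref{Carl}, the necessary and sufficient conditions on the components $\epsilon\in S_3$, $\sigma_0,\sigma_1,\sigma_2\in S_2$, and $\digamma_{e,s}\in S_2$ of a general $\pi\in\cWthree$ for $\pi(\cP_j)=\cP_j$, and then count the solutions. The display-matrix language of \cref{Scott} makes this systematic: taking a display matrix $M_j$ of $\cP_j$ (whose columns are the display vectors of the classes listed in \cref{Isabel}), the element $\pi$ acts on $M_j$ by permuting rows according to $\epsilon$, negating row $e$ when $\sigma_e\neq\mathrm{id}$, and transposing colour labels inside row $e$ when the relevant $\digamma_{e,s}\neq\mathrm{id}$, and $\pi$ stabilizes $\cP_j$ exactly when the result agrees with $M_j$ after some permutation of columns. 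Two elementary observations cut the search down. First, $\epsilon$ must preserve the ``local shape'': it must carry any equation $e$ with $(\cP_j)_{\{e\}}$ of type $\ms{2,1,1}$ (equivalently, any row of $M_j$ carrying a $\pm 2$ entry) to another such equation; for $\cP_1$, $\cP_2$, $\cP_6$ this forces $\epsilon$ to fix one particular equation, and likewise for $\cP_4$, whereas for $\cP_3$ (where all restrictions have type $\ms{2,1,1}$) and for $\cP_5$, $\cP_7$, $\cP_8$ (where all have type $\ms{1,1,1,1}$) the local shape alone imposes no constraint on $\epsilon$. Second, whenever a row of $M_j$ carries a $2$ (resp.\ a $-2$) entry, the permutation $\digamma_{e,0}$ (resp.\ $\digamma_{e,1}$) on that row is completely free, since transposing the two tied positions of the corresponding doubled class fixes every class of $\cP_j$ whatever the rest of $\pi$ does; each such row therefore contributes a factor of $2$ to the stabilizer.

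Combining these, for $\cP_1,\cP_2,\cP_4,\cP_6$ (where $\epsilon$ is pinned to at most two choices) and for $\cP_3,\cP_5$ (where it is not, but the monochrome columns of $M_j$ turn out to be pairwise distinct, so that once $\epsilon$ and the $\sigma_e$ are fixed the accompanying column permutation is forced and then the remaining $\digamma_{e,s}$ are forced as well) the stabilizer order comes out as the product of the free $2$-power from the $\pm 2$ rows and the number of admissible signed row permutations. For instance $\cP_1$ has two rows with a $2$ and admits only the identity and the swap of equations $0$ and $1$ (with trivial signs), giving $2^2\cdot 2=2^3$; $\cP_4$ has one such row and again exactly two admissible signed permutations, giving $2^2$; and $\cP_5$ has a $2$ in every row while all of $S_3$ is realizable with trivial signs, giving $2^3\cdot 6=3\cdot 2^4$. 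The genuinely delicate cases are $\cP_7$ and $\cP_8$: every row of their display matrices has two entries equal to $1$ and two equal to $-1$, so there is no free $2$-power, and for $\cP_7$ the display matrix even has repeated monochrome columns, so the column permutation and the colour swaps no longer decouple. For these one must explicitly exhibit the order-$3$ symmetry permuting the three equations --- together with its accompanying column permutation and colour swaps --- to account for the factor of $3$ in $|\Stab_{\cWthree}(\cP_7)|=48$ and $|\Stab_{\cWthree}(\cP_8)|=12$, and then verify that nothing further stabilizes.

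The main obstacle is not conceptual but the volume of careful bookkeeping: for each of the eight partitions one has to run through the (at most six) admissible choices of $\epsilon$ and the eight sign patterns $(\sigma_e)$, in each case pin down the column permutation that could make the display matrices agree, and confirm that the forced colour swaps succeed. I expect $\cP_7$ and $\cP_8$ (and, to a lesser degree, $\cP_3$ and $\cP_5$) to be the hardest, precisely because the constraint ``$\epsilon$ fixes the unique $\ms{2,1,1}$ equation'' gives no help there, so one must instead discover the hidden $\Z/3$-symmetry by direct inspection. A brute-force enumeration of each $\Stab_{\cWthree}(\cP_j)$ by computer furnishes an independent check of all eight values, and computing $|\Con(3)|$ separately and checking $\sum_{j=1}^{8}|\fC_j|=|\Con(3)|$ gives a further cross-check.
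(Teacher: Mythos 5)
Your overall strategy is exactly the paper's: reduce via the orbit--stabilizer formula of \cref{Stanley} to the eight stabilizer orders $2^3,2^5,2^4,2^2,3\cdot2^4,2^4,3\cdot2^4,3\cdot2^2$, which match the paper's values, and then determine for each $\cP_j$ the admissible $(\epsilon,\sigma_e,\digamma_{e,s})$ by a case-by-case check. However, the counting shortcut you propose for the ``easy'' cases is wrong for $\cP_2$, so as written one of your eight computations would fail. You assert that for $\cP_1,\cP_2,\cP_4,\cP_6$ (and $\cP_3,\cP_5$) the stabilizer order is the product of the free $2$-power coming from rows containing a $\pm2$ entry and the number of admissible signed row permutations. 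For $\cP_2$ this gives $2^2\cdot 2=2^3$, not the value $2^5$ that you (correctly) state at the outset and that is needed for $\card{\fC_2}=3\cdot2^5$. What your formula misses are the coupled colour swaps on sides carrying no $\pm2$: in $\cP_2$ the classes $\{(0,1,0),(2,0,0)\}$ and $\{(0,1,1),(2,0,1)\}$ are interchanged by taking $\digamma_{0,1}=\digamma_{2,0}=(0\,1)$ (and likewise $\{(1,1,0),(2,1,0)\}$, $\{(1,1,1),(2,1,1)\}$ by $\digamma_{1,1}=\digamma_{2,1}=(0\,1)$), with $\epsilon$ and all $\sigma_e$ trivial; these two independent symmetries contribute the extra factor $2^2$, and they are neither ``a free $\digamma$ on a $\pm2$ row'' nor a signed row permutation. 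Note also that your stated criterion for the decoupled cases (pairwise distinct monochrome columns) in fact excludes $\cP_2$, whose monochrome matrix has two pairs of repeated columns, so $\cP_2$ belongs with the delicate cases rather than the routine ones; a similar coupling (the freedom of $\sigma_2$ tied to forced swaps $\digamma_{0,1}=\digamma_{1,1}=\sigma_2$) is what produces $2^4$ for $\cP_6$, which your phrase ``admissible signed row permutations'' only covers if one verifies that such a negation extends to a stabilizing element.

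The repair is straightforward and brings you back to what the paper actually does: for each $\cP_j$ write down the full necessary and sufficient conditions on $\epsilon$, the $\sigma_e$, and all eight $\digamma_{e,s}$ for $\pi(\cP_j)=\cP_j$ (as you already plan to do for $\cP_7$ and $\cP_8$), rather than relying on the product heuristic; for $\cP_2$ this yields free choices $\epsilon\in\{(),(0\,1)\}$ and $\digamma_{0,0},\digamma_{0,1},\digamma_{1,0},\digamma_{1,1}$ with everything else determined, hence $\lvert\Stab_{\cWthree}(\cP_2)\rvert=2^5$. Your remaining case analyses, your identification of $\cP_7,\cP_8$ as the cases requiring the order-$3$ symmetry, and your suggested computer and consistency cross-checks are sound and consistent with the paper's proof.
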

The proof is given in \cref{Genevieve-proof}.
To use \cref{Sanria}, we need to determine $\Sols(\fC)$ for each class $\fC$ in $\Isom(3)$.
\begin{lemma} \label{Apple}
For $\fC_1,\ldots,\fC_8$ as defined in \cref{Isabel}, we have
\begin{align*}
\Sols(\fC_1,\ell) & =
\begin{cases}
\frac{\ell^2-3\ell}{3} &  \text{if $\ell \equiv 0 \pmod{3}$,} \\[4pt]
\frac{\ell^3-3\ell+2}{3} & \text{if $\ell \equiv \pm 1 \pmod{3}$;}
\end{cases}
\\[4pt]
\Sols(\fC_2,\ell) & =
\begin{cases}
\frac{\ell^3-6\ell^2+8\ell}{3} &  \text{if $\ell \equiv 0 \pmod{2}$,} \\[4pt]
\frac{\ell^3-6\ell^2+11\ell -6}{3} & \text{if $\ell \equiv 1 \pmod{2}$;}
\end{cases}
\\[4pt]
\Sols(\fC_3,\ell) & =
\begin{cases}
\frac{\ell^2-4\ell}{4} & \text{if $\ell \equiv 0 \pmod{4}$,} \\[4pt]
\frac{\ell^2-4\ell+3}{4} & \text{if $\ell \equiv \pm 1 \pmod{4}$,}\\[4pt]
\frac{\ell^2-4\ell+4}{4} & \text{if $\ell \equiv 2 \pmod{4}$;} 
\end{cases}
\\[4pt]
\Sols(\fC_4,\ell) & =
\begin{cases}
\frac{5\ell^3-32\ell^2+52\ell}{12} & \text{if $\ell \equiv 0 \pmod{6}$,} \\[4pt]
\frac{5\ell^3-32\ell^2+55\ell-28}{12} & \text{if $\ell \equiv \pm 1 \pmod{6}$,} \\[4pt]
\frac{5\ell^3-32\ell^2+52\ell-16}{12} & \text{if $\ell \equiv \pm 2 \pmod{6}$,} \\[4pt]
\frac{5\ell^3-32\ell^2+55\ell-12}{12} & \text{if $\ell \equiv 3 \pmod{6}$;}
\end{cases}
\\[4pt]
\Sols(\fC_5,\ell) & =
\begin{cases}
\frac{\ell^3-9\ell^2+20\ell}{4} & \text{if $\ell \equiv 0 \pmod{4}$,} \\[4pt]
\frac{\ell^3-9\ell^2+23\ell-15}{4} & \text{if $\ell \equiv \pm 1 \pmod{4}$,} \\[4pt]
\frac{\ell^3-9\ell^2+20\ell-12}{4} & \text{if $\ell \equiv 2 \pmod{4}$;}
\end{cases}
\\[4pt]
\Sols(\fC_6,\ell) & =
\begin{cases}
\frac{\ell^3-8\ell^2+17\ell}{3} & \text{if $\ell \equiv 0 \pmod{12}$,} \\[4pt]
\frac{\ell^3-8\ell^2+17\ell-10}{3} & \text{if $\ell \equiv \pm 1, \pm 2, \pm 5 \pmod{12}$,} \\[4pt]
\frac{\ell^3-8\ell^2+17\ell-6}{3}  & \text{if $\ell \equiv \pm 3, 6 \pmod{12}$,} \\[4pt]
\frac{\ell^3-8\ell^2+17\ell-4}{3} & \text{if $\ell \equiv \pm 4 \pmod{12}$;}
\end{cases}
\\[4pt]
\Sols(\fC_7,\ell) & =
\begin{cases}
\frac{\ell^4-10\ell^3+32\ell^2-32\ell}{2} & \text{if $\ell \equiv 0 \pmod{2}$,} \\[4pt]
\frac{\ell^4-10\ell^3+32\ell^2-38\ell+15}{2} & \text{if $\ell \equiv 1 \pmod{2}$; and}
\end{cases} 
\\[4pt]
\Sols(\fC_8,\ell) & =
\begin{cases}
\frac{\ell^4-11\ell^3+39\ell^2-46\ell}{2} & \text{if $\ell \equiv 0 \pmod{6}$,} \\[4pt]
\frac{\ell^4-11\ell^3+39\ell^2-49\ell+20}{2} & \text{if $\ell \equiv \pm 1 \pmod{6}$,} \\[4pt]
\frac{\ell^4-11\ell^3+39\ell^2-46\ell+8}{2} & \text{if $\ell \equiv \pm 2 \pmod{6}$,} \\[4pt]
\frac{\ell^4-11\ell^3+39\ell^2-49\ell+12}{2} & \text{if $\ell \equiv 3 \pmod{6}$.}
\end{cases}
\end{align*}
\end{lemma}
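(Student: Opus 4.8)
The plan is to handle each of the eight classes by the route used for $\fC_1$ and $\fC_2$ in \cref{Solomon}. For $j \in \{1,\ldots,8\}$, the representative partition $\cP_j$ of \cref{Isabel} lies in $\fC_j$, so \cref{Sonia} gives $\Sols(\fC_j,\ell) = \card{\As(\cP_j,=,\ell)}$, and \cref{Terrence} reinterprets this as the number of column vectors $x \in [\ell]^{\card{\cP_j}}$ with pairwise distinct entries solving the homogeneous linear system $M_j x = 0$, where $M_j$ is a monochrome matrix of $\cP_j$ read off from the display matrices that \cref{Natasha} produces in the proof of \cref{Isabel}. The first step is therefore mechanical: write down each $M_j$ and its system of at most three equations in at most six unknowns, and solve it over $\Q$, so that its solutions are parametrized by $\card{\cP_j}$ minus the rank of $M_j$ free variables.

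The substance is the ensuing lattice-point count. Two of the systems collapse to especially simple shapes: for $\cP_1$ the equations $2A = B+C$, $2B = A+D$, $A+B = C+D$ force $(C,A,B,D)$ to be a four-term arithmetic progression, and $\cP_3$ reduces likewise to counting five-term progressions, so for these two classes $\Sols(\fC_j,\ell)$ is a single truncated sum over a nonzero common difference, whose nonvanishing already supplies the required distinctness. For the remaining six classes the solution space carries more free parameters (as many as four, for $\cP_7$ and $\cP_8$), and there the strategy is to count \emph{all} solutions in $[\ell]$ first, ignoring distinctness, and then subtract by inclusion--exclusion the solutions in which some pair of coordinates coincides; each sub-count that arises is an elementary count of lattice points on an affine subspace of a box, of the type recorded in \cref{Arthur}---the lemmas \cref{Persephone} and \cref{wowzers} already used for $p=2$, together with their three- and four-summand analogues. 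The residue conditions modulo $2$, $3$, $4$, $6$, and $12$ in the stated formulas are precisely what emerges from the parity and truncation phenomena in these parametrizations---a parameter whose step is forced to be even, a summation range truncated at $\ell/3$ or $\ell/4$, and least common multiples thereof---combined with the floor functions in the lattice-point counts.

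Carrying this out class by class and simplifying produces the eight quasi-polynomials claimed; the computation is routine but long, so we defer it to \cref{Apple-proof}, with the auxiliary counting lemmas collected in \cref{Arthur}. The step I expect to be the main obstacle is $\fC_6$: the two three-term progressions it produces share a common difference, and excluding the ways in which they overlap forces the summation ranges to be truncated at $\ell/2$, $\ell/3$, and $\ell/4$ simultaneously, so the count is genuinely sensitive to $\ell \bmod 12$ and the case split there is the most delicate (the quartic counts for $\fC_7$ and $\fC_8$ are also substantial, but less intricate). Once $\Sols(\fC_1,\ell),\ldots,\Sols(\fC_8,\ell)$ are in hand, substituting them into \cref{Sanria} with $p=3$, weighting by the class sizes from \cref{Genevieve}, and dividing by the appropriate power of $\ell$ yields the skewness formula of \cref{Theodore}.
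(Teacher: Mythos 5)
Your overall route is exactly the paper's: pick the representative $\cP_j$ from \cref{Isabel}, use \cref{Sonia} to get $\Sols(\fC_j,\ell)=\card{\As(\cP_j,=,\ell)}$, use \cref{Terrence} to turn this into counting solutions with pairwise distinct coordinates in $[\ell]$ of the monochrome system $M_jx=0$, and then do the counts with the arithmetic-progression and inclusion--exclusion lemmas of \cref{Arthur}; your identifications of $\fC_1$ and $\fC_3$ as doubled counts of $4$- and $5$-term arithmetic progressions, and your diagnosis that $\fC_6$ is the mod-$12$ case, all agree with \cref{Apple-proof}.

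The gap is that the proposal never actually performs the counts, and the counts are the entire content of the lemma. Beyond naming the AP structure for $\fC_1$ and $\fC_3$, you do not derive a single one of the eight quasi-polynomials: for $\fC_2$ you would need to reduce to $B+C=D+E$ with $B\equiv C\pmod 2$ and invoke \cref{wowzers}; for $\fC_4$ and $\fC_8$ you need \cref{Light} together with a careful enumeration of the coincidence patterns (the paper distinguishes the cases where the intersection of $\{A,C,E\}$-type and $\{B,D,F\}$-type sets has size one or two, which is where the mod-$6$ behavior comes from); for $\fC_5$ the paper does not follow your ``count all, then subtract coincidences'' template but instead counts same-parity triples not in arithmetic progression and multiplies by $6$; and for $\fC_7$ it counts unordered pairs with a fixed sum $h$ and takes $3!\cdot 2^3\binom{\cdot}{3}$, again not a straightforward inclusion--exclusion over coordinate coincidences. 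Saying the computation is ``routine but long'' and deferring it to \cref{Apple-proof} is circular here, since \cref{Apple-proof} \emph{is} the proof of this lemma: as written, the specific coefficients and the periods $2$, $3$, $4$, $6$, $12$ claimed in the statement are asserted, not established. To complete the argument you must carry out each of the eight lattice-point counts explicitly (or supply an equivalent verification), since nothing in the outline pins down, say, the constant terms $-28$, $-16$, $-12$ in the $\fC_4$ formula or the four residue classes mod $12$ for $\fC_6$.
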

The proof is given in \cref{Apple-proof}.
Since \cref{Isabel} tells us that $\Isom(3)=\{\fC_1,\ldots,\fC_8\}$ and Lemmas \ref{Genevieve} and \ref{Apple} tell us $|\fC_j|$ and $\Sols(\fC_j,\ell)$ for $j\in\{1,\ldots,8\}$, the third central moment of $\ssac$ is now a straightforward calculation using \cref{Sanria} with $p=3$.
\begin{theorem}
If $\ell \in \N$  then 
\[
\mom{3}\ssac(f) = \begin{cases}
160\ell^4-1296\ell^3+3296\ell^2-2496\ell &  \text{if $\ell \equiv 0 \pmod{4}$,} \\
160\ell^4-1296\ell^3+3296\ell^2-2736\ell+576 & \text{if $\ell \equiv \pm 1 \pmod{4}$,} \\ 
160\ell^4-1296\ell^3+3296\ell^2-2496\ell-384 & \text{if $\ell \equiv 2 \pmod{4}$.}
\end{cases} 
\]
\end{theorem}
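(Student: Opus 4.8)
The plan is to apply \cref{Sanria} in the case $p=3$, which gives
\[
\mom{3}\ssac(f) \;=\; \sum_{\fP\in\Isom(3)} |\fP|\,\Sols(\fP,\ell),
\]
and then substitute the three pieces of data that Lemmas \ref{Isabel}, \ref{Genevieve}, and \ref{Apple} have already supplied. By \cref{Isabel} the index set $\Isom(3)$ consists of exactly eight classes $\fC_1,\dots,\fC_8$; \cref{Genevieve} gives their cardinalities $|\fC_j|$ (each a small multiple of a power of $2$); and \cref{Apple} gives the eight quasi-polynomials $\Sols(\fC_j,\ell)$. Hence the whole proof reduces to forming the weighted sum $\sum_{j=1}^{8}|\fC_j|\,\Sols(\fC_j,\ell)$ and simplifying it into a single quasi-polynomial in $\ell$.

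First I would settle the leading behavior, which serves both as an orientation and as a sanity check. Only $\fC_7$ and $\fC_8$ contribute in degrees $4$, $3$, and $2$, and in each class appearing in \cref{Apple} the coefficients of $\ell^4$, $\ell^3$, and $\ell^2$ are the same across all residues of $\ell$; tallying $|\fC_7|\,\Sols(\fC_7,\ell)$ and $|\fC_8|\,\Sols(\fC_8,\ell)$ together with the cubic contributions of $\fC_2,\fC_4,\fC_5,\fC_6$ and the quadratic contributions of the remaining classes recovers the residue-independent part $160\ell^4-1296\ell^3+3296\ell^2$ of the answer. So the entire residue dependence is confined to the coefficients of $\ell^1$ and $\ell^0$, and that is where the real work lies.

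The one genuine subtlety is the period. The quasi-polynomials $\Sols(\fC_1,\ell),\dots,\Sols(\fC_8,\ell)$ have periods $3,2,4,6,4,12,2,6$, so a priori their weighted sum is only quasi-polynomial of period equal to the least common multiple of these, namely $12$, whereas the claimed formula has period $4$. Thus the heart of the verification is to show that, after multiplication by the $|\fC_j|$ and addition, all dependence on $\ell\bmod 3$ cancels---and with it the part of the $\bmod 6$ and $\bmod 12$ behavior not already visible $\bmod 4$---so that the resulting linear-plus-constant correction depends only on $\ell\bmod 4$. Concretely I would tabulate, for each of the twelve residues $\ell\equiv 0,1,\dots,11\pmod{12}$, the value of $\sum_j |\fC_j|$ times the sum of the linear and constant terms of $\Sols(\fC_j,\ell)$, and verify that it takes only the three values $-2496\ell$, $-2736\ell+576$, and $-2496\ell-384$ according as $\ell\equiv 0$, $\pm 1$, or $2\pmod 4$.

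I expect the main obstacle to be entirely bookkeeping rather than conceptual: one must add eight rational quasi-polynomials with denominators $2$, $3$, $4$, and $12$ across twelve residue classes without error, confirm that the denominators clear, and check that the period genuinely collapses from $12$ to $4$. Once this linear combination is carried out, the three-case formula for $\mom{3}\ssac(f)$ is immediate; dividing by $\ell^4$ and using $\ADF(f)=-1+\ssac(f)/\ell^2$ from \eqref{Herman} then yields the formula for $\mom{3}\ADF(f)$ recorded in \cref{Theodore}.
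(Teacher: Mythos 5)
Your proposal matches the paper's own proof: the paper likewise obtains $\mom{3}\ssac(f)$ by the $p=3$ case of \cref{Sanria}, substituting the class list from \cref{Isabel}, the cardinalities from \cref{Genevieve}, and the counts from \cref{Apple}, and simplifying the weighted sum. Your added observations (the residue-independent degree-$4$ through degree-$2$ part and the collapse of the period from $\operatorname{lcm}=12$ down to $4$) are correct and just make the bookkeeping explicit.
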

Since $\ADF(f)=-1+\ssac(f)/\ell^2$, we can divide this result by $\ell^6$ to obtain the third central moment of the demerit factor.
\begin{corollary}\label{Sarah}
If $\ell \in \Z_+$, then 
\[
\mom{3} \ADF(f) =
\begin{cases}
\frac{160\ell^4-1296\ell^3+3296\ell^2-2496\ell}{\ell^6} & \text{if $\ell \equiv 0 \pmod{4}$,} \\[4pt]
\frac{160\ell^4-1296\ell^3+3296\ell^2-2736\ell+576}{\ell^6} & \text{if $\ell \equiv \pm 1 \pmod{4}$,} \\[4pt] 
\frac{160\ell^4-1296\ell^3+3296\ell^2-2496\ell-384}{\ell^6} & \text{if $\ell \equiv 2 \pmod{4}$.}
\end{cases}
\]
\end{corollary}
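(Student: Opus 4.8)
The plan is to deduce Corollary~\ref{Sarah} as an immediate rescaling of the preceding theorem, after recording the elementary fact that central moments transform affinely. Since $\ADF(f)=-1+\ssac(f)/\ell^2$ for every binary sequence $f$ of length $\ell$, the deviation $\ADF(f)-\ev\ADF(f)$ equals $(\ssac(f)-\ev\ssac(f))/\ell^2$, because the additive constant $-1$ and the multiplicative constant $1/\ell^2$ both commute with the expectation. Cubing and taking the expectation then gives $\mom{3}\ADF(f)=\mom{3}\ssac(f)/\ell^6$, and one simply inserts the three-case quasi-polynomial formula for $\mom{3}\ssac(f)$ from the theorem and divides each numerator by $\ell^6$; the congruence conditions on $\ell$ carry over verbatim. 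There is no obstacle at this last step.

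The substance lies in the theorem itself, whose proof I would organize through \cref{Sanria} with $p=3$, namely $\mom{3}\ssac(f)=\sum_{\fP\in\Isom(3)}|\fP|\,\Sols(\fP,\ell)$. This needs three inputs, each supplied by a lemma: the enumeration of the eight isomorphism classes $\fC_1,\dots,\fC_8$ of contributory partitions (\cref{Isabel}, obtained via \cref{Natasha} and the matrix characterizations of \cref{Clarence} and its corollaries); the class sizes $|\fC_j|$ (\cref{Genevieve}, via the orbit--stabilizer count of \cref{Stanley}); and the solution counts $\Sols(\fC_j,\ell)$ (\cref{Apple}, via the homogeneous-linear-system reformulation of \cref{Terrence} together with the combinatorial identities of \cref{Arthur}). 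Assembling the sum is then a finite exercise: multiply each $|\fC_j|$ by its quasi-polynomial, add, and collect into residue classes modulo the least common multiple of the individual periods.

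The genuine difficulty is concentrated in \cref{Apple}. Several of the systems $Mx=0$ (notably those attached to $\fC_4$, $\fC_6$, and $\fC_8$) produce period-$6$ or period-$12$ behavior, so counting the injective solutions in $[\ell]^{|\cP|}$ requires a careful inclusion--exclusion over the possible coincidences among coordinates, reducing each case to counting lattice points of $\Z^{|\cP|}$ lying in a bounded region cut out by the single linear relation and the distinctness constraints. I would treat each $\fC_j$ separately and track its quasi-polynomial with enough precision that the periods $2$, $3$, $4$, $6$, $12$ reconcile after the weighted sum. The final collapse of the answer to period $4$ then serves as a consistency check on the individual computations, and the fact that the resulting numerators are positive for $\ell\geq 4$ can be cross-checked against \cref{Lynn}.
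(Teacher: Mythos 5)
Your proposal is correct and follows exactly the paper's route: the corollary is obtained by the affine transformation $\ADF(f)=-1+\ssac(f)/\ell^2$, which sends the third central moment of $\ssac$ to that of $\ADF$ after division by $\ell^6$, and the underlying formula for $\mom{3}\ssac(f)$ is assembled via \cref{Sanria} with $p=3$ from Lemmas \ref{Isabel}, \ref{Genevieve}, and \ref{Apple}, just as in the paper.
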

We can normalize the third central moment of $\ssac(f)$ using our variance calculation from \cref{Julie} to obtain the skewness of $\ssac(f)$, which is the same as the skewness of $\ADF(f)=-1+\ssac(f)/\ell^2$.
\begin{corollary}\label{Shirley}
If $\ell \in \Z_+$, then
\begin{equation*}
\smom{3} \ADF(f)=\smom{3}\ssac(f)=
\begin{cases}
\frac{6\sqrt{3}(10\ell^4-81 \ell^3+206 \ell^2-156 \ell)}{(4 \ell^3-15 \ell^2+14 \ell)^{3/2}} & \text{if $\ell \equiv 0 \pmod{4}$,} \\[4pt]
\frac{6\sqrt{3}(10\ell^4-81 \ell^3+206 \ell^2-171 \ell+36)}{(4 \ell^3-15 \ell^2+14 \ell -3)^{3/2}} & \text{if $\ell \equiv \pm 1 \pmod{4}$,} \\[4pt]
\frac{6\sqrt{3}(10\ell^4-81 \ell^3+206 \ell^2-156 \ell-24)}{(4 \ell^3-15 \ell^2+14 \ell)^{3/2}} & \text{if $\ell \equiv 2 \pmod{4}$.}
\end{cases}
\end{equation*}
\end{corollary}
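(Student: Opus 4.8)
The plan is to derive the skewness directly from the formulas already established for the second and third central moments, so that the corollary reduces to an algebraic simplification. The conceptual point is that standardized moments are invariant under affine reparametrization with positive slope: since $\ADF(f) = -1 + \ssac(f)/\ell^2$, adding the constant $-1$ leaves every central moment unchanged, while multiplying by $1/\ell^2 > 0$ scales the $p$th central moment by $\ell^{-2p}$, hence scales $\mom{2}$ by $\ell^{-4}$ and $\mom{3}$ by $\ell^{-6}$ and therefore scales $(\mom{2})^{3/2}$ by $\ell^{-6}$ as well. Thus the ratio $\mom{3}/(\mom{2})^{3/2}$ is unchanged, which proves $\smom{3}\ADF(f) = \smom{3}\ssac(f)$ and lets me compute with whichever of the two random variables is more convenient.

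Concretely, I would take $\mom{3}\ssac(f)$ from the theorem stated just before this corollary and $\mom{2}\ssac(f)$ from \cref{Julie}, split into the residue classes $\ell \equiv 0$, $\ell \equiv \pm 1$, and $\ell \equiv 2 \bmod 4$ (this is the common period of the two quasi-polynomials, with the two even classes sharing a value of $\mom{2}\ssac$ and the two odd classes sharing theirs), and form $\mom{3}\ssac(f)/(\mom{2}\ssac(f))^{3/2}$ in each case. Equivalently one may divide the $\ADF$-formulas of \cref{Sarah} and \cref{Jessica}; there the common factor $\ell^6$ in numerator and denominator visibly cancels, which is the affine invariance in action.

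The one place that needs a little care is pulling perfect powers out of the radical so the answer comes out in the stated reduced form. One writes $16\ell^3 - 60\ell^2 + 56\ell = 4(4\ell^3 - 15\ell^2 + 14\ell)$ and $16\ell^3 - 60\ell^2 + 56\ell - 12 = 4(4\ell^3 - 15\ell^2 + 14\ell - 3)$, so that the $3/2$ power contributes a factor $4^{3/2} = 8$ and a factor $3^{-3/2} = 1/(3\sqrt 3)$; pairing this with the factorizations $160\ell^4 - 1296\ell^3 + 3296\ell^2 - 2496\ell = 16(10\ell^4 - 81\ell^3 + 206\ell^2 - 156\ell)$ and its analogues for the other two residue classes makes the overall constant collapse to $16 \cdot 3\sqrt 3 / 8 = 6\sqrt 3$, leaving exactly the three displayed expressions. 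I do not anticipate any genuine obstacle: all the mathematical content is in the moment formulas already in hand, and what remains is disciplined bookkeeping of the case split and the numerical factors.
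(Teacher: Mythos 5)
Your proposal is correct and follows the same route as the paper: the paper likewise obtains the skewness by normalizing $\mom{3}\ssac(f)$ with the variance from \cref{Julie} and invoking the invariance of standardized moments under the affine map $\ADF(f)=-1+\ssac(f)/\ell^2$. Your factorizations and the resulting constant $6\sqrt{3}$ check out, so what remains is exactly the bookkeeping you describe.
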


\section{Computer-assisted calculation of kurtosis and fifth moment}\label{Curtis}

A computer program was used to find the fourth central moment of $\ssac$.
The program first finds representatives for each isomorphism class $\fC$ in $\Isom(4)$.
This is done by the method described at the end of \cref{Scott}, and the program finds $97$ isomorphism classes.
For each class $\fC$ in $\Isom(4)$, the program determines $|\fC|$ using an orbit-stabilizer technique and determines $\Sols(\fC,\ell)$ using Ehrhart theory and inclusion-exclusion, since finding $\Sols(\fC,\ell)$ requires one to count the number of integer solutions of a homogeneous linear system that lie in a hypercube as a function of the size of the hypercube (see \cite[Ch.\ 3]{Beck-Robins}) and to then deduct the number of solutions whose coordinates do not have distinct values.
The program uses these calculations to compute the sum in \cref{Sanria} with $p=4$, and thereby determines the fourth central moment of $\ssac$.
The result is given below as \cref{Fred}.
The program was written in C++ and employing the GNU Multiple Precision Arithmetic Library (GMP) \cite{GMP}, and obtained the fourth central moment of $\ssac$ in about $5$ seconds on a personal computer.
The same program also obtained the second and third moments of $\ssac$, and its results agree with our hand calculations in Sections \ref{Veronica} and \ref{Simon}.
With a few hours of computation time, the program was also able to find that $\Isom(5)$ has $2581$ isomorphism classes and then to compute an exact formula for the fifth central moment of $\ssac$ as a quasi-polynomial of degree $7$ and period $55440$.  

\begin{theorem}\label{Fred}
For $\ell \in \N$, the quantity $\mom{4}\ssac(f)$ is a quasi-polynomial function of $\ell$ of degree $6$ and period $120$ given by
\[
\mom{4} \ssac(f)=\frac{1}{45} \sum_{j=0}^6 a_j(\ell) \ell^j,
\]
where for every $\ell$ we have $a_6(\ell)=3840$; $a_5(\ell)=501120$; $a_4(\ell)=-6786480$;
\begin{align*}
a_3(\ell) & =
\begin{cases}
$27078080$ & \text{if $\ell \equiv 0 \pmod{2}$,}  \\
$27072320$ & \text{if $\ell \equiv 1 \pmod{2}$;}
\end{cases}
\\[4pt]
a_2(\ell) & =
\begin{cases}
$-17638464$ & \text{if $\ell \equiv 0 \pmod{2}$,}  \\
$-18213024$ & \text{if $\ell \equiv 1 \pmod{2}$;}
\end{cases}
\\[4pt]
a_1(\ell) & =
\begin{cases}
$-69561600$ & \text{if $\ell \equiv 0 \pmod{12}$,} \\
$-71342400$ & \text{if $\ell \equiv \pm 1, \pm 5 \pmod{12}$,} \\
$-75982080$ & \text{if $\ell \equiv \pm 2 \pmod{12}$,} \\
$-68516160$ & \text{if $\ell \equiv \pm 3 \pmod{12}$,} \\
$-72387840$ & \text{if $\ell \equiv \pm 4 \pmod{12}$,} \\
$-73155840$ & \text{if $\ell \equiv 6 \pmod{12}$;}
\end{cases}
\end{align*}
and $a_0(\ell)$ is a function of period $120$ whose values are given on \cref{Ernest}.
\end{theorem}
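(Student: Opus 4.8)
The plan is to invoke \cref{Sanria} with $p=4$, which gives
\[
\mom{4}\ssac(f) \;=\; \sum_{\fP \in \Isom(4)} |\fP|\,\Sols(\fP,\ell),
\]
so that everything reduces to three finite tasks: (a) produce one representative partition for each isomorphism class in $\Isom(4)$; (b) compute $|\fP|$ for each such class; and (c) compute the function $\Sols(\fP,\ell) = |\As(\cP,=,\ell)|$ for a representative $\cP$ of each class. The claimed quasi-polynomial then results from summing the finitely many products $|\fP|\,\Sols(\fP,\ell)$ and collecting terms according to the residue of $\ell$ modulo the least common multiple of the periods that appear.

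For task (a) I would run \cref{Natasha}: enumerate the (finitely many) absolute matrices permitted by \cref{Clarence-abs}, which are $4 \times t$ matrices with $3 \le t \le 8$; lift each to the monochrome matrices satisfying \cref{Clarence-mono}, discarding those whose reduced row echelon form contains a row of Hamming weight $2$ or a pair of rows at Hamming distance $2$; and finally color the $\pm 1$ entries to obtain display matrices of partitions in $\Con(4)$, all modulo the equivalence $\sim$ generated by column permutations, row permutations, row negations, and color swaps. This is a sizeable but completely finite search, which I would have a computer perform; it is expected to produce $97$ classes. For task (b), given a representative display matrix, I would compute $\Stab_{\cW^{(4)}}(\cP)$ --- either by testing each of the $4!\,2^{12}$ elements of $\cW^{(4)}$ or, more efficiently, by using the explicit description of the $\cW^{(4)}$-action on display matrices (row permutation, row negation, color swap) derived in \cref{Scott} --- and then apply the orbit--stabilizer formula \cref{Stanley} to get $|\fP| = 4!\,2^{12}/|\Stab_{\cW^{(4)}}(\cP)|$.

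For task (c), \cref{Terrence} identifies $\Sols(\fP,\ell)$ with the number of $x \in [\ell]^{|\cP|}$ having pairwise distinct coordinates and satisfying the homogeneous system $Mx=0$, where $M$ is a monochrome matrix for $\cP$. I would first count all integer points of $\{x \in [0,\ell-1]^{|\cP|} : Mx = 0\}$, which is the $(\ell-1)$-fold dilation of the fixed rational polytope $\{x \in [0,1]^{|\cP|} : Mx = 0\}$, so by Ehrhart's theorem this count is a quasi-polynomial in $\ell$ whose period divides the least common multiple of the denominators of the vertices of that polytope. Then I would subtract, by inclusion--exclusion over set partitions of the coordinate-index set recording which coordinates are forced equal, the solutions with a repeated value; each such correction is again an Ehrhart quasi-polynomial. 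The resulting $\Sols(\fP,\ell)$ has degree at most the dimension of the solution space of $Mx = 0$; maximizing this over the $97$ contributory partitions yields the overall degree bound $6$, and all the periods encountered are seen to divide $120$, matching the statement.

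The main obstacle is ensuring that the enumeration in task (a) is complete and non-redundant and that the Ehrhart-plus-inclusion--exclusion bookkeeping of task (c) is carried out correctly across all $97$ classes; since this is done by machine, the real work lies in making it trustworthy. I would guard against error by having the same program recompute $\mom{2}\ssac$ and $\mom{3}\ssac$ and checking that the outputs agree with \cref{Julie} and with the hand computations of \cref{Simon}, and by confirming that the leading coefficients of the degree-$6$ output agree with the independently predicted leading asymptotics. The final step --- collecting the $97$ quasi-polynomials into the displayed form with coefficients $a_0(\ell),\dots,a_6(\ell)$ of period $120$ --- is then purely mechanical.
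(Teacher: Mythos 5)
Your proposal follows essentially the same route as the paper: it applies \cref{Sanria} with $p=4$, uses \cref{Natasha} to enumerate the $97$ classes of $\Isom(4)$, computes class sizes by orbit--stabilizer, evaluates $\Sols(\fP,\ell)$ via \cref{Terrence} together with Ehrhart theory and inclusion--exclusion, and validates the machine computation against the second and third moments. This matches the paper's computer-assisted proof, so your plan is correct as written.
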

\begin{table}[ht!]
\begin{center}
\caption{Values of $a_0(\ell)$ as a function of $\ell \pmod{120}$}\label{Ernest}
\begin{tabular}{|l|r||l|r||l|r|}
\hline
$\ell \pmod{120}$       & $a_0(\ell)$ & $\ell \pmod{120}$       & $a_0(\ell)$ & $\ell \pmod{120}$       & $a_0(\ell)$ \\
\hline
\rowcolor{tableshade}
$  0$                      & $        0$ & $ 21$, $ 69$               & $ 53732304$ & $ 51$, $ 99$               & $ 57464784$ \\
$  1$, $ 49$               & $ 68764624$ & $ 22$, $ 58$, $ 82$, $118$ & $100980736$ & $ 53$, $ 77$               & $ 76964816$ \\
\rowcolor{tableshade}
$  2$, $ 38$, $ 62$, $ 98$ & $ 98195456$ & $ 23$, $ 47$               & $ 79591376$ & $ 55$                      & $ 60110800$ \\
$  3$, $ 27$               & $ 63657936$ & $ 24$, $ 96$               & $ 12386304$ & $ 56$, $104$               & $ 43065344$ \\
\rowcolor{tableshade}
$  4$, $ 76$               & $ 48062464$ & $ 25$                      & $ 56378320$ & $ 60$                      & $  2211840$ \\
$  5$                      & $ 58385360$ & $ 28$, $ 52$               & $ 54255616$ & $ 61$, $109$               & $ 69870544$ \\
\rowcolor{tableshade}
$  6$, $ 54$, $ 66$, $114$ & $ 61323264$ & $ 29$, $101$               & $ 70771664$ & $ 63$, $ 87$               & $ 62552016$ \\
$  7$, $103$               & $ 78690256$ & $ 30$, $ 90$               & $ 48936960$ & $ 65$                      & $ 57279440$ \\
\rowcolor{tableshade}
$  8$, $ 32$               & $ 49258496$ & $ 31$, $ 79$               & $ 72497104$ & $ 68$, $ 92$               & $ 51470336$ \\
$  9$, $ 81$               & $ 52626384$ & $ 33$, $ 57$               & $ 58819536$ & $ 71$, $119$               & $ 73398224$ \\
\rowcolor{tableshade}
$ 10$, $ 70$               & $ 82401280$ & $ 34$, $ 46$, $ 94$, $106$ & $ 94787584$ & $ 73$, $ 97$               & $ 74957776$ \\
$ 11$, $ 59$               & $ 74504144$ & $ 35$                      & $ 62117840$ & $ 75$                      & $ 45078480$ \\
\rowcolor{tableshade}
$ 12$, $108$               & $ 20791296$ & $ 36$, $ 84$               & $ 14598144$ & $ 80$                      & $ 30679040$ \\
$ 13$, $ 37$               & $ 76063696$ & $ 39$, $111$               & $ 56358864$ & $ 83$, $107$               & $ 80697296$ \\
\rowcolor{tableshade}
$ 14$, $ 26$, $ 74$, $ 86$ & $ 92002304$ & $ 40$                      & $ 33464320$ & $ 85$                      & $ 57484240$ \\
$ 15$                      & $ 43972560$ & $ 41$, $ 89$               & $ 69665744$ & $ 88$, $112$               & $ 52043776$ \\
\rowcolor{tableshade}
$ 16$, $ 64$               & $ 45850624$ & $ 43$, $ 67$               & $ 79796176$ & $ 93$, $117$               & $ 59925456$ \\
$ 17$, $113$               & $ 75858896$ & $ 44$, $116$               & $ 45277184$ & $ 95$                      & $ 61011920$ \\
\rowcolor{tableshade}
$ 18$, $ 42$, $ 78$, $102$ & $ 67516416$ & $ 45$                      & $ 41346000$ & $100$                      & $ 35676160$ \\
$ 19$, $ 91$               & $ 73603024$ & $ 48$, $ 72$               & $ 18579456$ & $105$                      & $ 40240080$ \\
\rowcolor{tableshade}
$ 20$                      & $ 32890880$ & $ 50$, $110$               & $ 79616000$ & $115$                      & $ 61216720$ \\
\hline
\end{tabular}
\end{center}
\end{table}
Since $\ADF(f)=-1+\ssac(f)/\ell^2$, we can divide this result by $\ell^8$ to obtain the fourth central moment of the demerit factor.
\begin{corollary}\label{Geoffrey}
If $\ell \in \Z_+$, then 
\[
\mom{4} \ADF(f) =\frac{\mom{4}\ssac(f)}{\ell^8},
\]
where $\mom{4}\ssac(f)$ is the quasi-polynomial function of degree $6$ and period $120$ described in \cref{Fred}.
\end{corollary}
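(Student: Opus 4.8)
The plan is to reduce the statement to \cref{Fred} by using the standard behaviour of central moments under an affine change of variable. Recall from \eqref{Herman} that for a binary sequence $f$ of length $\ell$ we have $\ADF(f) = -1 + \ssac(f)/\ell^2$, so $\ADF$ is obtained from $\ssac$ by applying the affine map $x \mapsto -1 + x/\ell^2$. The first step is to record the elementary fact that for any random variable $v(f)$ and any real constants $a$ and $c$ one has $\mom{p}(a + c\,v(f)) = c^p\,\mom{p} v(f)$; this is immediate from the definition in \eqref{Monique}, since passing to $(a+c\,v(f)) - \ev(a+c\,v(f)) = c(v(f)-\ev v(f))$ absorbs the additive constant $a$ and pulls the factor $c$ out as $c^p$.

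Applying this with $a = -1$, $c = \ell^{-2}$, and $p = 4$ gives $\mom{4}\ADF(f) = \ell^{-8}\,\mom{4}\ssac(f)$, which is exactly the displayed identity. The second step is then simply to invoke \cref{Fred}, which asserts that $\mom{4}\ssac(f)$ is the quasi-polynomial function of $\ell$ of degree $6$ and period $120$ recorded there (with the coefficients $a_6(\ell),\dots,a_1(\ell)$ and the table of $a_0(\ell)$ values); substituting this into the identity completes the proof.

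There is no genuine obstacle here: the entire content lies in \cref{Fred} itself, whose proof is the computer-assisted computation described in \cref{Curtis} (finding the $97$ isomorphism classes in $\Isom(4)$, computing each $|\fC|$ by orbit--stabilizer, computing each $\Sols(\fC,\ell)$ by Ehrhart theory and inclusion--exclusion, and summing via \cref{Sanria} with $p=4$), and the corollary is a one-line consequence. If one wishes to be fully explicit, one may additionally note that dividing a quasi-polynomial of period $120$ by the fixed polynomial $\ell^8$ leaves the period unchanged, so $\mom{4}\ADF(f)$ is indeed a quasi-polynomial of degree $6$ and period $120$ divided by $\ell^8$, in accordance with the statement of \cref{Jake}.
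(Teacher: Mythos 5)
Your proposal is correct and matches the paper's approach: the paper likewise deduces \cref{Geoffrey} directly from \cref{Fred} via the affine relation $\ADF(f)=-1+\ssac(f)/\ell^2$, dividing by $\ell^8$ exactly as you do (you merely spell out the elementary invariance of central moments under affine maps, which the paper leaves implicit). No issues.
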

We can normalize the fourth central moment using our variance calculation from \cref{Julie} to obtain the kurtosis of $\ssac(f)$, which is the same as the kurtosis of $\ADF(f)=-1+\ssac(f)/\ell^2$.
\begin{corollary}\label{Curt}
If $\ell \in \Z_+$, then
\[
\smom{4} \ADF(f)=\smom{4}\ssac(f)=\frac{\mom{4} \ssac(f)}{\left(\mom{2} \ssac(f)\right)^2},
\]
where $\mom{4} \ssac(f)$ is the quasi-polynomial function of degree $6$ and period $120$ described in \cref{Fred} and $\mom{2} \ssac(f)$ is the quasi-polynomial function of degree $3$ and period $2$ described in \cref{Julie}.
\end{corollary}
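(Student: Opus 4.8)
The plan is to deduce \cref{Curt} from the affine relationship $\ADF(f)=-1+\ssac(f)/\ell^2$ together with the elementary principle that central moments scale homogeneously and standardized moments are invariant under a positive affine change of variable. First I would record the transformation law for central moments: for any random variable $X$ on $\Seq(\ell)$ and any constants $a,b$, the definition \eqref{Monique} gives $\mom{p}(aX+b)=a^p\,\mom{p}(X)$ for every $p\in\N$, since $(aX+b)-\ev(aX+b)=a\bigl(X-\ev X\bigr)$. Applying this with $X=\ssac(f)$, $a=1/\ell^2$, and $b=-1$ yields $\mom{4}\ADF(f)=\mom{4}\ssac(f)/\ell^8$ (which is exactly \cref{Geoffrey}) and $\mom{2}\ADF(f)=\mom{2}\ssac(f)/\ell^4$, hence $\bigl(\mom{2}\ADF(f)\bigr)^2=\bigl(\mom{2}\ssac(f)\bigr)^2/\ell^8$.

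Next I would combine these with the definition of the standardized moment (taking $p=4$, so that $p/2=2$):
\[
\smom{4}\ADF(f)=\frac{\mom{4}\ADF(f)}{\bigl(\mom{2}\ADF(f)\bigr)^2}=\frac{\mom{4}\ssac(f)/\ell^8}{\bigl(\mom{2}\ssac(f)\bigr)^2/\ell^8}=\frac{\mom{4}\ssac(f)}{\bigl(\mom{2}\ssac(f)\bigr)^2}=\smom{4}\ssac(f),
\]
where the final equality is again just the definition of $\smom{4}\ssac(f)$. This proves both asserted equalities simultaneously. Finally I would insert the explicit formulas already in hand: \cref{Fred} expresses $\mom{4}\ssac(f)$ as a quasi-polynomial of degree $6$ and period $120$ (divided by $45$), and \cref{Julie} expresses $\mom{2}\ssac(f)$ as a quasi-polynomial of degree $3$ and period $2$ (divided by $3$); squaring the latter gives the degree-$6$, period-$2$ denominator claimed for $\smom{4}\ADF(f)$ in \cref{Jake}.

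The substantive work has effectively already been carried out: everything of depth lies in the exact determination of $\mom{4}\ssac(f)$ via \cref{Sanria}, the enumeration of the $97$ isomorphism classes in $\Isom(4)$, and the Ehrhart-theoretic computation of each $\Sols(\fC,\ell)$, all of which is packaged into \cref{Fred}. Within the present corollary the only point requiring care is that the scaling constant $a=1/\ell^2$ is strictly positive, which holds because $\ell\in\Z_+$; this positivity is what guarantees that $\bigl(\mom{2}\ADF(f)\bigr)^2$ carries precisely the factor $\ell^{-8}$, so that it cancels against the $\ell^{-8}$ in $\mom{4}\ADF(f)$ with no sign or absolute-value correction when passing from central to standardized moments. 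Hence there is no real obstacle beyond correctly invoking \cref{Geoffrey}, \cref{Fred}, and \cref{Julie}.
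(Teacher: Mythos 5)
Your proposal is correct and follows essentially the same route as the paper, which obtains \cref{Curt} simply by observing that $\ADF(f)=-1+\ssac(f)/\ell^2$ is an affine function of $\ssac(f)$, so the standardized moments coincide, and then inserting the formulas from \cref{Fred} and \cref{Julie}. Your explicit scaling computation ($\mom{4}\ADF=\mom{4}\ssac/\ell^8$, $(\mom{2}\ADF)^2=(\mom{2}\ssac)^2/\ell^8$) just spells out the cancellation the paper leaves implicit; note only that for $p=4$ the exponent $p/2=2$ is an integer, so no positivity or absolute-value issue even arises.
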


\appendix

\section{Lemmas used to prove \cref{Sanri}}\label{Paul}

This section contains the technical results that we used in the proof of \cref{Sanri} in \cref{Nellie}.
\begin{lemma}\label{Nestor}
Let $R$ be a commutative ring, let $k \in \N$, let $E_1,\ldots,E_k$ be pairwise disjoint subsets of $\N$, let $E=\bigcup_{j=1}^k E_k$, and let $g_j\colon \As(E_j,=,\ell) \to R$ for each $j \in \{1,\ldots,k\}$.
Then
\[
\sum_{\tau \in \As(E,=,\ell)} \,\,\, \prod_{j=1}^k g_j(\tau\vert_{E_j}) = \prod_{j=1}^k \,\,\, \sum_{\tau^{(j)} \in \As(E_j,=,\ell)}  g_j(\tau^{(j)}).
\]
\end{lemma}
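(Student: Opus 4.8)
The plan is to reduce the identity to the generalized distributive law in a commutative ring by exhibiting a natural bijection
\[
\Phi\colon \As(E,=,\ell) \longrightarrow \prod_{j=1}^k \As(E_j,=,\ell), \qquad \tau \longmapsto (\tau\res{E_1},\dots,\tau\res{E_k}),
\]
where $\ell\in\N$ is the fixed length implicit in the statement and all the $E_j$ (hence $E$) are finite, so every sum in sight is finite. Since $E_1,\dots,E_k$ partition $E$, the index blocks $E_j\times\lindexset$ partition $\Eindexset$; so $\Phi$ is well defined by \cref{flower}\ref{flower-d} and injective because a function on $\Eindexset$ is recovered from its restrictions to a partition of its domain.

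For surjectivity I would take a tuple $(\tau^{(1)},\dots,\tau^{(k)})$ with each $\tau^{(j)}\in\As(E_j,=,\ell)$ and glue the pieces into a single $\tau\colon\Eindexset\to\N$ by setting $\tau\res{E_j}=\tau^{(j)}$; this is unambiguous because the domains $E_j\times\lindexset$ are disjoint with union $\Eindexset$. Then $\tau(\Eindexset)\subseteq[\ell]$ since each $\tau^{(j)}$ already maps into $[\ell]$, and for every $e\in E$, choosing the unique $j$ with $e\in E_j$ gives $\tau_{e00}+\tau_{e01}=\tau^{(j)}_{e00}+\tau^{(j)}_{e01}=\tau^{(j)}_{e10}+\tau^{(j)}_{e11}=\tau_{e10}+\tau_{e11}$, so $\tau\in\As(E,=,\ell)$ and $\Phi(\tau)=(\tau^{(j)})_j$. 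This is the one place anything is used: it records that the balance equation labeled by $e$ constrains only coordinates sharing that $e$ and that the codomain restriction $\subseteq[\ell]$ is pointwise, i.e.\ that the defining conditions of $\As(\cdot,=,\ell)$ decouple across the blocks $E_j$.

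Having $\Phi$, I would reindex the left-hand sum along it to obtain $\sum_{(\tau^{(1)},\dots,\tau^{(k)})} \prod_{j=1}^k g_j(\tau^{(j)})$, with the sum ranging over the product set, and then finish with the generalized distributive law
\[
\sum_{(\tau^{(1)},\dots,\tau^{(k)})\in\prod_{j=1}^k \As(E_j,=,\ell)}\ \prod_{j=1}^k g_j(\tau^{(j)}) \;=\; \prod_{j=1}^k\ \sum_{\tau^{(j)}\in\As(E_j,=,\ell)} g_j(\tau^{(j)}),
\]
which is a routine induction on $k$ in the commutative ring $R$ (the base case $k=0$ reads $1=1$, using that $\As(\emptyset,=,\ell)$ is a one-element set and that an empty product equals $1$). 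There is no real obstacle; the content of the lemma is entirely the bookkeeping in the preceding paragraph, and everything else is formal.
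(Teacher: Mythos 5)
Your proposal is correct and follows essentially the same route as the paper: both establish the bijection between $\As(E,=,\ell)$ and $\prod_{j=1}^k \As(E_j,=,\ell)$ via restriction and gluing (the paper writes down the two maps $\phi$ and $\psi$ and notes they are mutually inverse, which is the same content as your injectivity/surjectivity check), and then both reindex the sum and invoke the generalized distributive law. No gap; the only cosmetic difference is that the paper packages the product of the $g_j$ as a single function $g$ on the product set before reindexing.
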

\begin{proof}
First, we let $\phi \colon \As(E,=,\ell) \to \prod_{j=1}^k \As(E_j,=,\ell)$ be the map with $\phi(\tau)=(\tau\vert_{E_1},\ldots,\tau\vert_{E_k})$ and let $\psi\colon \prod_{j=1}^k \As(E_j,=,\ell) \to \As(E,=,\ell)$ with $\psi((\tau^{(1)},\ldots,\tau^{(k)}))$ the element $\tau \in \As(E,=,\ell)$ such that for any $(e,s,v) \in E\times\lindexset$, we let $\tau_{e,s,v}=\tau^{(j)}_{e,s,v}$ where $j$ is the unique element of $\{1,\ldots,k\}$ with $e \in E_j$.
It is easy to see that $\phi$ and $\psi$ are inverse maps, and so they are bijective.
Now let $g\colon \prod_{j=1}^k \As(E_j,=,\ell) \to R$ with $g((\tau^{(1)},\ldots,\tau^{(k)})) = \prod_{j=1}^k g_j(\tau^{(j)})$.
Then we have
\begin{align*}
\prod_{j=1}^k \,\,\, \sum_{\tau^{(j)} \in \As(E_j,=,\ell)}  g_j(\tau^{(j)})
& = \sum_{(\tau^{(1)},\ldots,\tau^{(k)}) \in \prod_{j=1}^k \As(E_j,=,\ell)} \,\,\, \prod_{j=1}^k g_j(\tau^{(j)}) \\
& = \sum_{(\tau^{(1)},\ldots,\tau^{(k)}) \in \prod_{j=1}^k \As(E_j,=,\ell)} g((\tau^{(1)}, \cdots, \tau^{(k)})) \\
& = \sum_{\tau \in \As(E,=,\ell)} g(\phi(\tau)) \\
& = \sum_{\tau \in \As(E,=,\ell)} g((\tau\vert_{E_1},\ldots,\tau\vert_{E_k})) \\
& = \sum_{\tau \in \As(E,=,\ell)} \,\,\, \prod_{j=1}^k g_j(\tau\vert_{E_j}),
\end{align*}
where the first equality is due to the distributive law, the second and fifth are by the definition of $g$, the third is by bijectivity of $\phi$, and the fourth by the definition of $\phi$.
\end{proof}
For the rest of this section, we use the notation that if $\tau \in \As(E)$ for some $E\subseteq \N$ and $f=(\ldots,f_0,f_1,f_2,\ldots)$ is a sequence of elements of some commutative ring, then $f^\tau= \prod_{\gamma \in \Eindexset } f_{\tau_\gamma}$.
\begin{corollary}\label{Valerian}
Let $R$ be a commutative ring, let $k \in \N$, let $E_1,\ldots,E_k$ be pairwise disjoint subsets of $\N$, let $E=\bigcup_{j=1}^k E_k$, and let $f=(\ldots,f_0,f_1,f_2,\ldots)$ be a sequence of elements of $R$.
Then
\[
\sum_{\tau \in \As(E,=,\ell)} f^\tau = \prod_{j=1}^k \sum_{\tau^{(j)} \in \As(E_j,=,\ell)}  f^{\tau^{(j)}}.
\]
\end{corollary}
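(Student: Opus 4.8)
The plan is to deduce Corollary~\ref{Valerian} immediately from Lemma~\ref{Nestor} by choosing the ring-valued functions $g_j$ appropriately. Specifically, for each $j \in \{1,\ldots,k\}$ I would take $g_j \colon \As(E_j,=,\ell) \to R$ to be the map $g_j(\sigma) = f^\sigma = \prod_{\gamma \in E_j \times \lindexset} f_{\sigma_\gamma}$. With this choice the right-hand side of the identity in Lemma~\ref{Nestor} is, by definition, $\prod_{j=1}^k \sum_{\tau^{(j)} \in \As(E_j,=,\ell)} f^{\tau^{(j)}}$, which is exactly the right-hand side we want.

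The only thing that actually needs to be checked is that the left-hand side of Lemma~\ref{Nestor}'s identity, namely $\sum_{\tau \in \As(E,=,\ell)} \prod_{j=1}^k g_j(\tau\vert_{E_j})$, coincides with $\sum_{\tau \in \As(E,=,\ell)} f^\tau$. This reduces to the pointwise claim that $f^\tau = \prod_{j=1}^k f^{\tau\vert_{E_j}}$ for every $\tau \in \As(E)$. That is a purely bookkeeping observation: since $E_1,\ldots,E_k$ are pairwise disjoint with union $E$, the index set $\Eindexset$ decomposes as the disjoint union $\bigsqcup_{j=1}^k (E_j \times \lindexset)$, so the product $\prod_{\gamma \in \Eindexset} f_{\tau_\gamma}$ defining $f^\tau$ factors (using commutativity of $R$ to reorder factors) as $\prod_{j=1}^k \prod_{\gamma \in E_j \times \lindexset} f_{\tau_\gamma}$, and the inner product over $E_j \times \lindexset$ depends only on $\tau\vert_{E_j}$ and equals $f^{\tau\vert_{E_j}}$ by the definition recalled just before the corollary.

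Combining these, applying Lemma~\ref{Nestor} to the chosen $g_j$ and substituting the pointwise identity $\prod_j g_j(\tau\vert_{E_j}) = f^\tau$ in the left-hand summand yields precisely the asserted equation. I do not anticipate any real obstacle here; the only mild points to get right are invoking the disjoint-union decomposition of $\Eindexset$ correctly and confirming that $\As(E_j,=,\ell)$ is indeed the domain required so that each $g_j$ is a legitimate input to Lemma~\ref{Nestor}. (It is worth noting in passing that the factorization fails without disjointness of the $E_j$, since then some $f_{\tau_\gamma}$ would be counted with the wrong multiplicity, which is exactly why disjointness appears in the hypotheses.)
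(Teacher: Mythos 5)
Your proposal is correct and matches the paper's own proof: both apply Lemma~\ref{Nestor} with $g_j(\upsilon)=f^\upsilon$ and then verify the pointwise factorization $f^\tau=\prod_{j=1}^k f^{\tau\vert_{E_j}}$ via the disjoint decomposition of $\Eindexset$. Nothing further is needed.
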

\begin{proof}
For each $j \in \{1,\ldots,k\}$, let $g_j \colon \As(E_j,=,\ell) \to R$ with $g_j(\upsilon)=f^\upsilon$ in \cref{Nestor} to obtain
\begin{align*}
\prod_{j=1}^k \sum_{\tau^{(j)} \in \As(E_j,=,\ell)}  f^{\tau^{(j)}}
& = \sum_{\tau \in \As(E,=,\ell)} \prod_{j=1}^k f^{\tau\vert_{E_j}} \\
& = \sum_{\tau \in \As(E,=,\ell)} \prod_{j=1}^k \prod_{(e,s,v) \in E_j\times\lindexset} f_{\tau_{e,s,v}} \\
& = \sum_{\tau \in \As(E,=,\ell)} \prod_{(e,s,v) \in E\times\lindexset} f_{\tau_{e,s,v}} \\
& = \sum_{\tau \in \As(E,=,\ell)} f^{\tau}.\qedhere
\end{align*}
\end{proof}
For the next two results, recall our notation $\ev$ for expectation value as $f$ ranges over all binary sequences of length $\ell$ with uniform measure.
\begin{corollary}\label{Vladimir}
If $k,\ell \in \N$ and $E_1,\ldots,E_k$ are  pairwise disjoint subsets of $\N$ with $E=\bigcup_{j=1}^k E_k$, then
\[
\sum_{\tau \in \As(E,=,\ell)} \prod_{j=1}^k \ev(f^{\tau\vert_{E_j}}) = \prod_{j=1}^k \sum_{\tau^{(j)} \in \As(E_j,=,\ell)}  \ev(f^{\tau^{(j)}}).
\]
\end{corollary}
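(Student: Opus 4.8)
The plan is to obtain \cref{Vladimir} as an immediate specialization of \cref{Nestor}. The one preliminary point to record is that, for each $j\in\{1,\ldots,k\}$, the rule $\upsilon\mapsto\ev(f^\upsilon)$ defines a genuine function $g_j\colon\As(E_j,=,\ell)\to\mathbb{R}$ into the commutative ring $\mathbb{R}$: for a fixed $\upsilon$, the quantity $f^\upsilon=\prod_{\gamma\in E_j\times\lindexset}f_{\upsilon_\gamma}$ is a $\{-1,1\}$-valued random variable as $f$ ranges over $\Seq(\ell)$, so $\ev(f^\upsilon)$ is a well-defined real number rather than a random variable. Nothing in this observation uses the balance condition that distinguishes $\As(E_j,=,\ell)$ from $\As(E_j,\ell)$; that condition is carried along only because $\As(E_j,=,\ell)$ is the domain appearing in \cref{Nestor}.

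First I would fix these functions $g_1,\ldots,g_k$ and invoke \cref{Nestor} with $R=\mathbb{R}$, using the very same pairwise disjoint sets $E_1,\ldots,E_k$ and their union $E$. This yields
\[
\sum_{\tau\in\As(E,=,\ell)}\,\prod_{j=1}^k g_j(\tau\vert_{E_j}) \;=\; \prod_{j=1}^k\,\sum_{\tau^{(j)}\in\As(E_j,=,\ell)}g_j(\tau^{(j)}).
\]
Then I would unfold the definition $g_j(\upsilon)=\ev(f^\upsilon)$ on both sides: the left side becomes $\sum_{\tau\in\As(E,=,\ell)}\prod_{j=1}^k\ev(f^{\tau\vert_{E_j}})$ and the right side becomes $\prod_{j=1}^k\sum_{\tau^{(j)}\in\As(E_j,=,\ell)}\ev(f^{\tau^{(j)}})$, which is precisely the asserted identity.

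There is no real obstacle here: all of the combinatorial content lives in \cref{Nestor}, whose proof supplies the bijection $\tau\mapsto(\tau\vert_{E_1},\ldots,\tau\vert_{E_k})$ between $\As(E,=,\ell)$ and $\prod_{j=1}^k\As(E_j,=,\ell)$ and then applies the distributive law in the ring. I would, however, flag one point that the argument does \emph{not} (and cannot) use: for a single fixed $\tau$ one does \emph{not} in general have $\ev(f^\tau)=\prod_{j=1}^k\ev(f^{\tau\vert_{E_j}})$, since the sign random variables attached to the different blocks need not be independent when $\tau$ repeats a value across distinct $E_j$'s. The identity is genuinely a statement about the full sums, which is exactly why it is convenient to deduce it from \cref{Nestor} rather than from the factorization $f^\tau=\prod_{j=1}^k f^{\tau\vert_{E_j}}$ used in the proof of \cref{Valerian}. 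If a self-contained proof were wanted instead, one could simply rerun the proof of \cref{Nestor} verbatim with the choice $g_j(\upsilon)=\ev(f^\upsilon)$.
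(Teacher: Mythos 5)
Your proposal is correct and is essentially the paper's own proof: \cref{Vladimir} is obtained by applying \cref{Nestor} with $g_j(\upsilon)=\ev(f^\upsilon)$ (the paper takes $R=\Q$ rather than $\mathbb{R}$, an immaterial difference). Your added caution that $\ev(f^\tau)$ need not factor as $\prod_{j=1}^k\ev(f^{\tau\vert_{E_j}})$ for a fixed $\tau$ is accurate but not needed for the argument.
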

\begin{proof}
This is just the special case of \cref{Nestor} where for each $j \in \{1,\ldots,k\}$, we let $g_j \colon \As(E_j,=,\ell) \to \Q$ with $g_j(\upsilon)=\ev(f^\upsilon)$.
\end{proof}

\begin{lemma}\label{Kirkland} 
Let $p, \ell \in \N$, let $\cP\in \Part(p)$, let $\tau \in \As(\cP,=,\ell)$, and let $E \subseteq [p]$.  Then
\[
\ev(f^{\tau\res{E}})=\begin{cases}
1 & \text{if $\cP_E$ is even,} \\
0 & \text{otherwise.}
\end{cases} 
\]
\end{lemma}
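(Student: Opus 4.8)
The plan is to expand the product $f^{\tau\res E}=\prod_{\gamma\in\Eindexset} f_{\tau_\gamma}$ and to group its factors according to which value of $\tau$ they come from. Since $\tau\in\As(\cP,=,\ell)$, we have $\tau(\indexset)\subseteq[\ell]$, so $\tau\res E$ takes values in $[\ell]$ and hence every factor $f_{\tau_\gamma}$ is one of $f_0,\dots,f_{\ell-1}$; each of these is an independent random variable uniform on $\{-1,1\}$, and in particular each satisfies $f_j^2=1$ and $\ev f_j=0$.

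First I would record that, by \cref{flower}\ref{flower-e}, the restriction $\tau\res E$ lies in $\As(\cP_E)$, i.e.\ $\cP_E$ is exactly the partition of $\Eindexset$ induced by $\tau\res E$. Consequently, if $a$ is any value attained by $\tau\res E$ and $m_a=|\{\gamma\in\Eindexset:\tau_\gamma=a\}|$ is its multiplicity, then $m_a$ is the cardinality of the corresponding class of $\cP_E$; as $a$ runs over the attained values, the multiset $\ms{m_a}$ is precisely the type of $\cP_E$. Grouping factors by value then gives $f^{\tau\res E}=\prod_a f_a^{m_a}$, and using $f_a^2=1$ this collapses to $f^{\tau\res E}=\prod_{a:\,m_a\text{ odd}} f_a$, a product of \emph{distinct} (hence independent) sequence entries.

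Taking expectations finishes the argument: by independence $\ev(f^{\tau\res E})=\prod_{a:\,m_a\text{ odd}}\ev(f_a)$, which is $0$ as soon as some $m_a$ is odd, and is the empty product $1$ when every $m_a$ is even. Since ``some $m_a$ is odd'' is equivalent to ``$\cP_E$ has a class of odd cardinality,'' i.e.\ to $\cP_E$ not being even, this is exactly the claimed dichotomy. (The degenerate case $E=\emptyset$ is consistent: $f^{\tau\res\emptyset}$ is the empty product $1$ and $\cP_\emptyset$ is vacuously even.)

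I do not expect a genuine obstacle here; the only step needing a little care is the identification of the multiplicities $m_a$ with the class sizes of $\cP_E$, which is supplied by \cref{flower}\ref{flower-e}, together with the observation that $\tau\in\As(\cP,=,\ell)$ keeps all values in $[\ell]$ so that every $f_j$ entering the product is a true $\pm1$ unit rather than $0$.
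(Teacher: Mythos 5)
Your proposal is correct and follows essentially the same route as the paper: expand $f^{\tau\res E}$ as a product of $f_j$'s with exponents equal to the fiber sizes of $\tau\res E$, use independence and $\ev f_j=0$, $f_j^2=1$, and invoke \cref{flower} (the paper cites part \ref{flower-h}, you cite \ref{flower-e}, either suffices) to identify those fiber sizes with the class cardinalities of $\cP_E$. No gaps; the proof is fine.
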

\begin{proof}
Note that 
$f^{\tau\res{E}}= \prod_{\gamma \in \Eindexset} f_{(\tau\res{E})_\gamma} = \prod_{j \in [\ell]} f_j^{|(\tau\res{E})^{-1}(\{j\})|}$.
Since $f_0,\ldots,f_{\ell-1}$ are independent and uniformly distributed on $\{\pm 1\}$, we see that $\ev(f^{\tau\res{E}})$ is $1$ if $|(\tau\res{E})^{-1}(\{j\})|$ is even for all $j \in [\ell]$, and $\ev(f^{\tau\res{E}})=0$ otherwise.
\cref{flower}\ref{flower-h} shows that $\tau\res{E} \in \As(\cP_E,=,\ell)$, and so $\cP_E$ is the partition of $\Eindexset$ induced by $\tau\res{E}$, so the classes of $\cP_E$ are the nonempty fibers of $\tau\res{E}$.
Thus the cardinalities of the nonempty fibers of $\tau\res{E}$ are the cardinalities of the classes of $\cP_E$, and so $|(\tau\res{E})^{-1}(\{j\})|$ is even for every $j \in [\ell]$ if and only if $\cP_E$ is even.
\end{proof}

\section{Linear algebra lemmas}\label{Lester}
This section gives a general criterion for the homogeneous linear systems in \cref{Scott} to have solutions whose coordinates have distinct values.  Recall the Hamming weight and distance from \cref{Hamilcar}.
\begin{lemma}\label{Evelyn}
Let $F$ be a subfield of $\C$, let $M$ be a matrix with entries in $F$ whose row sums are all zero, and let $M'$ be the reduced row echelon form of $M$.  Then the homogeneous system $M x=0$ has a solution whose coordinates have distinct values in $F$ if and only if $M'$ has no row of Hamming weight $2$ nor any pair of rows with Hamming distance $2$.
\end{lemma}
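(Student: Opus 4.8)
The plan is to work with the solution space $V=\ker M\subseteq F^n$; since elementary row operations change neither the null space nor the row space, $V=\ker M'$ and the row space of $M$ equals that of $M'$. For a pair $i\neq j$ let $H_{ij}=\{x\in F^n:x_i=x_j\}$, a hyperplane, so that a vector has all coordinates distinct precisely when it avoids every $H_{ij}$. First I would reduce to a single pair. Since $F$ is a subfield of $\C$ it is infinite, so if $V\not\subseteq H_{ij}$ for every $i\neq j$, then each functional $x\mapsto x_i-x_j$ restricts to a nonzero polynomial on $V\cong F^{\dim V}$, hence so does the product $\prod_{i<j}(x_i-x_j)$ over the infinite integral domain $F$; any point of $V$ at which this product is nonzero is a solution with distinct coordinates. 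Conversely, if $V\subseteq H_{ij}$ for some pair, every solution has a repeated coordinate. Thus $Mx=0$ has a solution with distinct coordinates if and only if $V\not\subseteq H_{ij}$ for all $i\neq j$.

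Next I would dualize. Writing $e_k$ for the $k$th standard basis vector, the functional $x\mapsto x_i-x_j$ vanishes on $V=\ker M'$ if and only if $e_i-e_j$ lies in the row space of $M'$ (the annihilator of $\ker M'$). So the task becomes: characterize, in terms of Hamming weights and distances (\cref{Hamilcar}), exactly when some $e_i-e_j$ with $i\neq j$ lies in the row space of $M'$.

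The heart of the argument, and the step I expect to require the most care (though it remains routine), is exploiting the reduced row echelon structure together with the hypothesis that $M$ has zero row sums; the latter is inherited by every row of $M'$, since both scaling a row and adding one row to another preserve having row sum zero. Let $r_1,\dots,r_s$ be the nonzero rows of $M'$ with pivot columns $p_1<\dots<p_s$, so that in column $p_k$ the entry of $r_k$ is $1$ and the entry of every other row is $0$. Given a putative relation $e_i-e_j=\sum_k c_k r_k$, reading off coordinate $p_k$ forces $c_k$ to equal the $p_k$th coordinate of $e_i-e_j$, so $c_k\in\{0,1,-1\}$ and $c_k\neq 0$ only if $p_k\in\{i,j\}$. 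A short case analysis on whether $0$, $1$, or $2$ of $i,j$ are pivot columns then shows that $e_i-e_j$ lies in the row space of $M'$ only in one of two ways: some row $r_k$ equals $\pm(e_i-e_j)$, so that $r_k$ has Hamming weight $2$; or $e_i-e_j=r_k-r_l$ for two rows $r_k,r_l$, so that $r_k$ and $r_l$ have Hamming distance $2$. The converses use the zero-row-sum condition directly: a row of $M'$ of Hamming weight $2$ must be a nonzero scalar multiple of some $e_i-e_j$, and the difference of two rows at Hamming distance $2$ must be a nonzero scalar multiple of some $e_i-e_j$, and in either case that $e_i-e_j$ lies in the row space of $M'$. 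Assembling these equivalences over all pairs $i\neq j$ gives that $V\subseteq H_{ij}$ for some pair exactly when $M'$ has a row of Hamming weight $2$ or a pair of rows at Hamming distance $2$, which by the first paragraph is precisely the negation of the desired conclusion. I would then also note the two trivial boundary cases ($M'=0$, or $n\le 1$) for completeness.
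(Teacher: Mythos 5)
Your proof is correct, but it reaches the existence direction by a genuinely different route than the paper. You dualize: every row of $M'$ inherits zero row sum, and you show by reading off pivot coordinates in the reduced row echelon form that $e_i-e_j$ lies in the row space of $M'$ exactly when some row equals $\pm(e_i-e_j)$ (a row of Hamming weight $2$) or $e_i-e_j$ is a difference of two rows (a pair at Hamming distance $2$); since the row space is the annihilator of $\ker M'$, the Hamming conditions are thus equivalent to saying that no functional $x\mapsto x_i-x_j$ vanishes identically on the solution space, and then the infinitude of $F$ (a finite union of proper subspaces cannot cover a vector space over an infinite field, via the non-vanishing of the polynomial $\prod_{i<j}(x_i-x_j)$ on the kernel) produces a solution with distinct coordinates. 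The paper instead argues constructively: assuming the Hamming conditions, it assigns a rapidly growing geometric sequence $q,q^2,\ldots,q^k$ to the free variables and uses a dominant-term estimate to show all coordinates of the resulting solution are distinct. The converse direction (a weight-$2$, zero-sum vector in the row space forces two coordinates to coincide) is essentially identical in both arguments. Your version isolates a clean structural characterization that explains why the row-echelon condition is the right one, at the cost of invoking the annihilator identity and an infinite-field avoidance argument; the paper's version is more elementary and hands you an explicit witness (with rational, easily rescaled coordinates), which sits comfortably next to the reduction to integer solutions in \cref{Edith}, though your argument over $F=\Q$ feeds that corollary equally well.
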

\begin{proof}
Assume that $M'$ has no row of Hamming weight $2$ nor any pair of rows with Hamming distance $2$.  Since row sums are zero, this means that Hamming weights of rows and Hamming distances between rows cannot be $1$ or $2$.  Let $E$ be the set that consists of entries of $M'$ along with the elements $0$, $1$, and $-1$, and let $D=\{e-e':e,e' \in E\}$.  Let $s$ and $\ell$ respectively be the smallest and largest magnitudes of the nonzero elements of $D$.  For any solution $x$ to $M'x=0$, the difference between the values assigned to any two distinct variables (pivot or free) is expressible as an $F$-linear combination of the values of the free variables, with the coefficients in the combination being elements of $D$ and with at least one of these coefficients being nonzero due to the Hamming weight condition.  Let $q$ be a rational number (so an element of $F$) with $q > (c+1) \ell/s$, where $c$ is the number of columns in $M'$.  So clearly $q > 1$.  Let $k$ be the number of free variables, and consider the solution one obtains by assigning the powers $q,q^2,\ldots,q^k$ to all the free variables and letting the pivot variables take the values that they must (these will be in $F$ since $M'$ has entries in $F$ because $F$ is a field and $M$ has entries in $F$).  The difference between any two values in our solution is a sum of the form $d=\sum_{i=1}^j d_i q^i$ for some $j \in \{1,\ldots, k\}$ with $d_1,\ldots,d_j \in D$ and $d_j\not=0$.  But notice that $|d_j q^j| \geq s q^j > (c+1) \ell q^{j-1} > k \ell q^{j-1} \geq j \ell q^{j-1} > \sum_{i=1}^{j-1} |d_i| q^i \geq |\sum_{i=1}^{j-1} d_i q^i|$.  Thus $d\not=0$.

If the condition of Hamming weights and distances is not met, then some element $r$ in the row space of $M$ has Hamming weight $2$ and row sum zero, so $r$ must have one nonzero entry $u$ (say, in the $i$th coordinate) and another nonzero entry $-u$ (say, in the $j$th coordinate), and the rest of the entries are zero.  This implies that any solution $x$ to the linear system $M x=0$ must have identical values for its $i$th and $j$th coordinates.
\end{proof}
\begin{corollary}\label{Edith}
Let $M$ be a matrix with integer entries whose row sums are all zero and let $M'$ be the reduced row echelon form of $M$.  The homogeneous system $M x=0$ has a solution whose coordinates have distinct values in $\N$ if and only if $M'$ has no row of Hamming weight $2$ and no pair of rows with Hamming distance $2$.
\end{corollary}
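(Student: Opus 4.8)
The plan is to deduce this from \cref{Evelyn} applied with $F=\Q$, the only additional work being to convert a solution with distinct rational coordinates into one with distinct coordinates in $\N$. Since $M$ has integer entries, it has entries in the subfield $\Q$ of $\C$, and its row sums are zero by hypothesis, so \cref{Evelyn} applies: the system $Mx=0$ has a solution with distinct coordinates in $\Q$ if and only if $M'$ has no row of Hamming weight $2$ and no pair of rows with Hamming distance $2$. This immediately yields the backward implication of the corollary, since any solution with distinct coordinates in $\N$ is in particular a solution with distinct coordinates in $\Q$, and thus its mere existence forces $M'$ to satisfy the Hamming conditions.

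For the forward implication, suppose $M'$ has no row of Hamming weight $2$ and no pair of rows with Hamming distance $2$, so that by \cref{Evelyn} there is a solution $x$ of $Mx=0$ with pairwise distinct coordinates in $\Q$. Clearing denominators, choose a positive integer $N$ so that $Nx$ has all coordinates in $\Z$; then $Nx$ is still a solution of $Mx=0$, and its coordinates remain pairwise distinct because $N\not=0$. Let $\mathbf 1$ denote the all-ones vector of the appropriate length. Because every row sum of $M$ is zero, we have $M\mathbf 1=0$, so $\mathbf 1$ lies in the kernel of $M$. Hence for any integer $a$ the vector $y=Nx+a\mathbf 1$ also satisfies $My=0$, and adding a constant to every coordinate preserves distinctness of the coordinates. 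Taking $a$ to be the negative of the least coordinate of $Nx$ (or $a=0$ if that least coordinate is already nonnegative) makes every coordinate of $y$ a nonnegative integer, so $y$ is a solution of $Mx=0$ with pairwise distinct coordinates in $\N$, as required.

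This argument has no serious obstacle: \cref{Evelyn} does the substantive work, and the remaining step uses only that the zero-row-sum hypothesis places $\mathbf 1$ in the kernel of $M$, which is precisely what is needed to translate a distinct-coordinate integer solution into the nonnegative orthant without disturbing either the solution property or the distinctness of the coordinates.
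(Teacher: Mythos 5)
Your proposal is correct and follows essentially the same route as the paper: apply \cref{Evelyn} with $F=\Q$, then use the zero-row-sum hypothesis (which puts the all-ones vector in the kernel) together with scaling to move a distinct-coordinate rational solution into $\N$. The paper states this translation-and-scaling step more tersely, but the substance is identical.
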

\begin{proof}
The zero sum condition means that the set of solutions of $M$ is closed under translation by the all-one vector and by scaling, so a solution of the form we seek exists if and only if there is a solution whose coordinates have distinct rational values; then apply \cref{Evelyn} with $F=\Q$.
\end{proof}

\section{Combinatorial lemmas}\label{Arthur}

This appendix contains some results in enumerative combinatorics that are useful in calculating the cardinality of $\As(\cP,=,\ell)$ for various partitions $\cP \in \Con(p)$ for $p$ a positive integer and $\ell\in\N$.
These results are used in Sections \ref{Veronica} and \cref{Apple-proof}.

The first result, from \cite[Lemma 12]{Abrego-Fernandez-Katz-Kolesnikov}, counts the number of arithmetic progressions of a certain size within a larger arithmetic progression.
For our purposes, an $n$-term arithmetic progression is a set of the form $\{a,a+b,a+2 b,\ldots,a+(n-1)b\}$ for some integers $a,b,n$ with $b\not=0$.
\begin{lemma} \label{Fuzz}
Let $\ell \in \N$ and $k \in \Z_+$ with $\ell=q k+r$, where $q,r\in\Z$ and $0\leq r<k$.
Then exactly $(\ell-r)(\ell+r-k)/(2 k)$ subsets of $[\ell]$ are $(k+1)$-term arithmetic progressions.
\end{lemma}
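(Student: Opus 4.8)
The plan is to parametrize the relevant subsets by (first term, common difference) pairs and reduce the count to a one-line arithmetic sum. A $(k+1)$-term arithmetic progression contained in $[\ell]=\{0,1,\dots,\ell-1\}$ is a set of the form $\{a,a+b,a+2b,\dots,a+kb\}$ with $b\neq 0$ (nonzero because $k\geq 1$ forces at least two distinct elements). Replacing $b$ by $-b$ reverses the list but leaves the set unchanged, so I would first observe that every such subset $S$ arises from exactly one pair $(a,b)\in\N\times\Z_+$, namely $a=\min S$ and $b$ the common gap between consecutive elements of $S$; conversely each pair $(a,b)$ with $b\geq 1$ yields such a subset of $[\ell]$ precisely when $a\geq 0$ and $a+kb\leq \ell-1$. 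Hence the quantity to be computed is exactly the number of pairs $(a,b)\in\N\times\Z_+$ satisfying $a+kb\leq\ell-1$.

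Next I would sum over $b$. For a fixed $b\geq 1$ the admissible first terms are $a\in\{0,1,\dots,\ell-1-kb\}$, of which there are $\ell-kb$ when $\ell-kb\geq 1$ and none otherwise. Writing $\ell=qk+r$ with $0\leq r<k$, we have $\ell-kb=r+k(q-b)$, which is nonnegative exactly for $b\leq q$ and strictly negative for $b>q$; and the single possibly-vanishing term (occurring at $b=q$ when $r=0$) is harmless, since it correctly counts zero admissible values of $a$. This gives the total as the single expression $\sum_{b=1}^{q}(\ell-kb)$, valid uniformly in $r$, so no separate treatment of the boundary case $r=0$ is needed.

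Finally I would evaluate the sum: $\sum_{b=1}^{q}(\ell-kb)=q\ell-\tfrac{kq(q+1)}{2}=\tfrac{q}{2}\bigl(2\ell-kq-k\bigr)$. Substituting $kq=\ell-r$ (so $q=(\ell-r)/k$) turns the bracket into $2\ell-(\ell-r)-k=\ell+r-k$, yielding the claimed value $(\ell-r)(\ell+r-k)/(2k)$. I do not anticipate a substantive obstacle here; the only point requiring a little care is that the upper limit of the $b$-summation genuinely depends on whether $r=0$, and I would finesse this by noting, as above, that the term $b=q$ is automatically nonnegative and already records the right (possibly zero) count, so $\sum_{b=1}^{q}$ is the correct range in all cases.
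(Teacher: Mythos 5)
Your argument is correct and complete: parametrizing each $(k{+}1)$-term progression by its minimum $a$ and positive gap $b$, summing $\ell-kb$ over $1\le b\le q$, and noting that the $b=q$ term equals $r\ge 0$ (so the same summation range works whether or not $r=0$) gives exactly $(\ell-r)(\ell+r-k)/(2k)$. Note that the paper does not prove this lemma at all; it imports it from \cite[Lemma 12]{Abrego-Fernandez-Katz-Kolesnikov}, so there is no internal proof to compare against. Your self-contained count is the natural elementary argument, and the one point that needed care --- that the possibly vanishing boundary term at $b=q$ is handled uniformly --- is addressed correctly.
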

The next two counting results follow from \cref{Fuzz}.
\begin{corollary}\label{Persephone}
Let $\ell \in \N$ and let $N$ be the number of $3$-term arithmetic progressions in $[\ell]$.
Then the following four numbers are all equal
\begin{enumerate}[label=(\roman*)]
\item $2 N$,
\item the number of ordered pairs $(A,B)$ where $A,B$ are distinct elements in $[\ell]$ such that $A \equiv B \pmod{2}$,
\item the number of solutions to the equation $A+B=2 C$ with $A,B,C$ distinct elements of $[\ell]$,
\item $\floor{(\ell-1)^2/2}$.
\end{enumerate}
The number of solutions to the equation $A+B=2 C$ with $A, B, C$  (not necessarily distinct) elements of $[\ell]$ is $\floor{(\ell^2+1)/2}$.
\end{corollary}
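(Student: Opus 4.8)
The plan is to verify the chain (iii) $=$ (ii) $=$ (i) $=$ (iv) by exhibiting explicit bijections (or an exactly two-to-one map) among the relevant sets, pinning down the common value with \cref{Fuzz}, and then handling the final sentence by a direct parity count.

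First I would show (ii) $=$ (iii). Given a solution $(A,B,C)$ of $A+B=2C$ with $A,B,C$ distinct elements of $[\ell]$, the pair $(A,B)$ has distinct coordinates (else $A=B=C$) and $A\equiv B\pmod 2$ because $A+B=2C$ is even. Conversely, from an ordered pair $(A,B)$ of distinct same-parity elements of $[\ell]$, put $C=(A+B)/2$: this is an integer lying strictly between $\min(A,B)$ and $\max(A,B)$, hence in $[\ell]$, and it differs from both $A$ and $B$. These constructions are mutually inverse, so (ii) and (iii) count the same set.

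Next I would relate (ii) to (i). To an ordered pair $(A,B)$ of distinct same-parity elements of $[\ell]$, associate the set $\{A,(A+B)/2,B\}$, which is a $3$-term arithmetic progression in $[\ell]$ (again $(A+B)/2\in[\ell]$). The pair $(B,A)$ gives the same set and no other ordered pair does; conversely every $3$-term arithmetic progression $\{a,a+b,a+2b\}\subseteq[\ell]$ arises from exactly the two ordered endpoint pairs $(a,a+2b)$ and $(a+2b,a)$, each having distinct same-parity coordinates. So the map from ordered pairs to $3$-term progressions is exactly two-to-one, giving (ii) $=2N=$ (i). To reach (iv) I apply \cref{Fuzz} with $k=2$: writing $\ell=2q+r$ with $r\in\{0,1\}$, the number of $3$-term progressions in $[\ell]$ is $N=(\ell-r)(\ell+r-2)/4$, so $2N=\ell(\ell-2)/2$ when $\ell$ is even and $2N=(\ell-1)^2/2$ when $\ell$ is odd; a short check shows both expressions equal $\floor{(\ell-1)^2/2}$, completing the four-way equality.

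For the last statement, I note that $A+B=2C$ has a (necessarily unique, and automatically lying in $[\ell]$) solution $C\in[\ell]$ exactly when $A\equiv B\pmod 2$, so the number of solutions with $A,B,C\in[\ell]$ equals the number of pairs $(A,B)\in[\ell]^2$ with $A\equiv B\pmod 2$, namely $\ceil{\ell/2}^2+\floor{\ell/2}^2$, since $[\ell]$ contains $\ceil{\ell/2}$ even and $\floor{\ell/2}$ odd numbers; splitting on the parity of $\ell$ shows this is $\floor{(\ell^2+1)/2}$. I do not anticipate a genuine difficulty here; the one point needing care is keeping the parity bookkeeping consistent across the two-to-one map and the appeal to \cref{Fuzz} so that the floor functions come out right.
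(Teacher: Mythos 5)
Your proposal is correct and follows essentially the same route as the paper: pair up ordered same-parity pairs two-to-one with $3$-term progressions, invoke \cref{Fuzz} with $k=2$, and check the resulting quasi-polynomial equals $\floor{(\ell-1)^2/2}$. The only cosmetic difference is in the last sentence, where you count same-parity ordered pairs directly as $\ceil{\ell/2}^2+\floor{\ell/2}^2$ while the paper adds the $\ell$ solutions with $A=B=C$ to the distinct count; both give $\floor{(\ell^2+1)/2}$.
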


\begin{proof}
If $(A,B)$ is a pair of distinct elements of $[\ell]$, then there is some $C \in [\ell]$ such that $A+B=2 C$ if and only if $A$ and $B$ have the same parity, in which case $\{A,C,B\}$ is a $3$-term arithmetic progression.
This progression also arises if we begin with the pair $(B,A)$ (but in no other case), so the number of progressions is half the number of pairs, and one can check that our formula for the count is indeed twice what is given by the $k=2$ case of \cref{Fuzz}.

If we have a solution $(A, B, C)$ to $A+B=2 C$ where $A, B, C$ are not all distinct, then they must all be equal to each other, so there are $\ell$ such solutions with $A, B, C \in [\ell]$.  When we add this to the number of solutions with $A, B, C$ distinct we get $\floor{(\ell^2+1)/2}$.
\end{proof}

\begin{lemma}\label{Light}
For $\ell \in \N $, the number of choices of distinct $ A,B,C,D,E \in [\ell] = \{ 0,1, \dots, \ell-1\} $ that will make (i) $B-A=C-B=E-D$, (ii) $A<B<C$, and (iii) $ D<E $ is 
\begin{align*}
\begin{cases}
\frac{5\ell^3-32\ell^2+52\ell}{24} & \text{if $\ell \equiv 0 \pmod{6}$,} \\[4pt]
\frac{5\ell^3-32\ell^2+55\ell-28}{24} & \text{if $\ell \equiv \pm 1 \pmod{6}$,} \\[4pt]
\frac{5\ell^3-32\ell^2+52\ell-16}{24} & \text{if $\ell \equiv \pm 2 \pmod{6}$,} \\[4pt]
\frac{5\ell^3-32\ell^2+55\ell-12}{24} & \text{if $\ell \equiv 3 \pmod{6}$.}
\end{cases}
\end{align*}
\end{lemma}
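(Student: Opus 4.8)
The plan is to parametrize by the common difference and reduce the count to a double sum that then evaluates to a quasi-polynomial in $\ell$. Write $d$ for the common value $B-A=C-B=E-D$. Condition (ii) forces $d\ge 1$, so $(A,B,C)=(A,A+d,A+2d)$ is a $3$-term arithmetic progression contained in $[\ell]$, and for each fixed $d\ge 1$ there are $\max(0,\ell-2d)$ of these, indexed by $A\in\{0,1,\dots,\ell-1-2d\}$ (one could also quote \cref{Fuzz} with $k=2$ here). Since $d\ge 1$, condition (iii) holds automatically once $E-D=d$, and the pair $(D,E)=(D,D+d)$ runs over the values with $0\le D\le \ell-1-d$. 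The requirement that $A,B,C,D,E$ all be distinct is then exactly the condition $\{D,D+d\}\cap\{A,A+d,A+2d\}=\emptyset$. Thus the number we want is
\[
\sum_{d\ge 1}\;\sum_{A=0}^{\ell-1-2d}\;\#\bigl\{\,D\in\{0,\dots,\ell-1-d\}\;:\;\{D,D+d\}\cap\{A,A+d,A+2d\}=\emptyset\,\bigr\}.
\]

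The second step is to compute, for fixed $(d,A)$, the number of ``bad'' shifts $D$. The set $\{D,D+d\}$ meets $\{A,A+d,A+2d\}$ if and only if $D\in\{A-d,A,A+d,A+2d\}$, and these four integers are distinct because $d\ge1$. Among them, $A$ and $A+d$ always lie in the admissible range $\{0,\dots,\ell-1-d\}$; the value $A+2d$ lies there if and only if $A+3d\le\ell-1$, and $A-d$ lies there if and only if $A\ge d$. Hence, for each admissible $(d,A)$, the number of good choices of $D$ is $(\ell-d)-2$, decreased by one for each of those two conditions that happens to hold (and $\ell-d\ge 2$ whenever a $3$-AP of difference $d$ exists). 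Summing over $A\in\{0,\dots,\ell-1-2d\}$, and using that exactly $\max(0,\ell-3d)$ of these $A$ satisfy $A+3d\le\ell-1$ and exactly $\max(0,\ell-3d)$ satisfy $A\ge d$, the contribution of a fixed $d$ simplifies to $(\ell-2d)(\ell-d)-2(\ell-2d)-2\max(0,\ell-3d)$.

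Summing over $d$ then gives the answer as
\[
\sum_{d=1}^{\floor{(\ell-1)/2}}\bigl[(\ell-2d)(\ell-d)-2(\ell-2d)\bigr]\;-\;2\sum_{d=1}^{\floor{(\ell-1)/3}}(\ell-3d),
\]
where the second sum picks out exactly the $d$ for which $\max(0,\ell-3d)>0$. Each summand is a polynomial in $d$, and each sum runs up to a floor-valued endpoint, so each sum is a quasi-polynomial in $\ell$, of period $2$ and $3$ respectively, combining to period $6$. I would finish by expanding the products, applying the Faulhaber formulas for $\sum d$ and $\sum d^2$ with these endpoints, and collecting terms according to $\ell\bmod 6$. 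The main obstacle is this last bookkeeping: the two floor functions force a six-way case split, and one must verify that the residues $+1,-1$ collapse into the single case ``$\ell\equiv\pm 1$'' and likewise $+2,-2$ into ``$\ell\equiv\pm2$'' as displayed in the statement; a numerical check at small lengths (for example $\ell=6$ should give $10$, and $\ell=3$ should give $0$) is a cheap guard against arithmetic slips.
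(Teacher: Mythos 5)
Your proposal is correct and follows essentially the same route as the paper: you parametrize by the common difference $d$, count the $(\ell-2d)(\ell-d)$ pairs of progressions without the distinctness constraint, and subtract the overlaps, which in your bookkeeping are exactly the paper's four cases ($D=A$, $D=A+d$, $D=A-d$, $D=A+2d$) --- the first two contributing $2\sum_d(\ell-2d)$ (the paper's two $3$-term-progression cases, counted there via \cref{Persephone}) and the last two contributing $2\sum_d\max(0,\ell-3d)$ (the paper's two $4$-term-progression cases, counted there via \cref{Fuzz}). The only differences are that you evaluate these corrections by explicit sums with floor endpoints rather than quoting the progression-counting lemmas, and that the deferred mod-$6$ collection is routine; your intermediate formula is verifiably right (it gives $10$ at $\ell=6$ and $3$ at $\ell=5$, matching the stated quasi-polynomial).
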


\begin{proof}
If $A$, $B$, $C$, $D$, $E$ satisfy our conditions, then we let $s$ be the common value of $B-A$, $C-B$, and $E-D$, so $s$ is some integer with $1\leq s \leq\floor{\ell/2}$ so that $A,B,C$ are distinct and lie within $[\ell]$ (and also $D,E$ are distinct).
For a given value of $s$, there are $\ell-2 s$ choices of $A,B,C$ and $\ell-s$ choices of $D,E$, where we are not yet insisting that $\{A,B,C\} \cap \{D,E\}=\emptyset$; the number of total ways to achieve this is 
\begin{align}
\begin{split}\label{Tybalt}
\sum_{s=1}^{\floor{\ell/2}} (\ell-2s)(\ell-s) = 			
\begin{cases}
\frac{5\ell^3-12\ell^2+4\ell}{24} & \text{if $\ell$ is even,} \\[4pt]
\frac{5\ell^3-12\ell^2+7\ell}{24} & \text{if $\ell$ is odd.}
\end{cases}
\end{split}
\end{align}
Now we need to deduct the number of cases where $\{A,B,C\} \cap \{D,E\}\not=\emptyset$, which happens in four ways: (1) $A=D$ and $B=E$, (2) $B=D$ and $C=E$, (3) $A=E$, or (4) $C=D$.
In each of cases (1) and (2) $\{A,B,C,D,E\}$ is a $3$-term arithmetic progression in $[\ell]$, so by \cref{Persephone} there are $\floor{(\ell-1)^2/2} /2$ ways for this to occur.
In each of cases (3) and (4) $\{A,B,C,D,E\}$ is a $4$-term arithmetic progression in $[\ell]$, so one can use \cref{Fuzz} to see that there are $\floor{(\ell-1)(\ell-2)/3}/2$ such progressions in $[\ell]$.
Deducting these four counts from \eqref{Tybalt} yields the desired result.
\end{proof}
The following two results are easy exercises in counting.
\begin{lemma}\label{Sam}
For $\ell,r \in \N$, we have
\begin{enumerate}[label=(\roman*)]
\item\label{Wilfred} $\displaystyle\sum_{0 \leq j \leq \ell} j^2 = \frac{\ell(\ell+1)(2\ell+1)}{6}$ and
\item\label{Xavier} $\displaystyle\sums{0 \leq j \leq \ell \\ j\equiv r \!\!\pmod{2} } j^2 = \begin{cases}
\frac{\ell(\ell+1)(\ell+2)}{6} & \text{if $\ell \equiv r \pmod{2}$,} \\[4pt]
\frac{(\ell-1)\ell(\ell+1)}{6} & \text{if $\ell \not\equiv r \pmod{2}$.}
\end{cases}$
\end{enumerate}
\end{lemma}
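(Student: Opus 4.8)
The plan is to establish part (i) by the standard telescoping identity and then to derive part (ii) from part (i) by separating the sum according to the parity of the index.

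For part (i), I would sum the identity $(j+1)^3 - j^3 = 3j^2 + 3j + 1$ over $j$ from $0$ to $\ell$. The left-hand side telescopes to $(\ell+1)^3$, and the right-hand side becomes $3\sum_{j=0}^{\ell} j^2 + \frac{3}{2}\ell(\ell+1) + (\ell+1)$; solving this linear equation for $\sum_{j=0}^{\ell} j^2$ yields $\ell(\ell+1)(2\ell+1)/6$. A routine induction on $\ell$ works equally well.

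For part (ii), the first observation is that if $\ell \not\equiv r \pmod{2}$, then the term $j = \ell$ does not appear in the sum, so the sum over $0 \le j \le \ell$ with $j \equiv r \pmod 2$ equals the corresponding sum over $0 \le j \le \ell - 1$, and now $\ell - 1 \equiv r \pmod 2$; hence it suffices to treat the case $\ell \equiv r \pmod 2$ and then substitute $\ell - 1$ for $\ell$ to obtain $(\ell-1)\ell(\ell+1)/6$ in the remaining case. To handle $\ell \equiv r \pmod 2$, I would split on the parity of $r$. If $r$ is even, the sum is $\sum_{0 \le 2i \le \ell} (2i)^2 = 4\sum_{i=0}^{\ell/2} i^2$, which equals $\ell(\ell+1)(\ell+2)/6$ by part (i). If $r$ is odd (so $\ell$ is odd), I would write the sum over the odd $j$ in $[0,\ell]$ as $\sum_{j=0}^{\ell} j^2 - \sum_{0 \le 2i \le \ell - 1} (2i)^2$; by part (i) these two pieces are $\ell(\ell+1)(2\ell+1)/6$ and $4\sum_{i=0}^{(\ell-1)/2} i^2 = (\ell-1)\ell(\ell+1)/6$ respectively, and their difference simplifies, after pulling out a factor of $\ell(\ell+1)/6$, to $\ell(\ell+1)(\ell+2)/6$, as claimed.

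The calculation contains no genuine difficulty; the only thing to manage is the bookkeeping of the four parity combinations of $\ell$ and $r$, which collapse into the two stated cases precisely because the answer depends only on whether $\ell \equiv r \pmod{2}$. In keeping with the remark in the text that these are easy counting exercises, I would keep the write-up terse.
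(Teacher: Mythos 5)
Your proposal is correct. The paper offers no proof of this lemma at all -- it is introduced with the remark that the two results are ``easy exercises in counting'' -- so there is nothing to compare against; your argument (telescoping $(j+1)^3-j^3$ for part (i), then reducing the case $\ell\not\equiv r\pmod 2$ to the case $\ell\equiv r\pmod 2$ and splitting on the parity of $r$ for part (ii)) is a complete and entirely standard verification of exactly the kind the authors had in mind, and the terse write-up you propose is appropriate.
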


\begin{lemma}\label{wow} 
Let $k,\ell \in \N$.  Then the number of ordered pairs $(A,B)$ of (not necessarily distinct) elements in $[\ell]$ with $A+B=k$ is
\begin{align*}
\sums{A, B \in [\ell] \\ A+B = k } 1 =
\begin{cases}
k+1 & \text{if $0 \leq k \leq \ell-1$,} \\
2(\ell-1)-k+1 & \text{if $\ell-1 \leq k \leq 2(\ell-1)$,} \\
0 & \text{otherwise.}
\end{cases}
\end{align*}
\end{lemma}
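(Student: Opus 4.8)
The plan is to parametrize solutions by their first coordinate. First I would observe that a pair $(A,B)$ of elements of $[\ell]$ with $A+B=k$ is completely determined by $A$, since then necessarily $B=k-A$; conversely, given $A\in[\ell]$, the pair $(A,k-A)$ is counted precisely when $k-A\in[\ell]$, i.e.\ when $0\le k-A\le\ell-1$, which rearranges to $k-\ell+1\le A\le k$. Hence the quantity we want equals the number of integers $A$ lying in
\[
\{0,1,\dots,\ell-1\}\cap\{k-\ell+1,k-\ell+2,\dots,k\},
\]
which is an intersection of two length-$\ell$ blocks of consecutive integers. Its cardinality is $\min(\ell-1,k)-\max(0,k-\ell+1)+1$ whenever this is positive, and $0$ otherwise.

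Next I would run the short case analysis on where $k$ sits relative to the thresholds $0$, $\ell-1$, and $2(\ell-1)$. If $0\le k\le\ell-1$, then $k-\ell+1\le0\le k$, so $\max(0,k-\ell+1)=0$ and $\min(\ell-1,k)=k$, giving $k+1$. If $\ell-1\le k\le2(\ell-1)$, then $k-\ell+1\ge0$ and $k\ge\ell-1$, so $\max(0,k-\ell+1)=k-\ell+1$ and $\min(\ell-1,k)=\ell-1$, giving $(\ell-1)-(k-\ell+1)+1=2(\ell-1)-k+1$. If $k<0$ or $k>2(\ell-1)$, the two blocks are disjoint (they are separated because the blocks have length $\ell$ and the offset between their left endpoints is $|k-\ell+1|$, which exceeds $\ell-1$ in these ranges), so the count is $0$. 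I would also note that at the overlap value $k=\ell-1$ both formulas evaluate to $\ell$, so the piecewise description is consistent.

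There is essentially no obstacle here: the only point requiring a word of care is the boundary $k=\ell-1$, where both stated cases apply and must agree, and the verification that outside $[0,2(\ell-1)]$ the intersection is genuinely empty rather than merely having a nonpositive ``size'' formula. Both are immediate from the explicit endpoints above, so the write-up is a handful of lines.
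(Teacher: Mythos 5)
Your argument is correct: parametrizing the pairs by $A$ and counting the integers in $\{0,\dots,\ell-1\}\cap\{k-\ell+1,\dots,k\}$, with the min/max bookkeeping and the check at $k=\ell-1$, establishes exactly the stated formula. The paper offers no proof of this lemma (it is dismissed as an ``easy exercise in counting''), and your write-up is the standard argument one would supply, so there is nothing to reconcile.
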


This leads to our final result.
\begin{lemma}\label{wowzers}
Let $\ell \in \N $.
\begin{enumerate}[label=(\roman*)]
\item Then the number $(A,B,C,D) \in [\ell]^4$ with $A+B=C+D$ is $(2 \ell^3+\ell)/3$.
\item\label{wowzers-two} The number of $(A,B,C,D) \in [\ell]^4$ with $A$, $B$, $C$, and $D$ distinct and $A+B=C+D$ is
\[
\begin{cases}
\frac{2 \ell^3-9 \ell^2+10\ell}{3} & \text{if $\ell$ is even,} \\[4pt]
\frac{2 \ell^3-9 \ell^2+10\ell-3}{3} & \text{if $\ell$ is odd.}
\end{cases}
\]
\item The number of $(A,B,C,D) \in [\ell]^4$ where $A+B=C+D$ and $A \equiv B \pmod{2}$ is
\[
\begin{cases}
\frac{\ell^3-\ell}{3} & \text{if $\ell$ is even,} \\[4pt]
\frac{\ell^3+2\ell}{3} & \text{if $\ell$ is odd.}
\end{cases}
\]
\item\label{wowzers-four} The number of $(A,B,C,D) \in [\ell]^4$ with $A$, $B$, $C$, and $D$ distinct and $A+B=C+D$ and $A \equiv B \pmod{2}$ is
\[
\begin{cases}
\frac{\ell^3-6\ell^2+8\ell}{3} &  \text{if $\ell$ is even,} \\[4pt]
\frac{\ell^3-6\ell^2+11\ell -6}{3} & \text{if $\ell$ is odd.}
\end{cases}
\]
\end{enumerate}
\end{lemma}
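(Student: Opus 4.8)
The plan is to establish the two ``no parity restriction'' claims (i) and (iii) by grouping solutions according to the common value $k=A+B=C+D$, and then to deduce the ``distinct coordinates'' claims (ii) and (iv) from these by an inclusion--exclusion that removes the tuples with a repeated entry.

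For (i), write $r(k)$ for the number of ordered pairs $(A,B)\in[\ell]^2$ with $A+B=k$, so the number of quadruples with $A+B=C+D$ is $\sum_{k}r(k)^2$, the sum being over $k\in\{0,1,\ldots,2\ell-2\}$. By \cref{wow} we have $r(k)=k+1$ for $0\le k\le\ell-1$ and $r(k)=2\ell-1-k$ for $\ell-1\le k\le 2\ell-2$, so after reindexing the second range via $j=2\ell-1-k$ the sum becomes $\sum_{j=1}^{\ell}j^2+\sum_{j=1}^{\ell-1}j^2$, which \cref{Sam}\ref{Wilfred} evaluates to $(2\ell^3+\ell)/3$. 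For (iii), note that $A\equiv B\pmod 2$ is exactly the condition that $A+B$ be even; moreover if $A+B=C+D$ and this common value is even, then $C\equiv D\pmod 2$ automatically, while for an even sum $k$ every pair summing to $k$ consists of two integers of equal parity. Hence the count in (iii) is $\sum_{k\text{ even}}r(k)^2$. Letting $k$ run over the even integers in $[0,\ell-1]$ and (after the same reindexing) over the odd integers in $[1,\ell-1]$, this becomes $\sum_{j\le\ell,\ j\text{ odd}}j^2+\sum_{j\le\ell-1,\ j\text{ odd}}j^2$; applying \cref{Sam}\ref{Xavier} with $\ell$ replaced by $\ell-1$ and splitting on the parity of $\ell$ gives $(\ell^3-\ell)/3$ when $\ell$ is even and $(\ell^3+2\ell)/3$ when $\ell$ is odd.

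For (ii) and (iv) the structural point is that, on the solution set of $A+B=C+D$, the equality $A=C$ holds if and only if $B=D$, and $A=D$ holds if and only if $B=C$; consequently a solution fails to have four distinct coordinates precisely when it lies in $E_1\cup E_2\cup E_3\cup E_4$, where $E_1=\{A=C\}$, $E_2=\{A=D\}$, $E_3=\{A=B\}$, and $E_4=\{C=D\}$. A short check shows that every pairwise intersection $E_i\cap E_j$ already forces all four of $A,B,C,D$ to coincide, so all the higher intersections equal this common ``all-equal'' set, of size $\ell$. For (ii) one has $|E_1|=|E_2|=\ell^2$ (the solutions $(a,b,a,b)$ and $(a,b,b,a)$), while $|E_3|=|E_4|$ is the number of $(a,c,d)\in[\ell]^3$ with $c+d=2a$, which equals $\floor{(\ell^2+1)/2}$ by the last sentence of \cref{Persephone}; for (iv) one gets instead $|E_1|=|E_2|=\floor{(\ell^2+1)/2}$ (counting pairs of equal parity), and again $|E_3|=|E_4|=\floor{(\ell^2+1)/2}$, all intersections being automatically of equal parity. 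Inclusion--exclusion then gives $|E_1\cup E_2\cup E_3\cup E_4|=2\ell^2+2\floor{(\ell^2+1)/2}-3\ell$ in case (ii) and $4\floor{(\ell^2+1)/2}-3\ell$ in case (iv); subtracting these from the totals in (i) and (iii) respectively, and simplifying according to the parity of $\ell$, yields the four stated quasi-polynomials.

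The only real obstacle is bookkeeping: one must verify carefully that the four events $E_1,\dots,E_4$ genuinely cover all degenerate solutions and that their pairwise intersections collapse to the all-equal case, so that the alternating inclusion--exclusion sum telescopes cleanly, and one must keep the parity casework straight when invoking \cref{Sam}\ref{Xavier} and when evaluating $\floor{(\ell^2+1)/2}$. Everything else reduces to routine polynomial arithmetic.
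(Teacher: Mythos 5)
Your proof is correct and follows essentially the same route as the paper: parts (i) and (iii) are computed exactly as in the paper's proof, by summing the squares of the pair-counts $r(k)$ over the common value $k$ using \cref{wow} and then totalling with \cref{Sam}, and parts (ii) and (iv) are obtained, as in the paper, by subtracting the tuples with repeated coordinates from these totals. The only (harmless) difference is the bookkeeping of the degenerate tuples: the paper splits them into disjoint cases ($A=B$ or $C=D$ with the remaining entries distinct, the two ``swap'' coincidences $A=C\not=B=D$ and $A=D\not=B=C$, and the all-equal case), invoking the distinct-solution count $\floor{(\ell-1)^2/2}$ from \cref{Persephone}, whereas you run an inclusion--exclusion over the four overlapping events $A=C$, $A=D$, $A=B$, $C=D$ (correctly observing that on the solution set every pairwise intersection collapses to the all-equal set of size $\ell$) and use the not-necessarily-distinct count $\floor{(\ell^2+1)/2}$ from the last sentence of \cref{Persephone}; both computations yield the same quasi-polynomials.
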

\begin{proof}
The first count equals \[\sum_{0 \leq k \leq 2\ell-2} \left|\left\{(A,B) \in [\ell]^2:  A+B=k\right\}\right| \left|\left\{(C,D) \in [\ell]^2:  C+D=k\right\}\right|,\]
and \cref{wow} gives us the cardinalities of the sets, so we have a sum of squares $1^2+2^2+\cdots+(\ell-1)^2+\ell^2+(\ell-1)^2+\cdots+2^2+1^2$, which we total up using \cref{Sam}\ref{Wilfred} to get the final result.

To get the second count, we deduct solutions in which $A$, $B$, $C$, and $D$ are not distinct, which fall into five cases:
\begin{itemize}
\item The number of solutions when $A=B$ and $A$, $C$, and $D$ are distinct (or when $C=D$ and $A$, $B$, and $C$ are distinct) is given by \cref{Persephone} as $\floor{(\ell-1)^2/2}$.
\item The number of solutions when $A=C\not=B=D$ (or when $A=D\not=B=C$) is clearly $\ell(\ell-1)$.
\item The number of solutions when $A=B=C=D$ is clearly $\ell$.
\end{itemize}

The third count equals \[\sums{0 \leq k \leq 2\ell-2 \\ k \equiv 0 \!\!\!\!\pmod{2}} \left|\left\{(A,B) \in [\ell]^2:  A+B=k\right\}\right| \left|\left\{(C,D) \in [\ell]^2:  C+D=k\right\}\right|,\]
and \cref{wow} gives us the cardinalities of the sets, so we have a sum of squares $1^2+3^2+\cdots+(\ell-1)^2+(\ell-1)^2+\cdots+3^2+1^2$ (if $\ell$ is even) or $1^2+3^2+\cdots+(\ell-2)^2+\ell^2+(\ell-2)^2+\cdots+3^2+1^2$ (if $\ell$ is odd) which we total up using \cref{Sam}\ref{Xavier} to get the final result.

To get the fourth count, we deduct solutions in which $A$, $B$, $C$, and $D$ are not distinct, which fall into five cases:
\begin{itemize}
\item The number of solutions when $A=B$ and $A$, $C$, and $D$ are distinct (or when $C=D$ and $A$, $B$, and $C$ are distinct) is given by \cref{Persephone} as $\floor{(\ell-1)^2/2}$.
\item The number of solutions when $A \equiv B \pmod{2}$ and $A=C\not=B=D$ (or when $A\equiv B \pmod{2}$ and $A=D\not=B=C$) is $\floor{(\ell-1)^2/2}$ by \cref{Persephone}.
\item The number of solutions when $A=B=C=D$ is clearly $\ell$. \qedhere
\end{itemize}
\end{proof}

\section{Proof of \cref{Isabel}}\label{Isabel-proof}

To find a set of representatives for $\Isom(3)$, we use the procedure outlined at the end of \cref{Scott}.
We start by finding representatives modulo row and column permutations of all the absolute matrices for partitions in $\Con(3)$.
\cref{Clarence-abs} tells us that absolute matrices for partitions in $\Con(3)$ are $3\times t$ matrices with $4 \leq t \leq 6$ that have entries in $\{0,1,2\}$ where each column sum is a positive even integer and each row has either a $2$ and two $1$'s (and the remaining entries equal to $0$) or else four $1$'s (and any remaining entries equal to $0$).

We begin with the $3\times 4$ absolute matrices.  Since all the row sums are $4$, the average column sum must be $3$, but since column sums are even, this means that some column sum must be $4$ or greater.  The only way to get a column sum equal to $6$ is
\[
A= \begin{pmatrix}
2 & 0 & 1 & 1 \\
2 & 1 & 0 & 1 \\
2 & 1 & 1 & 0
\end{pmatrix},
\]
If there is no column sum equal to $6$, then the column sums must be $4$'s and $2$'s, and there must be two sums equal to $4$ and two sums equal to $2$.
A column with sum $4$ must have $2,2,0$ or $2,1,1$ in some order, but we can have at most one column with $2,2,0$, since we can only have at most one $2$ per row.
If we have one of each type of column with sum $4$, then we get
\[
B= \begin{pmatrix}
2 & 1 & 1 & 0 \\
2 & 1 & 0 & 1 \\
0 & 2 & 1 & 1
\end{pmatrix},
\]
but if both columns have $2,1,1$ in some order, then then we have
\[
C= \begin{pmatrix}
2 & 1 & 0 & 1 \\
1 & 2 & 0 & 1 \\
1 & 1 & 2 & 0
\end{pmatrix}
\quad \text{or} \quad
D= \begin{pmatrix}
2 & 1 & 1 & 0 \\
1 & 2 & 0 & 1 \\
1 & 1 & 1 & 1
\end{pmatrix},
\]
depending on whether all three rows have a $2$ or not.

Now we consider $3\times 5$ absolute matrices.
Since all row sums are $4$, the entries of the matrix add up to $12$, and since each column sum is a positive even integer, the column sums are $4,2,2,2,2$.
A column with sum $4$ must have $2,2,0$ or $2,1,1$ in some order.
In the former case, we get
\[
E= \begin{pmatrix}
2 & 0 & 0 & 1 & 1 \\
2 & 0 & 1 & 0 & 1 \\
0 & 2 & 1 & 1 & 0
\end{pmatrix}
\quad \text{or} \quad
F= \begin{pmatrix}
2 & 1 & 1 & 0 & 0 \\
2 & 0 & 0 & 1 & 1 \\
0 & 1 & 1 & 1 & 1
\end{pmatrix},
\]
depending on whether all three rows have a $2$ or not.
Otherwise, the column with sum $4$ must have $2,1,1$ in some order, and then we get
\[
G= \begin{pmatrix}
2 & 0 & 0 & 1 & 1 \\
1 & 2 & 0 & 1 & 0 \\
1 & 0 & 2 & 0 & 1
\end{pmatrix},
\quad 
H= \begin{pmatrix}
2 & 0 & 1 & 1 & 0 \\
1 & 2 & 0 & 0 & 1 \\
1 & 0 & 1 & 1 & 1
\end{pmatrix},
\quad \text{or} \quad 
I= \begin{pmatrix}
2 & 1 & 1 & 0 & 0 \\
1 & 1 & 0 & 1 & 1 \\
1 & 0 & 1 & 1 & 1
\end{pmatrix},
\]
depending on how many rows have a $2$.

Now we consider $3\times 6$ absolute matrices.  Since the sum of all the entries is $12$ and each column sum is a positive even integer, every column sum must be $2$.
We get
\begin{center}
\begin{tabular}{ll}
$J = \begin{pmatrix}
2 & 0 & 0 & 0 & 1 & 1 \\
0 & 2 & 0 & 1 & 0 & 1 \\
0 & 0 & 2 & 1 & 1 & 0
\end{pmatrix}$,
&
$K = \begin{pmatrix}
2 & 0 & 1 & 1 & 0 & 0 \\
0 & 2 & 0 & 0 & 1 & 1 \\
0 & 0 & 1 & 1 & 1 & 1
\end{pmatrix}$,
\\[20pt]
$L = \begin{pmatrix}
2 & 1 & 1 & 0 & 0 & 0 \\
0 & 1 & 0 & 1 & 1 & 1 \\
0 & 0 & 1 & 1 & 1 & 1
\end{pmatrix}$, or
& 
$M = \begin{pmatrix}
1 & 1 & 1 & 1 & 0 & 0 \\
1 & 1 & 0 & 0 & 1 & 1 \\
0 & 0 & 1 & 1 & 1 & 1
\end{pmatrix}$,
\end{tabular}
\end{center}
depending on how many rows have a $2$ in them.

Now we find representatives of classes of monochrome matrices that map to these by $\abs$, so we are assigning signs to the elements of the matrices so as to produce rows meeting condition \ref{Rowena-mono} of \cref{Clarence-mono}, and we are looking for representatives modulo row and column permutations and negation of any selection of rows.    If $X$ is an absolute matrix, then we label the monochrome matrices that arise from assigning signs to $X$ by $X_1,\ldots,X_n$ for some $n$, and in this way we obtain
\begin{center}
\begin{tabular}{ll}
$A_1= \begin{pmatrix}
2 &  0 & -1 & -1 \\
2 & -1 &  0 & -1 \\
2 & -1 & -1 &  0
\end{pmatrix}$,
&
$B_1= \begin{pmatrix}
2 & -1 & -1 &  0 \\
2 & -1 &  0 & -1 \\
0 &  2 & -1 & -1
\end{pmatrix}$,
\\[20pt]
$C_1= \begin{pmatrix}
 2 & -1 & 0 & -1 \\
-1 &  2 & 0 & -1 \\
-1 & -1 & 2 &  0
\end{pmatrix}$,
&
$D_1= \begin{pmatrix}
 2 & -1 & -1 &  0 \\
-1 &  2 &  0 & -1 \\
 1 &  1 & -1 & -1
\end{pmatrix}$,
\\[20pt]
$D_2= \begin{pmatrix}
 2 & -1 & -1 &  0 \\
-1 &  2 &  0 & -1 \\
 1 & -1 &  1 & -1
\end{pmatrix}$,
&
$D_3= \begin{pmatrix}
 2 & -1 & -1 &  0 \\
-1 &  2 &  0 & -1 \\
 1 & -1 & -1 &  1
\end{pmatrix}$,
\\[20pt]
$E_1= \begin{pmatrix}
2 & 0 &  0 & -1 & -1 \\
2 & 0 & -1 &  0 & -1 \\
0 & 2 & -1 & -1 &  0
\end{pmatrix}$,
&
$F_1= \begin{pmatrix}
2 & -1 & -1 &  0 &  0 \\
2 &  0 &  0 & -1 & -1 \\
0 &  1 &  1 & -1 & -1
\end{pmatrix}$,
\\[20pt]
$F_2= \begin{pmatrix}
2 & -1 & -1 &  0 &  0 \\
2 &  0 &  0 & -1 & -1 \\
0 &  1 & -1 &  1 & -1
\end{pmatrix}$,
&
$G_1= \begin{pmatrix}
 2 & 0 & 0 & -1 & -1 \\
-1 & 2 & 0 & -1 &  0 \\
-1 & 0 & 2 &  0 & -1
\end{pmatrix}$,
\\[20pt]
$H_1= \begin{pmatrix}
 2 & 0 & -1 & -1 &  0 \\
-1 & 2 &  0 &  0 & -1 \\
 1 & 0 &  1 & -1 & -1
\end{pmatrix}$,
&
$H_2= \begin{pmatrix}
 2 & 0 & -1 & -1 &  0 \\
-1 & 2 &  0 &  0 & -1 \\
 1 & 0 & -1 & -1 &  1
\end{pmatrix}$,
\\[20pt]
$I_1= \begin{pmatrix}
2 & -1 & -1 &  0 &  0 \\
1 &  1 &  0 & -1 & -1 \\
1 &  0 &  1 & -1 & -1
\end{pmatrix}$,
&
$I_2= \begin{pmatrix}
2 & -1 & -1 &  0 &  0 \\
1 &  1 &  0 & -1 & -1 \\
1 &  0 & -1 &  1 & -1
\end{pmatrix}$,
\\[20pt]
$I_3= \begin{pmatrix}
2 & -1 & -1 &  0 &  0 \\
1 & -1 &  0 &  1 & -1 \\
1 &  0 &  1 & -1 & -1
\end{pmatrix}$,
&
$I_4= \begin{pmatrix}
2 & -1 & -1 &  0 &  0 \\
1 & -1 &  0 &  1 & -1 \\
1 &  0 & -1 &  1 & -1
\end{pmatrix}$,
\\[20pt]
$I_5= \begin{pmatrix}
2 & -1 & -1 &  0 &  0 \\
1 & -1 &  0 &  1 & -1 \\
1 &  0 & -1 & -1 &  1
\end{pmatrix}$,
&
$J_1= \begin{pmatrix}
2 & 0 & 0 &  0 & -1 & -1 \\
0 & 2 & 0 & -1 &  0 & -1 \\
0 & 0 & 2 & -1 & -1 &  0
\end{pmatrix}$,
\\[20pt]
$K_1= \begin{pmatrix}
2 & 0 & -1 & -1 &  0 &  0 \\
0 & 2 &  0 &  0 & -1 & -1 \\
0 & 0 &  1 &  1 &  -1 & -1
\end{pmatrix}$,
&
$K_2= \begin{pmatrix}
2 & 0 & -1 & -1 &  0 &  0 \\
0 & 2 &  0 &  0 & -1 & -1 \\
0 & 0 &  1 & -1 &  1 & -1
\end{pmatrix}$,
\end{tabular}
\end{center}
\begin{center}
\begin{tabular}{ll}
$L_1= \begin{pmatrix}
2 & -1 & -1 & 0 &  0 &  0 \\
0 &  1 &  0 & 1 & -1 & -1 \\
0 &  0 &  1 & 1 & -1 & -1
\end{pmatrix}$,
&
$L_2= \begin{pmatrix}
2 & -1 & -1 &  0 &  0 &  0 \\
0 &  1 &  0 &  1 & -1 & -1 \\
0 &  0 &  1 & -1 &  1 & -1
\end{pmatrix}$,
\\[20pt]
$M_1= \begin{pmatrix}
 1 &  1 & -1 & -1 &  0 &  0 \\
-1 & -1 &  0 &  0 &  1 &  1 \\
 0 &  0 &  1 &  1 & -1 & -1
\end{pmatrix}$,
&
$M_2= \begin{pmatrix}
1 & 1 & -1 & -1 &  0 &  0 \\
1 & 1 &  0 &  0 & -1 & -1 \\
0 & 0 &  1 & -1 &  1 & -1
\end{pmatrix}$,
\\[20pt]
$M_3= \begin{pmatrix}
1 &  1 & -1 & -1 &  0 &  0 \\
1 & -1 &  0 &  0 &  1 & -1 \\
0 &  0 &  1 & -1 &  1 & -1
\end{pmatrix}$,
&
$M_4= \begin{pmatrix}
1 & -1 &  1 & -1 &  0 &  0 \\
1 & -1 &  0 &  0 &  1 & -1 \\
0 &  0 &  1 & -1 &  1 & -1
\end{pmatrix}$, and
\\[20pt]
$M_5= \begin{pmatrix}
 1 & -1 & -1 &  1 &  0 &  0 \\
-1 &  1 &  0 &  0 &  1 & -1 \\
 0 &  0 &  1 & -1 & -1 &  1
\end{pmatrix}$.
\end{tabular}
\end{center}
When we compute the reduced row echelon forms of these matrices, we find that those arising from $A_1$, $B_1$, $C_1$, $D_2$, $D_3$, $E_1$, $F_2$, $H_2$, $I_2$, $I_3$, $I_4$, $L_2$, $M_2$, $M_3$, and $M_4$ all have rows with Hamming weight $2$, while those arising from $H_1$, $I_1$, $K_1$, and $L_1$ have pairs of rows with Hamming distance $2$, so these matrices do not satisfy condition \ref{Sanjay-mono} of \cref{Clarence-mono}.  On the other hand, matrices $D_1$, $F_1$, $G_1$, $I_5$, $J_1$, $K_2$, $M_1$, and $M_5$ do satisfy condition \ref{Sanjay-mono} of \cref{Clarence-mono}.
To obtain representatives of classes of display matrices that map to these monochrome matrices via $\mono$, we then color the $1$ and $-1$ entries of these eight matrices as we wish while respecting condition \ref{Rowena} of \cref{Clarence} to obtain
\begin{center}
\begin{tabular}{ll}
$D_1'= \begin{pmatrix}
 2 & -1_r & -1_b &  0 \\
-1_r &  2 &  0 & -1_b \\
 1_r &  1_b & -1_r & -1_b
\end{pmatrix}$,
&
$F_1'= \begin{pmatrix}
2 & -1_r & -1_b &  0 &  0 \\
2 &  0 &  0 & -1_r & -1_b \\
0 &  1_r &  1_b & -1_r & -1_b
\end{pmatrix}$,
\\[20pt]
$G_1'= \begin{pmatrix}
2 & 0 & 0 & -1_r & -1_b \\
-1_r & 2 & 0 & -1_b &  0 \\
-1_r & 0 & 2 &  0 & -1_b
\end{pmatrix}$,
&
$I_5'= \begin{pmatrix}
2 & -1_r & -1_b &  0 &  0 \\
1_r & -1_r &  0 &  1_b & -1_b \\
1_r &  0 & -1_r & -1_b &  1_b
\end{pmatrix}$,
\\[20pt]
$J_1'= \begin{pmatrix}
2 & 0 & 0 &  0   & -1_b & -1_r \\
0 & 2 & 0 & -1_r &  0   & -1_b \\
0 & 0 & 2 & -1_b & -1_r & 0
\end{pmatrix}$,
&
$K_2'= \begin{pmatrix}
2 & 0 & -1_r & -1_b &  0 &  0 \\
0 & 2 &  0 &  0 & -1_r & -1_b \\
0 & 0 &  1_r & -1_r &  1_b & -1_b
\end{pmatrix}$,
\\[20pt]
$M_1'= \begin{pmatrix}
 1_r &  1_b & -1_r & -1_b &  0 &  0 \\
-1_r & -1_b &  0 &  0 &  1_r &  1_b \\
 0 &  0 &  1_r &  1_b & -1_r & -1_b
\end{pmatrix}$, and
&
$M_5'= \begin{pmatrix}
 1_r & -1_r & -1_b &  1_b &  0 &  0 \\
-1_b &  1_b &  0 &  0 &  1_r & -1_r \\
 0 &  0 &  1_r & -1_r & -1_b &  1_b
\end{pmatrix}$.
\end{tabular}
\end{center}
The matrices $D_1'$, $F_1'$, $G_1'$, $I_5'$, $J_1'$, $K_2'$, $M_1'$, and $M_5'$ respectively represent the $\cP_1$, $\cP_2$, $\cP_3$, $\cP_4$, $\cP_5$, $\cP_6$, $\cP_7$, and $\cP_8$ described in the statement of this lemma.

\section{Proof of \cref{Genevieve}}\label{Genevieve-proof}

Since an isomorphism class of partitions in $\Con(3)$ is an orbit under the action of $\cWthree$, we use \cref{Stanley} to compute the size of the eight classes of $\Isom(3)$.
In particular, each class $\fC_j$ has a representative $\cP_j$ described in \cref{Isabel}, so $|\fC_j|=3\cdot 2^{10}/|\Stab(\cP_j)|$, where $\Stab(\cP_j)$ is the stabilizer of $\cP_j$ in $\cWthree$.
So proving the cardinalities in this lemma is tantamount to proving that $|\Stab(\cP_1)|=2^3$, $|\Stab(\cP_2)|=2^5$, $|\Stab(\cP_3)|=2^4$, $|\Stab(\cP_4)|=2^2$, $|\Stab(\cP_5)|=3\cdot 2^4$, $|\Stab(\cP_6)|=2^4$, $|\Stab(\cP_7)|=3\cdot 2^4$, and $|\Stab(\cP_8)|=3\cdot 2^2$.
Throughout this proof, we write an element $\pi\in\cWthree$ as
\[
\pi=\left(\left(\left((\digamma_{0,0},\digamma_{0,1}),\sigma_0\right),\ldots,\left((\digamma_{p-1,0},\digamma_{p-1,1}),\sigma_{p-1}\right)\right),\epsilon\right),
\]
where $\epsilon \in S_3$ and $\sigma_e$, and $\digamma_{e,s}$ are in $S_2$ for every $(e,s) \in [p]\times [2]$ and where
\[
\pi(e,s,v)=\left(\epsilon(e),\sigma_{\epsilon(e)}(s),\digamma_{\epsilon(e),\sigma_{\epsilon(e)}}(v)\right)
\]
for every $(e,s,v) \in \indexset$.
We write permutations in $S_2$ and $S_3$ using cycle notation, with $()$ representing the identity map.

One can check that $\pi$ stabilizes $\cP_1$ if and only if $\epsilon \in \{(),(01)\}$; $\sigma_0=\sigma_1=\sigma_2=()$; $\digamma_{01}=\digamma_{11}=()$; and $\digamma_{20}=\digamma_{21}=()$ if $\epsilon=()$ while $\digamma_{20}=\digamma_{21}=(01)$ if $\epsilon=(01)$.  Thus, we can make independent choices of $\epsilon \in \{(),(01)\}$ and $\digamma_{00}, \digamma_{10} \in S_2$, and then the rest are determined.  Thus, $|\Stab(\cP_1)|=2^3$.

One can check that $\pi$ stabilizes $\cP_2$ if and only if $\epsilon \in \{(),(01)\}$; $\sigma_0=\sigma_1=()$; $\sigma_2=()$ if $\epsilon=()$ but $\sigma_2=(01)$ if $\epsilon=(01)$; $\digamma_{01}=\digamma_{20}$; and $\digamma_{11}=\digamma_{21}$.  Thus, we can make independent choices of $\epsilon \in \{(),(01)\}$ and $\digamma_{00}, \digamma_{01}, \digamma_{10}, \digamma_{11} \in S_2$, and then the rest are determined.  Thus, $|\Stab(\cP_2)|=2^5$.

One can check that $\pi$ stabilizes $\cP_3$ if and only if $\epsilon \in \{(),(12)\}$; $\sigma_0=\sigma_1=\sigma_2=()$; and $\digamma_{11}=\digamma_{21}=()$; and $\digamma_{01}=()$ if $\epsilon=()$ but $\digamma_{01}=(01)$ if $\epsilon=(12)$.  Thus, we can make independent choices of $\epsilon \in \{(),(12)\}$ and $\digamma_{00}, \digamma_{10}, \digamma_{20} \in S_2$, and then the rest are determined.  Thus, $|\Stab(\cP_3)|=2^4$.

One can check that $\pi$ stabilizes $\cP_4$ if and only if $\epsilon \in \{(),(12)\}$; $\sigma_0=\sigma_1=\sigma_2=()$; $\digamma_{10}=\digamma_{11}=\digamma_{20}=\digamma_{21}=()$; and $\digamma_{01}=()$ if $\epsilon=()$ but $\digamma_{01}=(01)$ if $\epsilon=(12)$.  Thus, we can make independent choices of $\epsilon \in \{(),(12)\}$ and $\digamma_{00} \in S_2$ and then the rest are determined.  Thus, $|\Stab(\cP_4)|=2^2$.

One can check that $\pi$ stabilizes $\cP_5$ if and only if either (i) $\epsilon$ an even permutation and $\digamma_{01}=\digamma_{11}=\digamma_{21}=()$, or else (ii) $\epsilon$ is an odd permutation and $\digamma_{01}=\digamma_{11}=\digamma_{21}=(01)$, and we must have $\sigma_0=\sigma_1=\sigma_2=()$ in both cases.  Thus, we can make independent choices of $\epsilon \in S_3$ and $\digamma_{00}, \digamma_{10}, \digamma_{20}\in S_2$, and then the rest are determined.  Thus, $|\Stab(\cP_5)|=3! \cdot 2^3$.

One can check that $\pi$ stabilizes $\cP_6$ if and only if $\epsilon \in \{(),(01)\}$; $\sigma_0=\sigma_1=()$; $\digamma_{01}=\digamma_{11}=\sigma_2$; and $\digamma_{20}=\digamma_{21}=()$ if $\epsilon=()$ but $\digamma_{20}=\digamma_{21}=(01)$ if $\epsilon=(01)$.  Thus, we can make independent choices of $\epsilon \in \{(),(01)\}$ and $\sigma_2, \digamma_{00}, \digamma_{10} \in S_2$, and then the rest are determined.  Thus, $|\Stab(\cP_6)|=2^4$.

One can check that $\pi$ stabilizes $\cP_7$ if and only if either (i) $\epsilon$ is an even permutation and $\sigma_0=\sigma_1=\sigma_2=()$, or else (ii) $\epsilon$ is an odd permutation and $\sigma_0=\sigma_1=\sigma_2=(01)$, and we must have $\digamma_{00}=\digamma_{11}$, $\digamma_{10}=\digamma_{21}$ and $\digamma_{20}=\digamma_{01}$ in both cases.  Thus, we can make independent choices of $\epsilon \in S_3$ and $\digamma_{00}, \digamma_{10}, \digamma_{20} \in S_2$, and then the rest are determined.  Thus, $|\Stab(\cP_7)|=3! \cdot 2^3$.

One can check that $\pi$ stabilizes $\cP_8$ if and only if either (i) $\epsilon$ is an even permutation, $\sigma_0=\sigma_1=\sigma_2$, and $\digamma_{e,s}=()$ for every $(e,s) \in [3]\times[2]$; or else (ii) $\epsilon$ is an odd permutation, $\sigma_0=\sigma_1=\sigma_2$, and $\digamma_{e,s}=(01)$ for every $(e,s) \in [3]\times[2]$.  Thus, we can make independent choices of $\epsilon \in S_3$ and $\sigma_0 \in S_2$, and then the rest are determined.  Thus, $|\Stab(\cP_8)|=3! \cdot 2$.

\section{Proof of \cref{Apple}}\label{Apple-proof}

For each $j \in \{1,2,\ldots,8\}$, the partition $\cP_j$ as described in \cref{Isabel} is an element of $\fC_j$, so \cref{Sonia} tells us that $\Sols(\fC_j,\ell)=|\As(\cP_j,=,\ell)|$, which by \cref{Terrence} is the number of solutions whose coordinates have distinct values in $[\ell]$ of the homogeneous system $M x=0$ where $M$ is a monochrome matrix of $\cP_j$, examples of which can be found in the proof of \cref{Isabel} (in \cref{Isabel-proof}) as $D_1$, $F_1$, $G_1$, $I_1$, $J_1$, $K_2$, $M_1$, $M_5$ for $\cP_1,\ldots,\cP_8$ respectively.  So the rest of this proof consists of counting such solutions for the eight homogeneous linear systems derived from these matrices.

For $\Sols(\fC_1,\ell)$, we need to solve the system
\begin{align*}
2A &=B+C\\
2B &=A+D\\
A+B&=C+D
\end{align*}
with distinct $ A, B, C, D  \in [\ell]$.
These equations are equivalent to saying that $C, A, B, D$ is an arithmetic progression, and since the progression can be ascending or descending, we double the number of  of $4$-term arithmetic progressions residing in $[\ell]$, which is given by \cref{Fuzz}.

For $\Sols(\fC_2,\ell)$, we need to solve the system
\begin{align*}
2 A &= B+C \\
2 A &= D+E \\
B+C &= D+E
\end{align*}
with distinct $ A, B, C, D, E  \in [\ell]$.
Since $ A $ is the average of $ B $ and $ C $ and also the average of $ D $ and $ E $, then if $ B,C,D, E $  are distinct, then so are $ A, B, C, D, E $.
Hence, the number of solutions with distinct $ A, B, C, D, E  \in [\ell] $ to the above system equals the number of solutions with distinct $ B, C, D, E  \in [\ell] $ of 
\begin{equation*}
B+C = D+E
\end{equation*} 
such that $ B $ and $ C $ have the same parity. \cref{wowzers}\ref{wowzers-four} gives the number of solutions.

For $\Sols(\fC_3,\ell)$, we need to solve the system
\begin{align*}
2 A & = D + E \\
2 B & = A + D \\
2 C & = A + E
\end{align*}
with distinct $ A, B, C, D, E  \in [\ell]$.
This system is equivalent to saying that $E,C,A,B,D$ is an arithmetic progression, and since the progression can be ascending or descending, we double the number of  of $5$-term arithmetic progressions residing in $[\ell]$, which is given by \cref{Fuzz}.

For $\Sols(\fC_4,\ell)$, we need to solve the system
\begin{align*}
2 A & = B + C \\
A+D & = B + E \\
A+E & = C + D
\end{align*}
with distinct $ A, B, C, D, E  \in [\ell]$.
This system is equivalent to $E-D=A-B=C-A$, so that $A$ is the average of $B$ and $C$.
Hence, either (i) $C<A<B $ and $E<D $ or (ii) $B<A<C $ and $D<E$.
Any solution to (i) yields a solution to (ii) by interchanging the values of $C$ and $E$ with the values of $B$ and $D$, respectively.
So we restrict to (i), whose solutions are counted by \cref{Light}, and then double our counts.

For $\Sols(\fC_5,\ell)$, we need to solve the system
\begin{align*}
2 A & = E+F \\
2 B & = F+D \\
2 C & = D+E    
\end{align*}
with distinct $ A, B, C, D, E, F  \in [\ell]$.
Since $ A, B $, and $ C $ are averages, $D$, $E$, and $F$ must have the same parity.
Moreover, if $D$, $E$, and $F$ were distinct then $A$, $B$, and $C $ would be distinct from each other.
However, if $D$, $E$, and $F$ form an arithmetic progression then $|\{D,E,F\} \cap \{A,B,C\}| =1$, but if $D$, $E$, and $F$ are distinct and do not form an arithmetic progression, then all six values are distinct.
Hence, we must find how many ways we can pick $D$, $E$, and $F$ such that they have the same parity but do not form an arithmetic progression.
	
Now, there are $ 6 $ ways to order $D$, $E$, and $F$, and any solution to one ordering corresponds to a solution to a different ordering by interchanging values between variables.
So, for the rest of this paragraph, we restrict to the case $ D<E<F$, and afterward we shall multiply our count by $6$.
The number of ways to have $D$, $E$, and $F$ all be even is $ \binom{\ceil{\ell/2}}{3} $ and the number of ways to have $D$, $E$, and $F$ all be odd is $ \binom{\floor{\ell/2}}{3}$.
Hence the number of ways to have $D$, $E$, and $F$ such that they have the same parity is $\binom{\ceil{\ell/2}}{3} +  \binom{\floor{\ell/2}}{3}$.
Next, let $M$ be the number of integers of a particular parity in $ [\ell]$, which is $\ceil{\ell/2}$ (resp.\ $\floor{\ell/2}$) for even (resp.\ odd) parity.
By \cref{Fuzz}, if we write $M=2q+r$ with $q=\floor{M/2}$ and $0 \leq r <2 $, then the number of ways to have $D$, $E$, and $F$ with that particular parity in an arithmetic progression is $(M-r)(M+r-2)/4$.
So our solution depends on what $\ell $ is modulo $ 4 $.
If $\ell \equiv 0 \pmod 4 $, then our solution equals 
\[2\cdot \binom{\ell/2}{3} - 2 \cdot \frac{(\ell/2)(\ell/2-2)}{4} = \frac{\ell^3-9\ell^2+20\ell}{24}. \]
If $\ell \equiv 1  \pmod 4 $, then our count equals 
\[ \binom{(\ell+1)/2}{3}  + \binom{(\ell-1)/2}{3}  -  \frac{((\ell+1)/2-1)^2}{4} - \frac{((\ell-1)/2)((\ell-1)/2-2)}{4} = \frac{\ell^3-9\ell^2+23\ell-15}{24}. \]	
If $\ell \equiv 2 \pmod 4 $, then our count equals 
\[  2\cdot \binom{\ell/2}{3} - 2 \cdot \frac{(\ell/2-1)^2}{4} = \frac{\ell^3-9\ell^2+20\ell-12}{24}. \]
If $\ell \equiv 3 \pmod 4 $, then our count equals
\[ \binom{(\ell+1)/2}{3}  +\binom{(\ell-1)/2}{3}  -  \frac{((\ell+1)/2)((\ell+1)/2-2)}{4} - \frac{((\ell-1)/2-1)^2}{4}  = \frac{\ell^3-9\ell^2+23\ell-15}{24}.  \]  
We now multiply these counts by $6$ to get $|\Sols(\fC_5,\ell)|$.

For $\Sols(\fC_6,\ell)$, we need to solve the system
\begin{align*}
2 A & = C+D \\
2 B & = E+F \\
C+E & = D+F \\
\end{align*}
with distinct $ A,B,C,D,E,F \in [\ell] $.  This system is equivalent to saying that $C$, $A$, $D$ and $F$, $B$, $E$ are arithmetic progressions with the same spacings. Hence we either have (i) $ D<A<C $ and $ E<B<F $ or (ii) $ C<A<D $ and $ F<B<E $. Notice that any solution to (i) yields a solution to (ii) by interchanging the values of $ D $ and $ E $ with $ C $ and $ F $, respectively. So, we restrict to (i) and double our counts. 
		
Since $ D,A,C $ is an arithmetic progression in $ [\ell] $, then if we fix a spacing $ s $, there are $\ell-2s $ ways to have $ D<A<C $. Similarly, for a fixed spacing $ s $, there are $\ell-2s $ ways to have $ E<B<F $. Thus, if we allow $A,B,C,E,D,F\in [\ell] $ to not necessarily be distinct, then \cref{Sam}\ref{Xavier} tells us that the number of ways for (i) to occur is 
\[ \sum_{s=1}^{\floor{(\ell-1)/2}} (\ell-2s)^2 = \frac{\ell(\ell-1)(\ell-2)}{6}.\]
Since we are only counting positive spacings, then the only ways we can have $A,B,C,E,D,F$ not all distinct are when one of the three occurs: (a) $ \{D,A,C\} = \{E,B,F\}  $, (b) $ |\{D,A,C\} \cap \{E,B,F\}| = 2 $, or (c) $ |\{D,A,C\} \cap \{E,B,F\}| = 1 $. 

Now, the only way we can have (a) is if $ D=E $, $ A=B $, and $ C=F $. If $\ell \equiv r_2 \bmod 2$ with $ 0 \leq r_2 <2 $, then by \cref{Fuzz} there are $(\ell-r_2)(\ell+r_2-2)/4$ ways for (a) to occur. Next, for (b) to occur we must have either $ D=B $ and $ A=F $ or $ A=E $ and $ C=B $. By \cref{Fuzz}, if  $\ell \equiv r_3 \bmod 3$ with $ 0 \leq r_3 <3 $ there are $2(\ell-r_3)(\ell+r_3-3)/6$ ways for (b) to occur. Finally, for (c) to occur either $ D=F $ or $ C=E $. By \cref{Fuzz}, if  $\ell \equiv r_4 \bmod 4$ with $ 0 \leq r_4 <4 $ there are $2(\ell-r_4)(\ell+r_4-4)/8$ ways for (c) to occur. Thus, the number ways (i) can take place is
\[
\frac{\ell(\ell-1)(\ell-2)}{6} - \frac{(\ell-r_2)(\ell+r_2-2)}{4} - \frac{(\ell-r_3)(\ell+r_3-3)}{3}  - \frac{(\ell-r_4)(\ell+r_4-4)}{4}.
\]
To finish the proof, one works out what this expression becomes when we substitute for $r_2$, $r_3$, and $r_4$ the values that they take when $\ell$ runs through the congruence classes modulo $12$, and then remember to double the value to obtain $|\Sols(\fC_6,\ell)|$.

For $\Sols(\fC_7,\ell)$, we need to solve the system
\begin{align*}
A+B & = C+D \\
E+F & = A+B \\
C+D & = E+F
\end{align*}
with distinct $ A,B,C,D,E,F \in [\ell] $.  This system is equivalent to $A+B=C+D=E+F$. Let $ h = A+B $. In order for $A,B,C,D,E,F \in [\ell] $ to be distinct we need to be able to write $ h $ as the sum of two different numbers with at least three different unordered pairs of numbers. If $ h \leq \ell-1 $, then  by \cref{wow} there are $ h+1 $ ordered pairs $(x,y)$ with $x,y \in [\ell]$ such that $x+y=h$ and so there are $ \ceil{h/2} $ unordered pairs of numbers $\{x,y\} \subseteq [\ell]$ with $ x\not = y $ such that $ x+y=h $. If $ h>\ell-1 $ then, by \cref{wow} there are $ 2(\ell-1)-h+1 $ ordered pairs $(x,y)$ with $x,y \in [\ell]$ such that $x+y=h$ and so there are $\ceil{(2(\ell-1)-h)/2} $ unordered pairs of numbers $ \{x,y\} \subseteq [\ell]$ with $ x\not = y $ such that $ x+y=h $. Any three pairs that work can be assigned in $ 3! $ ways to the following sets $ \{ A,B\}, \{C,D\}, \{E,F\} $. Once all pairs are assigned, there are two different ways to assign numbers to variables for each $ 2 $-set. Hence, for a fixed $ h $ there are 
\begin{align*}
\begin{cases}
3!\cdot 2^3\cdot\binom{\ceil{h/2}}{3} & \text{if $h\leq \ell-1$,} \\[4pt]
3!\cdot 2^3\cdot\binom{\ceil{(2(\ell-1)-h)/2}}{3} & \text{if $h> \ell-1$}
\end{cases}
\end{align*}
ways to have $A=B=C+D=E+F=h$ with distinct $A,B,C,D,E,F \in [\ell]$.
Thus, the number of the solutions to the system must be 
\[
3! \cdot 2^3 \cdot \left[ \sum_{h=5}^{\ell-1} \binom{\ceil{h/2}}{3} + \sum_{h=\ell}^{2\ell-7}\binom{\ceil{(2(\ell-1)-h)/2}}{3}\right],
\]
which, when $\ell$ is even, is equal to
\[
3! \cdot 2^3 \cdot \left[ \binom{\ell/2}{3} + 4 \sum_{i=0}^{\ell/2-1} \binom{i}{3} \right] = 3! \cdot 2^3 \cdot \left[ \binom{\ell/2}{3} + 4 \binom{\ell/2}{4} \right],
\]
but when $\ell$ is odd, is equal to
\[
3! \cdot 2^3 \cdot \left[ -\binom{(\ell-1)/2}{3} + 4 \sum_{i=0}^{(\ell-1)/2} \binom{i}{3} \right] = 3! \cdot 2^3 \cdot \left[ -\binom{(\ell-1)/2}{3} + 4 \binom{(\ell+1)/2}{4} \right].
\]
These expressions for $\Sols(\fC_7,\ell)$ simplify to those in the statement of this lemma.

For $\Sols(\fC_8,\ell)$, we need to solve the system
\begin{align*}
A+D & = B+C \\
B+E & = A+F \\
C+F & = D+E
\end{align*}
with distinct $ A,B,C,D,E,F \in [\ell] $.
This system is equivalent to $A-B=C-D=E-F$.
Hence, we either have (i) $ A>B $, $ C>D $, $ E>F $ or (ii) $ B>A $, $ D>C $, $ F>E $. Since (i) yields a solution to (ii) by interchanging the values of $ A $, $ C $, and $ E $ with $ B $, $ D $, and $ F $, respectively.
So, we restrict to (i) and double our counts. 
	
Notice that if $ A,C,E $ are distinct from each other, then $ B,D,F $ are also distinct from each other.
Thus, we first count how many ways we can have $ A,C,E $ distinct from each other while $ A>B $. Now, if $ s = A-B$, then there are $\ell-s $ ways to have $ A>B $.
This leaves $\ell-s-1 $ options for $ C $ and $\ell-s-2 $ options for $ E $.
Thus, if we allow $ A,B,C,D,E,F \in [\ell] $ to not necessarily be distinct the number of ways (i) can occur is
\begin{equation}\label{pine}
\sum_{s=1}^{\ell-3} (\ell-s)(\ell-s-1)(\ell-s-2) = 3! \sum_{t=3}^{\ell-1} \binom{t}{3} = 3! \binom{\ell}{4} = \frac{\ell(\ell-1)(\ell-2)(\ell-3)}{4}.
\end{equation}
Since we are only counting with positive spacing $s$ between $A$ and $B$ while $ A,C,E $ are distinct from each other, then the only ways we can have $ A,B,C,D,E,F $ not all distinct are when either 
\begin{enumerate}[label=(\alph*).]
	\item $| \{A,B,D,C,E,F\} |=5 $ \, or
	\item $| \{A,B,D,C,E,F\} |=4 $.
\end{enumerate}
Now, the only way we can have (a) is if exactly one of $ |\{A\} \cap \{D,F\} | =1$, $ |\{C\} \cap \{B,F\} | =1$, or $ |\{E\} \cap \{B,D\} | =1$ occurs.
So, there are $ 6 $ different ways for (a) to occur.
Any case of (a) is an instance of  \cref{Light}. Hence, we must subtract $ 6 $ times 
\begin{align}
\begin{split}\label{spruce}
\begin{cases}
\frac{5\ell^3-32\ell^2+52\ell}{24} & \text{if $\ell \equiv 0 \pmod{6}$,} \\[4pt]
\frac{5\ell^3-32\ell^2+55\ell-28}{24} & \text{if $\ell \equiv \pm 1 \pmod{6}$,} \\[4pt]
\frac{5\ell^3-32\ell^2+52\ell-16}{24} & \text{if $\ell \equiv \pm 2 \pmod{6}$,} \\[4pt]
\frac{5\ell^3-32\ell^2+55\ell-12}{24} & \text{if $\ell \equiv 3 \pmod{6}$}
\end{cases}
\end{split}
\end{align}
from \eqref{pine}.
Next, the only way we can have (b) is if exactly one of $ (A,B) \in \{(D,E), (F,C)\}$, $(C,D) \in \{(B,E),(F,A) \}$, or $(E,F) \in \{(D,A),(B,C)\}$ occurs.
Hence, there are $ 6 $ different ways for (b) to occur.
Any case of (b) makes $\{A,B,C,D,E,F\}$ a $4$-term arithmetic progression, so it falls under \cref{Fuzz} with $ k=3 $. Thus, we must also subtract $ 6 $ times 
\begin{align}
\begin{split}\label{sequoia}
\begin{cases}
\frac{\ell^2-3\ell}{6} & \text{if $\ell \equiv 0 \pmod{3}$,} \\[4pt]
\frac{\ell^2 -3\ell + 2}{6} & \text{if $\ell \equiv \pm 1 \pmod{3}$}
\end{cases}
\end{split}
\end{align}
from \eqref{pine}.
When we have subtracted $6$ times \eqref{spruce} and $6$ times \eqref{sequoia} from \eqref{pine}, we get the count for case (i).  Doubling this to get the full count gives $|\Sols(\fC_8,\ell)|$.

\section*{Acknowledgement}

The authors thank Bernardo \'Abrego, Silvia Fern\'andez-Merchant, and the anonymous reviewers for helpful discussions and suggestions.

\end{document}